\documentclass[journal,peerreview,nodraftcls]{IEEEtran}
\usepackage{graphicx,subfig,amsthm,amsmath,latexsym,amssymb,times}
\usepackage{epsfig,multirow,rotating,times,verbatim,wrapfig,cite,booktabs,float}
\usepackage{color,xr,array,hyperref, algpseudocode}

\externaldocument{supplement}
\pagestyle{empty}
\usepackage{array}

\hyphenation{op-tical net-works semi-conduc-tor}
\newtheorem{theorem}{Theorem}

\newtheorem{corollary}[theorem]{Corollary}

\newcommand{\R}{\mathbb{R}}

\newcommand{\mP}{\mathbb{P}}


\newcommand{\abs}[1]{|#1|}

\newcommand{\bLambda}{\boldsymbol{\Lambda}}

\newcommand{\bbeta}{\boldsymbol{\beta}}

\newcommand{\bphi}{\boldsymbol{\phi}}
\newcommand{\bPhi}{\boldsymbol{\Phi}}

\newcommand{\bGamma}{\boldsymbol{\Gamma}}

\newcommand{\bepsilon}{\boldsymbol{\epsilon}}

\newcommand{\bmu}{\boldsymbol{\mu}}
\newcommand{\bzeta}{\boldsymbol{\zeta}}

\newcommand{\bC}{\boldsymbol{C}}

\newcommand{\bH}{\boldsymbol{H}}
\newcommand{\bh}{\boldsymbol{h}}

\newcommand{\bI}{\boldsymbol{I}}
\newcommand{\bJ}{\boldsymbol{J}}

\newcommand{\bS}{\boldsymbol{S}}

\newcommand{\bR}{\boldsymbol{R}}
\newcommand{\bW}{\boldsymbol{W}}

\newcommand{\bU}{\boldsymbol{U}}

\newcommand{\bX}{\boldsymbol{X}}

\newcommand{\bY}{\boldsymbol{Y}}
\newcommand{\bZ}{\boldsymbol{Z}}

\newcommand{\bzero}{\boldsymbol{0}}

\newcommand{\bone}{\boldsymbol{1}}

\newcommand{\mN}{\mathcal N}

\newcommand{\mG}{\mathcal G}
\newcommand{\mW}{\mathcal W}
\newcommand{\mJ}{\mathcal J}

\DeclareMathOperator*{\argmin}{argmin}
\newcommand{\argmax}{\operatornamewithlimits{argmax}}

\newcommand{\bL}{\boldsymbol{L}}


\newcommand{\bit}{\begin{itemize}}
\newcommand{\eit}{\end{itemize}}
\newcommand{\ben}{\begin{enumerate}}
\newcommand{\een}{\end{enumerate}}
\newcommand{\beqn}{\begin{equation}}
\newcommand{\eeqn}{\end{equation}}
\newcommand{\bea}{\begin{eqnarray*}}
\newcommand{\eea}{\end{eqnarray*}}
\newcommand{\bpf}{\begin{proof}}
\newcommand{\epf}{\end{proof}\ms}
\newcommand{\ms}{\medskip}

\newcommand{\citep}{\cite}
\newcommand{\citet}{\cite}
\newcommand{\citeyear}{\cite}
\newcommand{\citeauthor}{\cite}

\begin{document}
\title{FAST Adaptive Smoothing and Thresholding for Improved
  Activation Detection in Low-Signal fMRI}
\author{Israel~Almod\'ovar-Rivera~and~Ranjan~Maitra

  \thanks{I. Almod\'ovar-Rivera is with the Department of
    Biostatistics and Epidemiology at the University of Puerto Rico,
    Medical Science Campus, San Juan, Puerto Rico, USA.}
  \thanks{R.Maitra is with the Department of Statistics, Iowa State
    University, Ames, Iowa, USA.}
   \thanks{An earlier version of this article won
     I. Almod\'ovar-Rivera a  Student Paper Competition
     award sponsored by the  American Statistical Association Section
     on Medical Devices and Diagnostics at the 2018 Joint Statistical Meetings.}
   \thanks{This research was supported in part by the
    National Institute of Biomedical Imaging and Bioengineering (NIBIB) of the National
Institutes of Health (NIH) under its Award No. R21EB016212,
I. Almod\'ovar-Rivera also acknowledges receipt of a fellowship from 
Iowa State University's Alliance for Graduate Education and the
Professoriate (AGEP) program for underrepresented graduate students
in STEM fields. The content of this paper however is solely the responsibility of the 
authors and does not represent the official views of either the
NIBIB or the NIH.} 

\thanks{\copyright 201? IEEE. Personal use of this material is
  permitted. However, permission to use this material for any other
  purposes must be obtained from the IEEE by sending a request to
  pubs-permissions@ieee.org.}
\thanks{Manuscript received xxxx xx,201x; revised xxxxxxxx xx,
  201x. Accepted xxxxxxxx xx, 201x.
  First published xxxxxxxx x, xxxx, current version published
  yyyyyyyy y, yyyy}
\thanks{Color versions of one or more of the figures in this paper are
  available online at http://ieeexplore.org.} 
\thanks{Digital Object Identifier}
}

\markboth{IEEE Transactions on Medical Imaging,~Vol.~?, No.~?, February~201?}%
{Almod\'ovar and Maitra \MakeLowercase{\textit{et al.}}: A new approach to determine activation detection based on smoothing}

\maketitle

\begin{abstract}
Functional Magnetic Resonance Imaging is a noninvasive tool for
studying cerebral function. Many factors challenge activation
detection, especially in low-signal scenarios
that arise in the performance of high-level cognitive tasks. We provide a fully
automated fast adaptive smoothing and thresholding (FAST) 
algorithm that uses smoothing and extreme value theory on correlated
statistical parametric maps for thresholding. Performance on experiments spanning a range of low-signal settings is very encouraging. The methodology also performs well in a study to identify the cerebral regions that perceive only-auditory-reliable or only-visual-reliable speech stimuli.
\end{abstract}

\begin{IEEEkeywords}
  AM-FAST, AR-FAST, Adaptive Segmentation, AFNI, BIC, CNR,
  Cluster Thresholding, SPM, SUMA, TFCE
\end{IEEEkeywords}


\section{Introduction}\label{sintro}

\IEEEPARstart{F}{unctional} Magnetic Resonance Imaging (fMRI)
\citep{belliveauetal91,kwongetal92,bandettinietal93,fristonetal95,howsemanandbowtell98,pennyetal06,lindquist08,lazar08,ashby11}
studies the spatial characteristics and extent of brain
function while at rest or, more commonly, while performing tasks or 
responding to external stimuli. The latter scenario is the
setting for this paper. Here, the imaging modality acquires voxel-wise Blood-Oxygen-Level-Dependent~(BOLD)  
measurements~\citep{ogawaetal90a,ogawaetal90b} at rest and during
stimulation or performance of a task. After pre-processing, a general
linear or other statistical model~\citep{fristonetal95,worsleyetal02} is fit to
the time course sequence against the expected BOLD 
response~\citep{fristonetal98,glover99,buxtonetal04}. Statistical
Parametric Mapping~(SPM)~\citep{fristonetal90} techniques provide
voxel-wise test statistics summarizing the association between the time
series response at each voxel and the expected BOLD  response
\citep{bandettinietal93}.  
The map of test statistics is then thresholded to identify  significantly 
activated voxels~\citep{friston1994statistical,worsley1995analysis,genoveseetal02}.   The analysis of fMRI datasets is
challenged~\citep{hajnaletal94,biswaletal96,woodetal98,gullapallietal05}
by factors such as scanner, inter- and intra-subject variability,
voluntary/involuntary or stimulus-correlated motion and also the
several-seconds delay in the BOLD response as the  neural
stimulus passes through the hemodynamic
filter~\citep{maitraetal02,gullapallietal05,maitra09b}. Pre-processing 
~\citep{woodetal98,saadetal09} mitigates some of these effects,
but additional challenges are presented by the fact that an fMRI study
is expected to have no more than 1-3\% activated voxels~\citep{chenandsmall07,lazar08}. Also, many activation 
studies involving  high-level cognitive processes have low
contrast-to-noise ratios (CNR), throwing up additional challenges as illustrated next.

\subsection{Activation Detection during Noisy Audiovisual Speech}
\label{intro:av} 
The most important 
form of human communication is
speech~\citep{kryter94,hauser96,dupontandluettin00}, 
which the brain is  adept at understanding even in noisy
surroundings. This ability may be due~\citep{nathandbeauchamp11}
to the brain's capacity  for  
multisensory integration of independently-acquired visual and auditory
input information which reduces noise and allows for more accurate
perception~\citep{sumbyandpollack54,steinandmeredith93}. Recently,
\citet{nathandbeauchamp11} studied the role of the superior 
temporal sulcus (STS) in  perceiving noisy speech, through fMRI
and behavioral experiments, and established increased connectivity
between the STS and the auditory or the visual cortex depending on
whichever modality was more reliable, that is,
less noisy.

\citet{nathandbeauchamp11} provided results on regions of interest
(ROIs) drawn on the STS and the 
auditory and visual cortices. However, the full benefit of  fMRI can
be realized only if we move beyond assessing
cerebral function at the ROI level to understanding it at the voxel
level.  Reliable voxel-wise activation detection in individual
subjects may increase the adoption 
of fMRI in a  clinical setting. All these are potential scenarios with
low CNRs where accurate activation detection methods are needed.

\subsection{Background and Current Practices}
\label{background}
Many thresholding
methods~\citep{helleretal06,benjaminiandheller07,smithandfahrmeir07,smithandnichols09,wooetal14}
in fMRI address multiple testing issues in determining significance of 
test statistics but  ignore spatial resolution.
Acquired images are instead often spatially smoothed prior to
analysis, but such  non-adaptive smoothing reduces both the adaptive spatial 
resolution and the number of 
available independent tests for activation
detection~\citep{tabelowetal06}. There are also iterative adaptive 
smoothing and segmentation methods such as 
propagation-separation (PS)~\citep{tabelowetal06} and
adaptive-segmentation (AS)~\citep{polzehletal10} that essentially 
segment the SPM into  activated and inactivated voxels. PS
approximately yields a random $t$-field and uses Random Field Theory
for segmentation while  AS uses multi-scale
testing. \citep{polzehletal10} argued for AS because of its more general 
development and fewer model assumptions. AS also requires no heuristic
corrections for spatial correlation, provides decisions at 
prescribed significance levels and showed~\citep{polzehletal10} better
performance over PS in an auditory experiment. However, 
AS requires pre-specified bandwidth sequences and ignores correlation
within the SPM. So Section!\ref{sec:methodology} of this paper
develops theory and methodology for fully automated Fast Adaptive 
Smoothing and Thresholding (FAST) algorithms that account for correlation and
obviate the need for setting all but one parameters. Performance 
evaluations on real datasets and large-scale 
simulation experiments are in Section~\ref{sec:simulation}. 
Section~\ref{sec:Apps} revisits  the dataset of 
Section~\ref{intro:av}, while Section~\ref{discussion} provides
discussion. 
A supplement with sections, figures and tables referenced
using  the prefix ``S'' is available. 


\section{Theory and Methods} \label{sec:methodology}
\subsection{Preliminaries}\label{method:prelim}
Let $\bY_i$ be the time series vector of the observed BOLD response at
the $i$th voxel obtained after preprocessing for registration and
other corrections. It is  common to relate 
$\bY_i$ to the expected BOLD response via the general linear model
\begin{equation} 
  {\bY}_{i} = \bX\bbeta_i + \bepsilon_i,
  \label{eq:lm}
\end{equation}
where $\bepsilon_i$ is a $p$th-order auto-regressive (AR) Gaussian error
vector with AR coefficients $\bphi_i 
\!\!=\!\!(\phi_{i1},\phi_{i2},\ldots,\phi_{ip})$ and marginal variance
$\sigma_i^2$. Without loss of generality (w.l.o.g.), assume that the 
design matrix $\bX$ has the intercept in the first column, the 
expected BOLD response for the $k$ stimulus levels in the next $k$
columns, and polynomial terms for the drift parameter in the remaining
$m$ columns. Therefore, $\bbeta$ is a coefficient vector of length
$d\!=\!k\!+\!m\!+\!1$. We assume that the image volume has 
$n$ voxels, so $i\!=\!1,2,\ldots,n$ in \eqref{eq:lm}.
The parameters $(\hat\bbeta_i,\hat\sigma_i^2,\hat\bphi_i)$s are
usually estimated via generalized least squares 
or restricted maximum likelihood. A typical analysis
approach then applies (voxel-wise) hypothesis tests with 
the null hypothesis specifying no activation owing to the stimulus or
task. SPMs of the form $\bGamma =
\{c'\hat{\boldsymbol{\beta}}_i\}_{i \in V}$ with
appropriate contrasts 
$c'{\boldsymbol{\beta}}_i$ are then formulated at each voxel.

Many researchers use models that assume independent or AR(1) errors,
while others pre-whiten the time series before fitting
\eqref{eq:lm} under independence. Misspecified models can yield less  
precise SPMs~\citep{monti11,luoandnichols03,lohetal08,lindquistetal09}
so here we assume AR($p$) errors, with $p$ assessed by the
Bayes Information 
Criterion (BIC)~\citep{schwarz78,shumwayandstoffer06} that trades  a
fitted model's  complexity against its fidelity to the data.  
Tests on the SPM $\bGamma$ identify voxels that are activated with the
application of the stimulus. 
Our objective is to develop an approach
that adaptively and automatically smooths and thresholds the SPM while
incorporating spatial correlation and the fact that the
sequential thresholding results in SPMs from truncated
distributions. Before detailing  our methods, we provide  some
theoretical development.   
\subsection{Theoretical Development}
\label{method:theory}
We assume $t$-distributed SPMs with degrees of
freedom large enough for them to be approximately standard normally
distributed under the hypothesis of no activation. The SPM 
has a homogeneous correlation structure, a reasonable
assumption with our use of radially 
symmetric smoothing kernels. 
We have  
\begin{theorem}
\label{theo:evt}
  Let ${\bX} \sim \mN_n(\bzero,{\bR})$ where $\bX =
(X_1,\ldots,X_n)'$ and ${\bR}$ is a circulant correlation
matrix with only nonnegative elements such that $\bR^{1/2}$ also has
no negative entries. Writing $\bone = (1,1,\ldots,1)'$, we
let $\varrho$ be the sum 
of the elements in any row of ${\bR}^{\frac12}$. Further, let
$X_{(n)}$ be the maximum value of ${\bX}$, that is, $X_{(n)} =
\max{\{X_1,X_2,\ldots,X_n\}}\equiv\max\bX$. Then  the cumulative distribution
function (CDF) $F_{(n)}(x)$ of $X_{(n)}$ is given by $F_{(n)}(x) = P(X_{(n)} \leq
x)\geq 
[\Phi(x/\varrho)]^n$, where
$\Phi(\cdot)$ is the CDF of the
standard normal random variable. The equality holds when
$\bR^{-1/2}$ also has no negative entries.
\end{theorem}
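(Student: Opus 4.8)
The plan is to convert the event $\{X_{(n)}\le x\}$ into an orthant probability for a vector of \emph{independent} standard normals, which then factorizes. First I would write
\[
F_n(x)=P(X_{(n)}\le x)=P(X_1\le x,\ldots,X_n\le x)=P(\bX\le x\bone),
\]
with the vector inequality read coordinatewise. Since $\bR$ is a correlation matrix it is positive definite, so $\bR^{-\frac12}$ exists and the whitened vector $\bY=\bR^{-\frac12}\bX$ satisfies $\bY\sim N_n(\bzero,\bR^{-\frac12}\bR\bR^{-\frac12})=N_n(\bzero,\bI)$; hence $Y_1,\ldots,Y_n$ are i.i.d.\ standard normal.

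Next I would bring in the circulant hypothesis. In a circulant matrix every row is a cyclic shift of the first, so all rows have the same sum; consequently $\bone$ is an eigenvector of $\bR$, say $\bR\bone=\lambda_0\bone$ with $\lambda_0>0$ the common row sum. Because $\bR^{-\frac12}$ is a function of $\bR$, it is again circulant and shares the eigenvector $\bone$, so $\bR^{-\frac12}\bone=\lambda_0^{-\frac12}\bone=\rho\,\bone$, where $\rho$ is the (common) first-row sum of $\bR^{-\frac12}$ appearing in the statement; in particular $\rho=\lambda_0^{-\frac12}>0$.

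It then remains to transfer the inequality through $\bR^{-\frac12}$: I would argue that $\{\bx:\bx\le x\bone\}$ corresponds, under $\bx\mapsto\bR^{-\frac12}\bx$, to $\{\by:\by\le\rho x\bone\}$, using $\bR^{-\frac12}\bone=\rho\bone$ together with $\rho>0$. Granting this,
\[
F_n(x)=P(\bY\le\rho x\bone)=\prod_{j=1}^n P(Y_j\le\rho x)=[\Phi(\rho x)]^n,
\]
as claimed. The step I expect to require the most care is exactly this last one: a generic linear map does not preserve coordinatewise inequalities, so the reduction has to be justified from the specific structure at hand --- the action of the circulant whitening matrix on the constant vector and the homogeneity of the resulting correlation --- rather than from naive monotonicity; isolating the precise property of $\bR$ that makes the region transform cleanly is the heart of the argument, and everything else (the Gaussian whitening and the independence factorization) is routine.
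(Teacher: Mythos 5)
Your route is the same as the paper's: reduce $\{X_{(n)}\le x\}$ to the orthant event $\{\bX\le x\bone\}$, whiten with $\bR^{-\frac12}$, and use the circulant structure so that $\bone$ is a common eigenvector of $\bR$ and $\bR^{-\frac12}$, giving $\bR^{-\frac12}\bone=\rho\bone$. The difference is that the paper's proof carries out the pivotal step by direct assertion---it writes $P(\bX\le x\bone)=P(\bR^{-\frac12}\bX\le x\,\bR^{-\frac12}\bone)$ and then factorizes---whereas you stop short of it, writing ``granting this'' and flagging it as the part needing the most care. Since you never supply that justification, your proposal is incomplete exactly at its central step.

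Moreover, the gap you flagged cannot be closed in the way you hope. Knowing $\bR^{-\frac12}\bone=\rho\bone$ only tells you where the corner of the orthant goes. Writing $\{\bx\le x\bone\}=x\bone-K$ with $K=[0,\infty)^n$, the whitening map sends this set onto $\rho x\bone-\bR^{-\frac12}K$, which coincides with the target orthant $\{\by\le\rho x\bone\}=\rho x\bone-K$ if and only if $\bR^{-\frac12}K=K$, i.e.\ if and only if $\bR^{-\frac12}$ is a monomial matrix with positive entries (it must permute the extreme rays of $K$). Being symmetric positive definite, such a matrix is in fact a positive diagonal matrix, so $\bR$ is diagonal, and since a correlation matrix has unit diagonal this forces $\bR=\bI_n$. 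Hence there is no further ``precise property of $\bR$'' in the general circulant case that makes the region transform cleanly, and the identity $P(\bX\le x\bone)=P(\bY\le\rho x\bone)$ cannot be obtained by refining this argument; a two-dimensional check at $x=0$ (exchangeable correlation $r$, which is circulant) gives the true value $\tfrac14+\arcsin(r)/(2\pi)$ versus the claimed $[\Phi(0)]^2=\tfrac14$, which agree only when $r=0$. In short, you have reproduced the paper's approach and correctly located its crux, but neither your proposal nor the paper's displayed chain of equalities actually bridges that step.
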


\begin{proof}
  Let $\bZ\!\sim\! \mN_n(\bzero,\bI_n)$ and $Z_{(n)} =
  \max{\{Z_1,Z_2,\ldots,Z_n\}}$ have CDF $\Phi_{(n)}(z)\equiv [\Phi(z)]^n$. 
  Then $\Phi_{(n)}(x/\varrho)  =
  \mP[Z_{(n)} \leq x/\varrho] = \mP[\bZ \leq   x \bone/\varrho]$ so that
  \begin{equation}
    \begin{split}
\Phi_{(n)}(x/\varrho) & \leq\mP[{\bR}^{1/2}\bZ \leq x {\bR}^{1/2}\bone/\varrho]\\ 
      & =   \mP[\bX \leq x \bone], \mbox{ where } \bX\sim N_n(\bzero,\bR)\\
      &    = \mP[X_{(n)}\leq x] = F_{(n)}(x).\\
      \label{th1}
  \end{split}
\end{equation}
Now $\bR^{-1/2}$ is also circulant and 
$\bR^{1/2}\bR^{-1/2}\bone = \tilde\varrho\bR^{1/2}\bone =
\tilde\varrho\varrho\bone$ where $\tilde\varrho$ is the
sum of the elements of any row of  
$\bR^{-1/2}$ and 
$\tilde\varrho=1/\varrho$. If $\bR^{-1/2}$ has no negative elements, $F_{(n)}(x)=\mP[\bX \leq x \bone]\leq \mP[{\bR}^{-1/2}\bX \leq x
{\bR}^{-1/2}\bone] =\mP[\bZ \leq x\tilde\varrho\bone] =
\Phi_{(n)}(x/\varrho) $ and then equality holds in \eqref{th1}.
\end{proof}
\begin{corollary}
  \label{theo:gumbel}
  For $\bX$ and $X_{(n)}$ as in Theorem~\ref{theo:evt}, the limiting
  distribution of $X_{(n)}$ is bounded below by one that lies in the
  domain of  attraction of   the Gumbel distribution, 
  and satisfies 
  \begin{equation}
    \lim_{n\rightarrow \infty}[F_n(a_n x + b_n)] \geq \exp\{-\exp(-x)\},
\end{equation}
where $a_n = \varrho/[ n \phi(b_n/\varrho)]$ and $b_n =\varrho
\Phi^{-1}(1-1/n)$, with $\phi(\cdot)$ the standard normal
probability density function (PDF).
\end{corollary}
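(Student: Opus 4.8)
The plan is to deduce the statement from the classical extreme-value analysis of the maximum of independent standard normals, which applies essentially immediately once Theorem~\ref{theo:evt} is in hand. Indeed, the formula $F_n(x)=[\Phi(\rho x)]^n$ from that theorem is exactly the c.d.f.\ of $\rho^{-1}Z_{(n)}$, where $Z_{(n)}=\max\{Z_1,\dots,Z_n\}$ and $Z_1,\dots,Z_n$ are i.i.d.\ $N(0,1)$: if $Z\sim N(0,1)$ then $P(\rho^{-1}Z\le x)=\Phi(\rho x)$, and independence of the $Z_i$ raises this to the $n$th power. So it suffices to show that the standard normal lies in the Gumbel domain of attraction and then to read off the normalizers for $\rho^{-1}Z_{(n)}$ by scaling.

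Accordingly, the first step is the i.i.d.\ statement: with $c_n=\Phi^{-1}(1-1/n)$ and $\alpha_n=1/[n\phi(c_n)]$, show that $[\Phi(\alpha_n x+c_n)]^n\to\exp(-e^{-x})$ for every fixed $x$. I would do this in the usual two moves. First, a tail estimate: writing $u_n(x)=1-\Phi(\alpha_n x+c_n)$, combine the Mills-ratio asymptotic $1-\Phi(t)\sim\phi(t)/t$ as $t\to\infty$ with the identity $\phi(c_n+\alpha_n x)=\phi(c_n)\exp\{-c_n\alpha_n x-\tfrac12\alpha_n^2x^2\}$. Since $1-\Phi(c_n)=1/n$, Mills also gives $n\phi(c_n)\sim c_n$, hence $c_n\alpha_n\to1$ and $\alpha_n\to0$; feeding this in yields $n\,u_n(x)\sim n\,\phi(c_n)e^{-c_n\alpha_n x}/c_n\to e^{-x}$. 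Second, since $u_n(x)\to0$, we get $[1-u_n(x)]^n=\exp\{n\log(1-u_n(x))\}=\exp\{-n\,u_n(x)(1+o(1))\}\to\exp(-e^{-x})$.

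The second step transfers this to $X_{(n)}$. Writing $F_n(a_nx+b_n)=[\Phi(\rho a_nx+\rho b_n)]^n$, the computation just done applies verbatim as soon as the sequences $a_n,b_n$ are chosen so that $\rho b_n=\Phi^{-1}(1-1/n)$ and $\rho a_n=[n\phi(\Phi^{-1}(1-1/n))]^{-1}$, i.e.\ the classical normalizers of the standard-normal maximum, scaled by $\rho^{-1}$. This gives $\lim_n F_n(a_nx+b_n)=\exp(-e^{-x})$, the Gumbel c.d.f.; since $a_n>0$ and the limit is nondegenerate, the convergence-of-types (Fisher--Tippett--Gnedenko) theorem places $X_{(n)}$ in the Gumbel domain of attraction, which is the assertion.

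The one genuinely delicate point is the tail estimate: a first-order Taylor expansion of $\Phi$ about $c_n$ is \emph{not} adequate --- it would produce a limit that is linear in $x$ rather than doubly exponential --- so one must use the Gaussian tail (Mills-ratio) asymptotics to surface the factor $e^{-c_n\alpha_n x}\to e^{-x}$, and one must also check that the $o(1)$ corrections are uniform in $x$ on compact sets so that the passage $n\log(1-u_n(x))\to-e^{-x}$ is legitimate there. Everything else --- the reduction through Theorem~\ref{theo:evt}, the $\rho^{-1}$ rescaling, and the closing appeal to convergence of types --- is routine bookkeeping.
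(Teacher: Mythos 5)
Your argument is correct and follows essentially the same route as the paper: the paper's proof simply quotes the classical Gumbel limit for the maximum of i.i.d.\ standard normals with the usual normalizers (citing Resnick) and says the result follows from Theorem~\ref{theo:evt}, whereas you additionally write out that classical step via the Mills-ratio tail estimate and close with convergence of types. One remark: your reduction forces the normalizers $b_n=\Phi^{-1}(1-1/n)/\rho$ and $a_n=1/[\rho\, n\,\phi(\Phi^{-1}(1-1/n))]$ and the limit $\exp\{-e^{-x}\}$, i.e.\ the classical constants scaled by $\rho^{-1}$; the corollary as printed puts $\rho$ in the numerator and drops the minus sign in the double exponential, so your derivation is the corrected form consistent with $F_n(x)=[\Phi(\rho x)]^n$ from Theorem~\ref{theo:evt} (and with the paper's evident intent).
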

\begin{proof}
Each element of $\bW\sim \mN_n(\bzero,\varrho^2\bI_n)$ has CDF
$\Phi(\frac w\varrho)$ and PDF
  $\phi(\frac w\varrho)/\varrho$. Then
  $W_{(n)}\!\doteq\!\max\bW$  has CDF $G_{(n)}(\cdot)$ that satisfies
  $\lim_{n\rightarrow     \infty}G_{(n)}(a_n x + b_n)\!=\! 
  \exp\{-\exp(-x)\}$ with $a_n\!=\!\varrho/[n \phi(b_n/\varrho)]$ and
  $b_n\!=\!\varrho\Phi^{-1}(1-1/n)$~\citet{resnick13}.
  The result follows from   Theorem~\ref{theo:evt}.
\end{proof}
This paper uses $\bR$ for which $\bR^{1/2}$ can be
shown, using Theorem 2 of \citet{maitra19}, to have no negative
entries. Then Corollary~\ref{theo:gumbel}  provides a
conservative bound for the quantiles of the limiting distribution of
$\max\bX\sim\mN(\bzero,\bR)$ with the conservatism 
determined by the negative elements of $\bR^{-1/2}$.

The thresholding steps yield {\em truncated} (and correlated) random
variables for potential thresholding  in subsequent
steps. We account for this added complication by deriving the limiting 
distribution of the maximum of a correlated sample from a
right-truncated normal distribution. 

Suppose that 
$Y_1,Y_2,\ldots Y_n$ are independent identically distributed (IID) random
variables from $\mN(0,\varrho^2)$ but truncated at $\eta$, then
each $Y_i$ has PDF $\phi_\eta^{\bullet}(y;\varrho)$ and
CDF $\Phi_\eta^{\bullet}(y;\varrho)$, where
\begin{equation}
  \label{truncnormal}
  \phi_\eta^{\bullet}(y;\varrho) = \frac{\phi(\frac y\varrho)}{\varrho\Phi(\frac\eta\varrho)} I(y < \eta);\:\:\: \Phi_\eta^{\bullet}(y;\varrho) = \frac{\Phi[\frac{\min(y,\eta)}\varrho]}{\Phi(\frac\eta\varrho)},
\end{equation}
where $I(\cdot)$ is the indicator function. Then
$Y_{(n)}=\max{\{Y_1,Y_2,\ldots,Y_n\}}$ has CDF 
$\Phi_{\eta,(n)}^\bullet(y;\varrho)\!= \!
\left[{\Phi(\frac{\min(y,\eta)}\varrho)}/{\Phi(\frac\eta\varrho)}\right]^n $ with limiting
distribution as follows.
\begin{theorem}
\label{theo:indeptruncnormal}
Let $Y_1,Y_2,\ldots,Y_n$ be a sample from \eqref{truncnormal}. Then
the limiting distribution of $Y_{(n)} $ 
  belongs to the domain of attraction of the reverse Weibull
distribution and satisfies
\begin{equation}
  \label{rev.weibull}
  \lim_{n\rightarrow\infty}[\Phi^{\bullet}_{\eta,(n)}(a^\bullet_nx+b_n^\bullet;\varrho)] = \exp{\{-(-x)^{-\nu}\}} I(x \leq 0).
\end{equation}
for some $\nu >0$. Here $a_n^\bullet = \eta - {\Phi_\eta^\bullet}^{-1}(1-1/n;\varrho)$ and $b_n^\bullet = \eta$.
\end{theorem}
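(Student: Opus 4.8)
The plan is to establish that the right-truncated normal distribution lies in the domain of attraction of a reverse Weibull law by appealing to the classical characterization of that domain (see \citet{resnick13}): a c.d.f.\ $G$ with finite right endpoint $x_0$ belongs to the reverse Weibull domain with shape parameter $\tau$ if and only if the re-centered, re-scaled tail $1 - G(x_0 - t)$ is regularly varying at $0$ with index $\tau$, equivalently $\lim_{t\downarrow 0}\bigl(1 - G(x_0 - ut)\bigr)/\bigl(1 - G(x_0 - t)\bigr) = u^{\tau}$ for every $u>0$. Here the right endpoint of $\Phi_\eta^{\bullet}$ is exactly $x_0 = \eta$, since $\Phi_\eta^{\bullet}(x) = \Phi(x)/\Phi(\eta)$ for $x<\eta$ and equals $1$ at $x=\eta$. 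So the first step is to write $1 - \Phi_\eta^{\bullet}(\eta - t) = \bigl(\Phi(\eta) - \Phi(\eta - t)\bigr)/\Phi(\eta)$ and analyze its behavior as $t\downarrow 0$.

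The second step is the Taylor expansion: as $t\downarrow 0$, $\Phi(\eta) - \Phi(\eta - t) = \phi(\eta)\,t + o(t)$, because $\Phi$ is smooth with $\Phi'(\eta)=\phi(\eta)>0$. Hence $1 - \Phi_\eta^{\bullet}(\eta - t) \sim \bigl(\phi(\eta)/\Phi(\eta)\bigr)\,t$, which is regularly varying at $0$ with index $\tau = 1$. By the characterization theorem this shows $G_\eta^{\bullet} = (\Phi_\eta^{\bullet})^n$ — the c.d.f.\ of $Y_{(n)}$ — is attracted to the reverse Weibull law $\exp\{-(-x)^{\tau}\}I(x\le 0)$ with $\tau = 1$, which is a special case of the stated form (the exponent is $-(-x)^{\tau}$; note the paper writes $-(-x^{-\tau})$ but the intended Weibull c.d.f.\ is $\exp\{-(-x)^{\tau}\}$ for $x\le 0$). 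The third step is to pin down the norming constants. For a distribution in the reverse Weibull domain the standard choices are $b_n^{\bullet} = x_0 = \eta$ and $a_n^{\bullet} = x_0 - (\Phi_\eta^{\bullet})^{-1}(1 - 1/n)$; one then verifies directly that $n\bigl(1 - \Phi_\eta^{\bullet}(\eta - a_n^{\bullet}(-x))\bigr) \to (-x)^{\tau}$ for $x<0$ using the regular-variation asymptotics from step two, whence $[G_\eta^{\bullet}(a_n^{\bullet}x + b_n^{\bullet})]\to \exp\{-(-x)^{\tau}\}I(x\le 0)$. (In the statement $b_n^{\bullet}$ is written as $0$, which is consistent once one absorbs the location shift $\eta$ into the argument; I would reconcile this by stating the convergence for the centered maximum $Y_{(n)} - \eta$.)

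The main obstacle — and the reason the statement is slightly delicate — is the earlier computation in the excerpt showing that the von Mises sufficient condition fails: $\lim_{x\to\eta}\frac{d}{dx}\bigl((1-F(x))/f(x)\bigr) = -1 \ne 0$, so one cannot invoke the von Mises route to the Gumbel domain. The resolution is precisely that a distribution with a \emph{finite} right endpoint and a density bounded away from $0$ and $\infty$ near that endpoint cannot be in the Gumbel domain at all; it is in the reverse Weibull domain, and the failure of the von Mises–Gumbel criterion is fully expected. So the real work is just to make sure I use the correct characterization (regular variation of the left tail at the finite endpoint) rather than the Mills-ratio/von Mises criterion appropriate to the Gaussian case of Theorem~\ref{theo:evt}. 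Once that switch is made, the proof is a one-line Taylor expansion plus a routine verification of the norming sequences; no serious analytic difficulty remains.
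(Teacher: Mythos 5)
Your proposal is correct, and it reaches the same conclusion ($\tau=1$, with scaling $a_n^\bullet=\eta-{\Phi_\eta^\bullet}^{-1}(1-1/n)$) by a slightly different route than the paper. The paper's proof invokes the von Mises-type sufficient condition from David and Nagaraja (Theorem 10.5.2), namely $\lim_{x\to\eta}(\eta-x)\phi_\eta^\bullet(x)/\bigl(1-\Phi_\eta^\bullet(x)\bigr)=\tau$, and evaluates the limit by L'H\^opital's rule to get $\tau=1$, then refers to standard extreme value theory for the norming constants. You instead use the necessary-and-sufficient regular-variation characterization of the reverse Weibull domain at a finite endpoint, obtaining $1-\Phi_\eta^\bullet(\eta-t)\sim\bigl(\phi(\eta)/\Phi(\eta)\bigr)t$ by a first-order Taylor expansion, and then verify the norming sequence directly via $n\bigl(1-\Phi_\eta^\bullet(\eta-a_n^\bullet(-x))\bigr)\to(-x)^\tau$. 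The two computations are essentially equivalent --- both rest on $\phi$ being positive and continuous at the finite endpoint $\eta$ --- but your version makes the tail constant explicit and checks the constants rather than citing them, which is a modest gain in completeness; the paper's version is shorter because the von Mises condition is tailored to densities. More importantly, your two editorial observations are substantive and correct: the limit law should read $\exp\{-(-x)^{\tau}\}I(x\le 0)$ rather than the printed $\exp\{-(-x^{-\tau})\}$, and the centering constant cannot be $b_n^\bullet=0$ as stated --- the standard choice is $b_n^\bullet=\eta$ (equivalently, state the convergence for $Y_{(n)}-\eta$), since with $b_n^\bullet=0$ and $\eta\neq 0$ one gets $\bigl[\Phi(a_n^\bullet x)/\Phi(\eta)\bigr]^n$ with $a_n^\bullet\to 0$, a degenerate limit. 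The paper's proof glosses over this by deferring to the literature, so your reconciliation via the centered maximum is the right fix. You are also right that the failure of the Gumbel--von Mises criterion (finite endpoint, density bounded away from $0$ there) is expected and irrelevant once the Weibull-domain criterion is used.
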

\begin{proof}
Note that $\eta = \sup\{x\mid \Phi_\eta^\bullet(x) < 
1\}$.
From Theorem 10.5.2 in \citet{davidandnagaraja03}, for $Y_{(n)}$ to be
in the domain of attraction of the reverse Weibull, it sufficient to
show that 
\begin{equation*}
\lim_{y \rightarrow \eta}
\frac{(\eta-y)\phi^\bullet(y;\varrho)}{1-\Phi^\bullet(y;\varrho)} = \nu
\end{equation*}
for some $\nu>0$~\citep{vonMises36}. In our case, the limit holds
because  $\eta <\infty$. Then upon using  L'H\^opital's 
rule, we have 
\begin{equation*}
\begin{split}
  \lim_{y \rightarrow \eta}
  \frac{(\eta-y)\phi_\eta^\bullet(y;\varrho)}{1-\Phi_\eta^\bullet(y;\varrho)}  & =
  \lim_{y \rightarrow \eta} \frac{(\eta-y)\frac{d}{dx}\phi_\eta^{\bullet}(y;\varrho)-\phi_\eta^\bullet(y;\varrho)}{-\phi_\eta^\bullet(y;\varrho)}  \\
& = \lim_{y\rightarrow \eta} \frac{(\eta-y)\phi'(\frac
  y\varrho)/\varrho-\phi(\frac y\varrho)/\varrho}{-\phi(\frac y\varrho)/\varrho }=1.\\
\end{split}
\end{equation*}
Thus the right-truncated normal distribution satisfies the reverse Weibull condition and
converges to the reverse Weibull distribution with 
$\nu\! \equiv\! 1$ in \eqref{rev.weibull}. The constants in the
theorem are as per extreme value
theory~\citep{davidandnagaraja03,resnick13}. 
\end{proof}
\begin{theorem}
  \label{theo:trunc.corr.normal}
  Let $\bX$ 
  be a random vector from the
  $\mN_n(\bzero, \bR)$ density but that is right-truncated in each coordinate
  at $\eta$, with ${\bR}$ and $\varrho$ as in Theorem ~\ref{theo:evt}. 
Then  the CDF $F^\bullet_{\eta,(n)}(x)$ of $X_{(n)}$ is $F^\bullet_\eta(x) = \mP[X_{(n)} \leq
x] \geq  
\Phi_{\eta,(n)}^\bullet(x;\varrho)$.
\end{theorem}
\begin{proof}
  For IID random variables  $Y_1,Y_2,\ldots,Y_n$ from \eqref{truncnormal},
  $Y_{(n)}$ has
CDF $\Phi^\bullet_{\eta,(n)}(x;\varrho) = \mP[Y_{(n)}\leq x]
=\mP[U_{(n)}\leq \frac x\varrho] =
\Phi^\bullet_{\frac\eta\varrho,(n)}(\frac x\varrho;1)$, where $U_{(n)}
=\max\bU$ with $\bU$ a vector of $n$ IID random variables from $\bPhi^\bullet_{\frac\eta\varrho}$. Then 
  \begin{equation*}
    \begin{split}
      \Phi^\bullet_{\frac\eta\varrho,(n)}(x/\varrho;1)  = \mP[\bU \leq x \bone/\varrho]
      &\leq  \mP[{\bf R}^{1/2}\bY \leq x {\bf R}^{1/2}\bone/\varrho]\\
      &       = \mP[\bX \leq x \bone] \equiv   F^\bullet_\eta(x), 
    \end{split}
  \end{equation*}
  proving the statement of the theorem.
\end{proof}

\begin{corollary}
  \label{theo:evt.weibull}
  Let $\bX$ and $X_{(n)}$ be as in Theorem~\ref{theo:trunc.corr.normal}. Then the limiting
  distribution of $X_{(n)}$ belongs to the domain of attraction of
  the reverse Weibull distribution, and satisfies:
  \begin{equation}
    \lim_{n\rightarrow \infty}[F_{\eta}^\bullet (a^\bullet_nx
      + b^\bullet_n)] \geq \exp{\{-(-x)^{-\nu}\}} I(x \leq 0).
\end{equation}
where $a_n^\bullet = \eta - {\Phi_{\eta}^\bullet}^{-1}(1-1/n;\varrho)$ and $b_n^\bullet = \eta$.
\begin{proof}
The result is immediate from Theorems~\ref{theo:indeptruncnormal} 
and 
 \ref{theo:trunc.corr.normal}. 
\end{proof}
\end{corollary}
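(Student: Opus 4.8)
The plan is to combine the exact (finite-$n$) c.d.f.\ from Theorem~\ref{theo:trunc.corr.normal} with the limiting result for the i.i.d.\ truncated-normal maximum from Theorem~\ref{theo:indeptruncnormal}. Theorem~\ref{theo:trunc.corr.normal} gives us the algebraic identity $F^\bullet_\eta(x) = G^\bullet_{\rho\eta}(\rho x)$, where $G^\bullet_{\rho\eta}$ is the c.d.f.\ of the maximum of $n$ i.i.d.\ variables from the standard normal truncated at $\rho\eta$. So the whole corollary reduces to rescaling the already-established convergence of $G^\bullet_{\rho\eta}$ by the fixed factor $\rho$.

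First I would invoke Theorem~\ref{theo:indeptruncnormal} applied with truncation point $\rho\eta$ in place of $\eta$: this yields $\lim_{n\to\infty} G^\bullet_{\rho\eta}(\tilde a_n x + \tilde b_n) = \exp\{-(-x^{-\alpha})\} I(x\le 0)$ with $\tilde a_n = \rho\eta - {\Phi^\bullet_{\rho\eta}}^{-1}(1-1/n)$ and $\tilde b_n = 0$. Next I would substitute into the identity: since $F^\bullet_\eta(x) = G^\bullet_{\rho\eta}(\rho x)$, we get $F^\bullet_\eta(a^\bullet_n x + b^\bullet_n) = G^\bullet_{\rho\eta}(\rho(a^\bullet_n x + b^\bullet_n))$, so matching $\rho(a^\bullet_n x + b^\bullet_n)$ to $\tilde a_n x + \tilde b_n$ forces $\rho a^\bullet_n = \tilde a_n$ and $\rho b^\bullet_n = \tilde b_n = 0$; i.e., $b^\bullet_n = 0$ and $a^\bullet_n = \tilde a_n/\rho = (\rho\eta - {\Phi^\bullet_{\rho\eta}}^{-1}(1-1/n))$. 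I would then reconcile this with the stated form $a_n^\bullet = (\eta - {\Phi_{\rho\eta}^\bullet}^{-1}(1-1/n))\rho$, noting that both expressions are asymptotically equivalent normalizing sequences (extreme-value normalizing constants are only determined up to $a_n \sim a_n'$, $(b_n - b_n')/a_n \to 0$), so the limit is unchanged; citing \citet{davidandnagaraja03,resnick13} for this standard slack.

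The only mild subtlety, which I would flag rather than belabor, is that the right endpoint of the truncated-normal support scales with $\rho$: the support of $\Phi^\bullet_{\rho\eta}$ is $(-\infty,\rho\eta)$, and correspondingly the support of $X_{(n)}$ under $F^\bullet_\eta$ is $(-\infty,\eta)$, which is consistent with $b^\bullet_n = 0$ and $a^\bullet_n \to 0^+$ so that $a^\bullet_n x + b^\bullet_n \to \eta^-$ for every fixed $x \le 0$. I expect the main (and really the only) obstacle to be purely bookkeeping: keeping the $\rho$-scaling straight between the inner truncated normal (truncated at $\rho\eta$) and the outer variable (truncated at $\eta$), and making sure the normalizing-constant statement in the corollary is read as "a valid choice up to asymptotic equivalence" rather than a unique formula. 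Everything else is a one-line substitution into Theorems~\ref{theo:indeptruncnormal} and~\ref{theo:trunc.corr.normal}.
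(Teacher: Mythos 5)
Your proposal is correct and follows essentially the same route as the paper's own (very terse) proof: invoke Theorem~\ref{theo:indeptruncnormal} with truncation point $\rho\eta$, then use the identity $F^\bullet_\eta(x) = G^\bullet_{\rho\eta}(\rho x)$ from Theorem~\ref{theo:trunc.corr.normal} and rescale. Your extra bookkeeping on the normalizing constants (reading them as valid only up to the usual asymptotic-equivalence slack) is more careful than the paper's one-line treatment, but it does not change the argument.
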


\subsection{Fast Adaptive Smoothing and Thresholding}
\label{fast}
We propose our FAST algorithm that adaptively and, in sequence, smooths and
identifies activated regions by thresholding. We estimate the
amount of smoothing robustly or from the correlation structure that we
assume is well-approximated by an ellipsoidally-symmetric 3D Gaussian
kernel oriented along the three axes and with 
parameters $\bh=(h_1,h_2,h_3)$. That is, under the 
null hypothesis (of no activation anywhere), we assume that the 
SPM $\bGamma\sim \mN_n(\bzero, \sigma^2\bR)$ where $\bR\! =\! \bS_{\bh}$, with
$\bS_{\bh}$ a circulant smoothing matrix~\citep{maitraandosullivan98}.
Let $\bGamma_{(-{\bh})} \sim \bS_{\bh}^{-\frac12}\bGamma$. We estimate
$\bh$ and $\sigma$
by maximizing 
the loglikelihood function
\begin{equation}
  \label{llhd}
  \ell(\bh\mid\sigma,\bGamma_{(-\bh)}) = 
  constant -
  \frac12\log|\sigma^2\bS_{\bh}| - \frac1{2\sigma^2}\bGamma_{(-\bh)}'\bGamma_{(-\bh)}.
\end{equation}
Both $\bGamma_{(-\bh)}$ and $|\bS_{\bh}|$ are
speedily computed using Fast Fourier Transforms (FFTs). 
Starting with the
SPM $\bGamma$, obtained
as discussed in Section~\ref{method:prelim}, we  propose the algorithm:
\begin{enumerate}
\item {\em Initial Setup}. At this point, assume that $\zeta_i
  \equiv 0$ $\forall$ $i$, where $\zeta_i$ is the activation status of the $i$th
  voxel. That is, assume that all voxels are inactive. Set
  $\zeta_i^{(0)} \equiv \zeta_i$. Also denote $\bGamma^{(0)} =
  \bGamma$, and $n_0=n$, where $n_k$ denotes the number of voxels for which
  $\zeta_i^{(k)} = 0$.  
\item {\em Iterative Steps}, \label{step2}
  For $k\!=\!1,2,\ldots,$ iterate as follows:
  \begin{enumerate}
  \item 
    {\em Smoothing}. Smooth  $\bGamma^{(k-1)}$ in one of three
    ways:
    \begin{enumerate}
    \item\label{ALL-FAST} {\em Adaptive Likelihood maximization
        (ALL-FAST, pronounced {\em\^ol-fast})}:
    Maximize~\eqref{llhd} given $\bGamma^{(k-1)}$     to obtain 
    ${\bh}^{(k)}$. Smooth $\bGamma^{(k-1)}$ with
    $\bS_{{\bh}^{(k)}}$ to get $\bGamma^{(k)}$. 
  \item\label{AM-FAST} {\em Adaptive Model-based smoothing
      (AM-FAST, pronounced {\em\u{a}m-fast})}:
    Use a Markov Random Field (MRF) prior model with parameters
    estimated using empirical Bayes methods as described in
    Section~\ref{mb.smooth} to get $\bGamma^{(k)}$ from
    $\bGamma^{(k-1)}$. 
    \item\label{AR-FAST} {\em Adaptive Robust smoothing (AR-FAST,
        {\em ahr-fast})}: Use \citep{garcia10} to smooth
      $\bGamma^{(k-1)}$ and get    $\bGamma^{(k)}$.
    \end{enumerate}
  \item\label{std} {\em Standardization.} Maximize~\eqref{llhd} given
    $\bGamma^{(k)}$ to obtain $\sigma_{(k)}$, ${\bh}^{(k)}$ and $\varrho_k=
    {\bS}_{{\bh}^{(k)}}^{\frac12}{\bone}$. Standardize
      $\bGamma^{(k)}$ by scaling with a robust version of
      $\sigma_{(k)}$, say $\tilde\sigma_{(k)}$.
\item {\em Adaptive Thresholding}. This consists of two steps:
  \ben
\item
  For $k=1$, use Corollary \ref{theo:gumbel} to obtain $(a_n,b_n)$,
  otherwise ({\em i.e.} for $k > 1$)
  use Corollary \ref{theo:evt.weibull} to get
  $(a^\bullet_{n_{k-1}},b_{n_{k-1}}^\bullet)$. In both cases,
use $\bR=\bS_{{\bh}^{(k)}}$.
\item From the Gumbel (for $k=1$) or reverse Weibull distributions (for
  $k>1$), get
  \begin{equation}
    \label{gumbel.weibull.cutoff}
    \eta_k = \begin{cases} {a_{n_0}}\iota_{\alpha}^\mG+ b_{n_0} &
      \mbox{ for }k=1\\
    {a^\bullet_{n_{k-1}}}  \iota_{\alpha}^\mW + b^\bullet_{n_{k-1}}&
      \mbox{ otherwise. }\end{cases}
  \end{equation}
  where $\iota_{\alpha}^\mG$ and $\iota_{\alpha}^\mW$ are the
  upper-tail $\alpha$-values for the Gumbel and the reverse Weibull
  (with $\nu = 1$) distributions, respectively. 
\item Set $\zeta^{(k)}_i = 1$ if $\zeta^{(k-1)}_i = 0$ and if the
  $i$th coordinate of $\bGamma^{(k)}$ exceeds $\eta_k$. Let $n_k=\sum_{i=1}^n\zeta_i^{(k)}$.
    \een
    \een
  \item {\em Termination}.
    \label{stop}
    Declare no activation and terminate if $\bzeta^{(1)}\equiv
    0$. Otherwise, let
    $J(\boldsymbol{\zeta}^{(k)},\boldsymbol{\zeta}^{(k-1)})$ be 
    the Jaccard Index~\citep{jaccard1901,maitra10} of the activation
    maps in the  $k$th and $(k-1)$th iterations.
    If $J(\boldsymbol{\zeta}^{(k)},\boldsymbol{\zeta}^{(k-1)}) \geq
    J(\boldsymbol{\zeta}^{(k+1)},\boldsymbol{\zeta}^{(k)})$,  the
    algorithm terminates -- the final activation map is $\boldsymbol{\zeta}^{(k)}$.
    \een
    
    \subsubsection*{Comments} A few comments are in order:
   \paragraph{Correlation structure} A circulant
    correlation structure allows for spatial context in the association
    between the values of the voxel-wise test statistics, while having
    the added benefit of speedy computations via the use of
    FFTs.
    \paragraph{Robust Estimation of $\sigma_{(k)}$} The estimate  
    $\hat\sigma_{(k)}$ from \eqref{llhd} assumes no  activation in
    $\bGamma_{(k)}$. Ignoring the activation can inflate the estimate,
    so we obtain $\tilde\sigma_{(k)}=\hat\sigma_{(k)}\tilde s_{(k;w)}/s_{(k)}$
    where $s_{(k)}$ is the estimated SD of $\bGamma_{(k)}$, and
    $\tilde s_{(k;w)}$ is its biweight-estimated
    SD~\citep{hoaglinetal00}, both assuming a zero mean. Specifically,
    if $\bGamma_{(k)}$ has $i$th component $\Gamma_{i(k)}$, we calculate 
 \begin{equation*}
\tilde s_{(k,w)}=\frac{\left(n\right)^{\frac{1}{2}}\left[\sum_{|e_{i(k),j}|<1}\Gamma_{i(k)}^2\left(1-e_{i(k)}^2\right)^4\right]^\frac{1}{2}}{\abs{\sum_{\abs{e_{i(k)}<1}}\left(1-e_{i(k)}^2\right)\left(1-5e_{i(k)}^2\right)}}
\end{equation*}
where $e_{i(k),j}={\Gamma_{i(k)}}/(w\tilde s_{(k)})$,  $\tilde
s_{(k)}$ the median absolute deviation of $\bGamma_{(k)}$ from 0, and
$w=\argmin_{w\in(0,6)} \tilde s_{(k,w)}$.   
  
   \paragraph{Comparison with AS} Our FAST algorithms are 
   similar to AS \citet{polzehletal10} in 
   that they also smooth and threshold 
    iteratively. But there are a few 
    fundamental differences. The AS approach has a set
    user-specified sequence of bandwidths that smooths
    $\bGamma^{(k)}$ at
    each step. In contrast, ALL-, AM- and AR-FAST use likelihood, empirical Bayes and robust methods to optimally determine
    $\bh$ at each step. AS also 
    thresholds but uses a general
    Fr\'echet extreme value distribution that ignores  
    spatial context and the correlated truncated nature of the random
    variables that arise from the smoothing and thresholding at each
    iteration. Our
    development represents the procedure more accurately because we
    account for both the correlation structure (with the initial
    cut-off decided as per the Gumbel distribution) and the
    truncation (with subsequent cut-offs determined by the
    reverse Weibull distribution). Our more general $\bh$ 
    allows for different amounts of
    smoothing in each     axis. Finally, our method is     entirely
    data-driven, with termination declared only if there is no initial
    activation or when $\mJ$ between subsequent activation maps decreases.
   \paragraph{Two-sided alternatives} Our development here builds from 
    one-sided tests where large values are the extreme
    values of the SPM. For two-sided alternatives, we  use the
    algorithm individually on the SPM and its negative, but replacing
    $\alpha$ in     \eqref{gumbel.weibull.cutoff} with
    $\alpha/2$. This provides two (disjoint) activation  
    maps, the union of which is the two-sided activation map.

\section{Performance Evaluations} \label{sec:simulation}
We studied performance of FAST relative to the most popular and
relevant alternatives. Our 
evaluations were on real and simulated datasets and compared FAST with
cluster thresholding (CT) applied with $\alpha = 0.001$ per
\citet{wooetal14}, a second-order
neighborhood and number of voxels in cluster determined by ~\citet{cox96}'s {\tt
  3dClustSim} function, 
threshold-free cluster enhancement (TFCE) \citep{smithandnichols09},
permutation-based 
testing~(PBT)~\citep{winkleretal14}, AS and PS (applied as AWS or
adaptive-weighted smoothing \citep{polzehl2006propagation}). We used
$\alpha=0.01$ and $\alpha=0.05$ in FAST to obtain insight 
into the role of $\alpha$. We
used R packages {\sc RFASTfMRI} for FAST, {\sc fMRI} for AS and AWS,
{\sc AnalyzeFMRI} for CT and  {\sc  permuco} for TFCE and PBT.


\subsection{Finger-Tapping Experiments}
Our first set of evaluations used the 12 replicated SPMs 
\citet{maitraetal02,maitra09b} 
from the right-hand (RH) and left-hand (LH) finger tapping
study  of a RH-dominant male.
For each method, Figure~\ref{fig:jaccard.index.finger} displays
\begin{figure}[h]
  \centering
\includegraphics[width=\columnwidth]{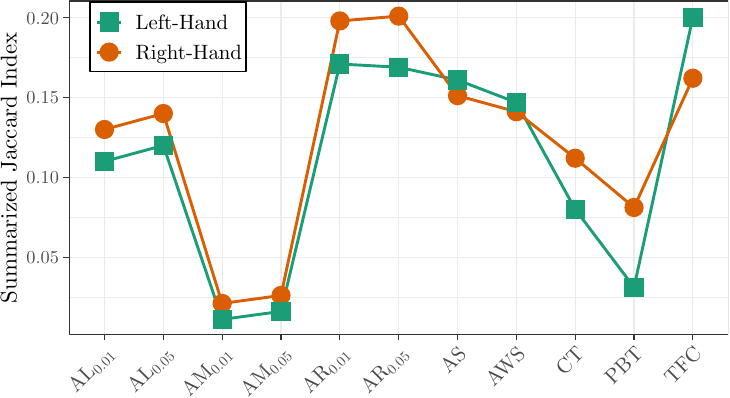}
\caption{The summarized Jaccard index ($\ddot\omega$) of activations obtained by
  each method over the replications for the RH and LH
    experiments. In this and other figures, AL$_a$, AR$_a$ and
    AM$_a$ denote ALL-, AM- and AR-FAST methods with $\alpha = a$, while
    TFC is used to abbreviate TFCE.}
  \label{fig:jaccard.index.finger}
\end{figure}
the summarized Jaccard similarity ($\ddot\omega$) between the
activation maps from the 12 replications. 
For the RH experiments, AR-FAST showed the highest reliability of
detected activation with $\alpha=0.05$. AR-FAST at $\alpha=0.01$ 
and TFCE were marginally behind and AS and AWS also doing reasonably. TFCE
was a bit better than AR-FAST for the LH experiments. 
The generally low $\ddot\omega$ for all methods points to potential issues in data  quality and processing
\citep{maitra10}.

\subsection{Experiments on Simulated Phantom Data}
Our next set of examples evaluated performance on phantom data
simulated using \eqref{eq:lm} under different conditions. 
\subsubsection{Motif and Stripes}
\label{tabelow}
\begin{figure}[h]
  \mbox{
    \subfloat[Motif]{
      \includegraphics[width=0.245\columnwidth]{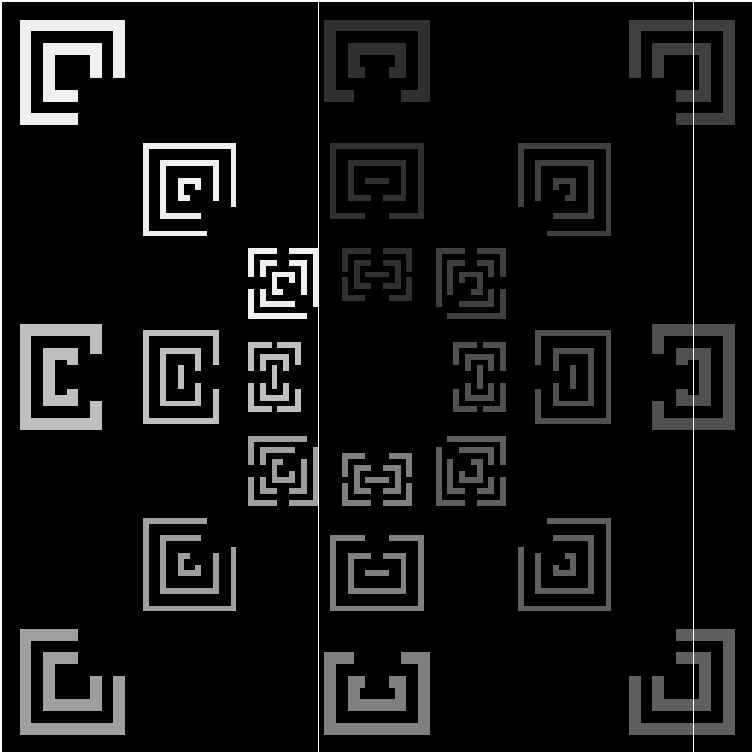}\label{motif}}
    \subfloat[$16\!\!\times\!\!16$ stripes]{
    \includegraphics[width=0.245\columnwidth]{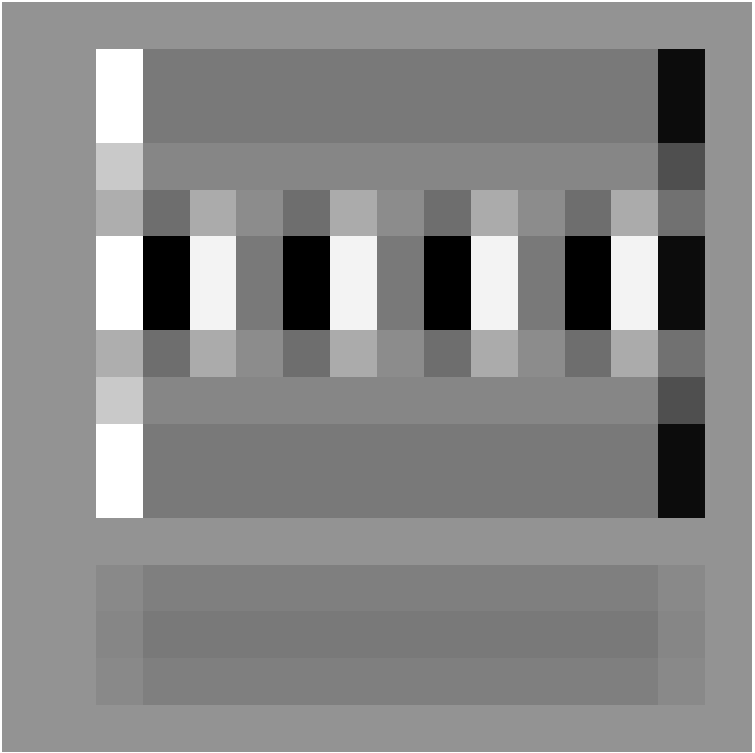}\label{str16}}
    \subfloat[$32\!\!\times\!\!32$ stripes]{
    \includegraphics[width=0.245\columnwidth]{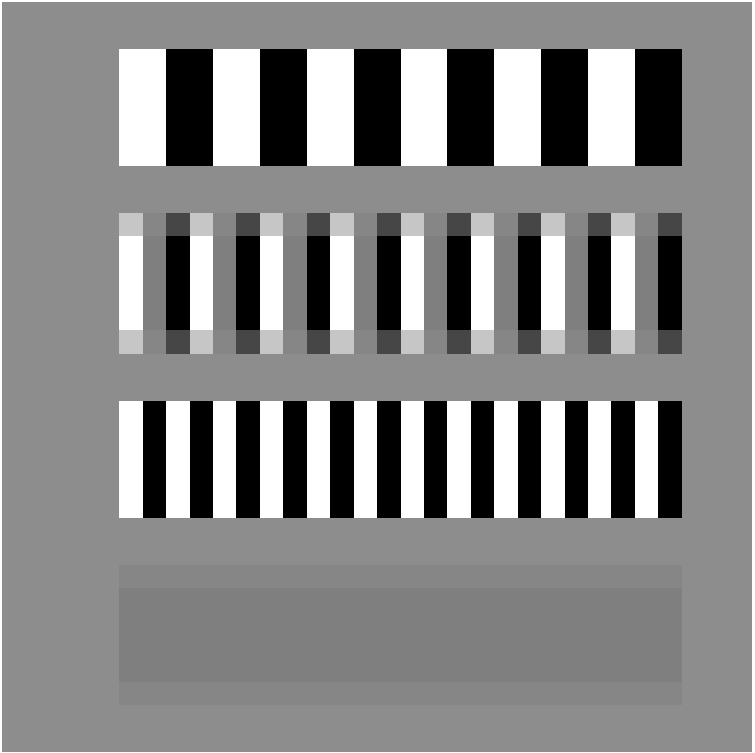}\label{str32}}
    \subfloat[$64\!\!\times\!\! 64$ stripes]{
    \includegraphics[width=0.245\columnwidth]{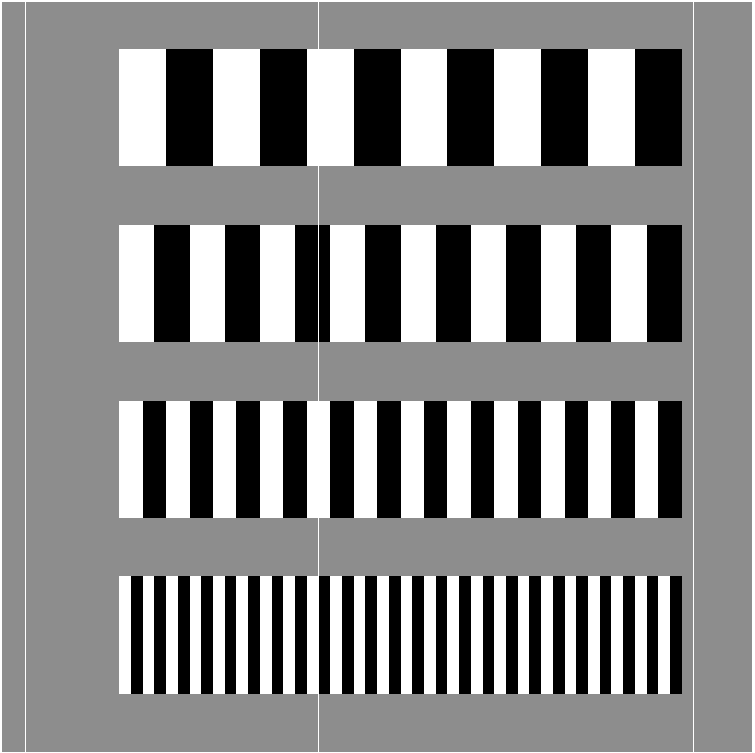}\label{str64}}
}
\mbox{
\subfloat[Performance using $\mJ$ for the two-sided phantom experiments]{\includegraphics[width=\columnwidth]{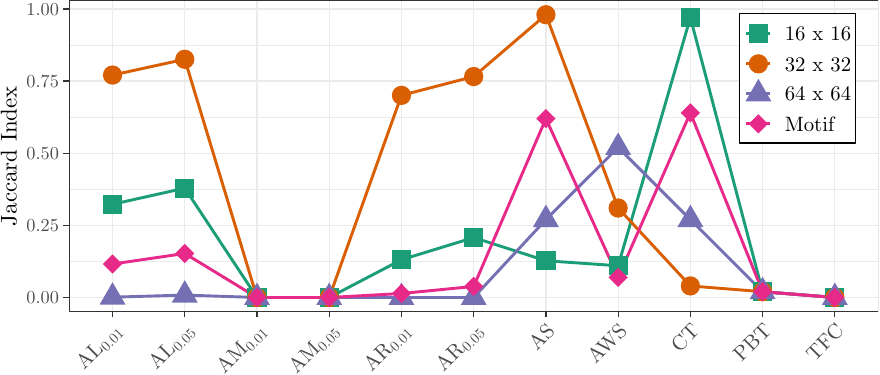}\label{JI-Tabelow}}}
\caption{(a)-(d) The phantoms from \citep{polzehletal10} and (e)
  performance of the  methods  under  different
  settings.}
\label{fig:tabelow}
\end{figure}
We first study performance using the simulation setup of 
\citet{polzehletal10}. We thank K. Tabelow for readily sharing code
that created  the motif and three striped ($16\times 16$, $32\times 32$,
$64\times 64$) phantoms of
Figures~\ref{fig:tabelow}\subref{motif}-\subref{str64}.
The phantoms have 14, 50, 28 and 47\% truly activated voxels, or
more than the 1-3\% expected in typical fMRI experiments. 
We used $\beta$s as per \citet{polzehletal10} and  CNRs
of between  0.75 to 2.68 for the motif and  1 to 2 for
the stripes. These examples are of two-sided alternatives. All simulations  had AR(1) errors with
$\rho=0.3$. For AS and AWS, we adopted the maximum bandwidth sequence values 
($h_k^*=3.06$, 1, 2 and 3)  in \citet{polzehletal10} for the four
respective phantoms as 
the best-case specific choices.
 Figure~\ref{fig:tabelow}\subref{JI-Tabelow} summarizes
 performance. There is no clear overall winner but AM-FAST,
 PBT and TFCE find no activation at
 all~(Figure~\ref{fig:tabelow-all}). ALL-FAST and AR-FAST, in that order,
 perform creditably  in some situations but not in others.
\subsubsection{Large-scale study with modified Hoffman phantom}
\label{hoff}
The phantoms in \citet{polzehletal10}, with uniform underlying
structure ($\beta_0$) and no drift, but varying CNR, are not particularly
representative of cerebral activation and do not provide much
insight into performance of different activation methods. So we
performed a large 
simulation study using a more realistic phantom and experimental setup
that matches \eqref{eq:lm}. 
We used a modified version of the digitized
$128\!\times\!128$ 2D Hoffman phantom~\citep{hoffmanetal90}  of
Positron Emission Tomography, with 
 3465 in-brain pixels, representing two types (say, A and B) of
anatomic structures -- the latter has 138 deemed truly activated  pixels in  two
distinct regions (Figure~\ref{fig:Hoff}).  
\begin{figure}[h]
    \centering
    \subfloat{
      \raisebox{-.5\height}{\includegraphics[width=0.12\textwidth]{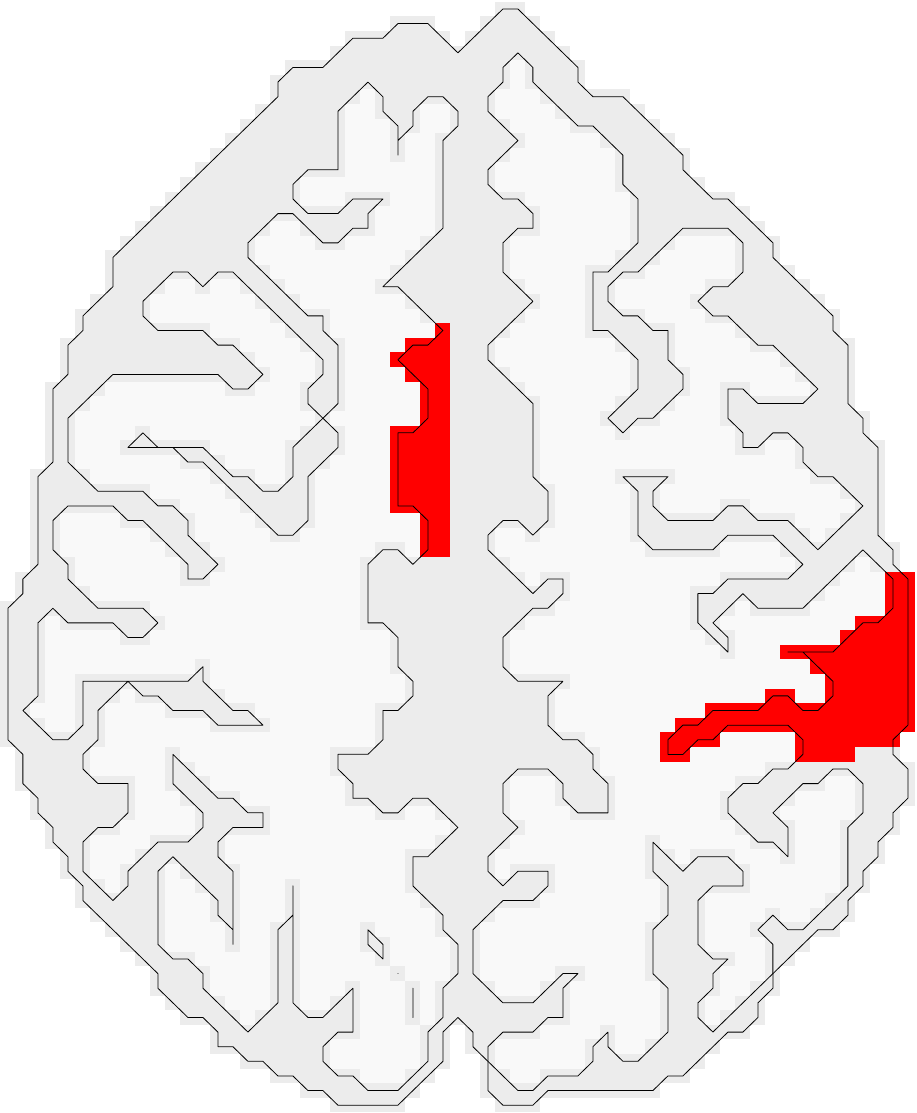}}
    }
    \subfloat{
 \begin{tabular}{|c|ccc|}
   \hline
   Region & $\beta_{i0}$ & $\beta_{i1}$ & $\beta_{i2}$\\
   \hline
   Background & 0  & 0 & 0 \\
   Inactivated, A & 4500 & 0 & -155.32\\
   Inactivated, B & 6000 & 0 & -155.32 \\
   Activated & 6000 & 600 & -155.32 \\
    \hline
  \end{tabular}
    }
    \caption{The modified Hoffman phantom. Putative anatomic  regions (A and B) are in
      shades of grey, with truly activated pixels  in red. The
      table lists the $\bbeta_i$s used in our simulations.}
\label{fig:Hoff}
\end{figure}
The $i$th pixel in the phantom had values 
$\bbeta_i\!=\!(\beta_{i0},\beta_{i1},\beta_{i2})$ in \eqref{eq:lm} as
per its location (see Figure~\ref{fig:Hoff}).

As in~\eqref{eq:lm}, the design matrix $\bX$ had the intercept in the
first column. The second column had the hemodynamic response function
(HRF)~\citep{lindquist08} convolved with the input stimulus time
series that alternated as 16 on-off blocks of 6 time-points
each. 
The third column of $\bX$ represented linear drift and was set to $t$
($t\!=\!1,2,\ldots,96$). As per~\eqref{eq:lm}, AR($p$) Gaussian errors were
simulated for different $p$ and at each pixel. Specifically, for each $p$, we considered AR
coefficients for a range of $\bphi\!\equiv\!\bphi_i$s with  coefficients
$(\phi_1,\phi_2,\ldots,\phi_p)$ that were, with lag,  (a) all equal, (b)
decreasing, (c) increasing, (d) first decreasing, 
then increasing and (e) first increasing and then decreasing. We
restricted $\sum^p_{j=1}\phi_{j} = 0.9$ to ensure stationary
solutions. So $\phi_1\equiv0.9$ for all AR(1) cases. For $p\!=\!2,3,4$,
$\phi_i\equiv 0.9/p$ for the equal AR coefficients scenario and as per 
Table~\ref{tab:Phivalues} for the other cases.
\begin{table}[htbp]
\caption[PhiValuesSimulation]{$\bphi$s for the AR($p$)
  scenarios used in our simulations.}\label{tab:Phivalues}
\centering
\begin{tabular}{|c|c|c|}
\hline
$p$ & Decreasing  & Increasing \\  
\hline
2  & (0.6, 0.3) & (0.3, 0.6) \\
3  & (0.4, 0.3, 0.2) & (0.2, 0.3, 0.4) \\
4  & (0.3, 0.25, 0.20, 0.15)& (0.15, 0.20, 0.25, 0.3) \\
5  & (0.3, 0.25, 0.20, 0.10, 0.05) & (0.05, 0.10, 0.20, 0.25, 0.30) \\
\hline
\hline
$p$ & Decreasing-Increasing & Increasing-Decreasing  \\  
\hline
2  & (0.6,0.3) & (0.3,0.6)\\
3  & (0.4,0.1,0.4) & (0.1,0.7,0.1) \\
4  & (0.4,0.05,0.05,0.4)&(0.05,0.4,0.4,0.05) \\
5  & (0.25,0.15,0.1,0.15,0.25) & (0.1,0.15,0.4,0.15,0.1)\\
\hline
\end{tabular}
\end{table}
Finally,  $\sigma_0 = 1200, 800, 600$ to correspond to very low to
moderate CNR = 0.50, 0.75, and 1.0. By design, our SNRs were 10 times
our CNRs. Time series images were
simulated using~\eqref{eq:lm} and the setup of Fig.~\ref{fig:Hoff} and
Table \ref{tab:Phivalues}.

For each pixel, we fit \eqref{eq:lm}  with $\hat p$
chosen from \{0,1,2,3,4,5\} using BIC. SPMs were generated as per
Section~\ref{method:prelim}. Figure~\ref{fig:sim.phant} provides
sample SPMs and activation maps with the three top-performing methods: AR-FAST, AM-FAST and AS, for the three CNR
settings with AR(2) errors and coefficients decreasing with order.
\begin{figure}[h]
  \begin{center}
\includegraphics[width=0.98\columnwidth]{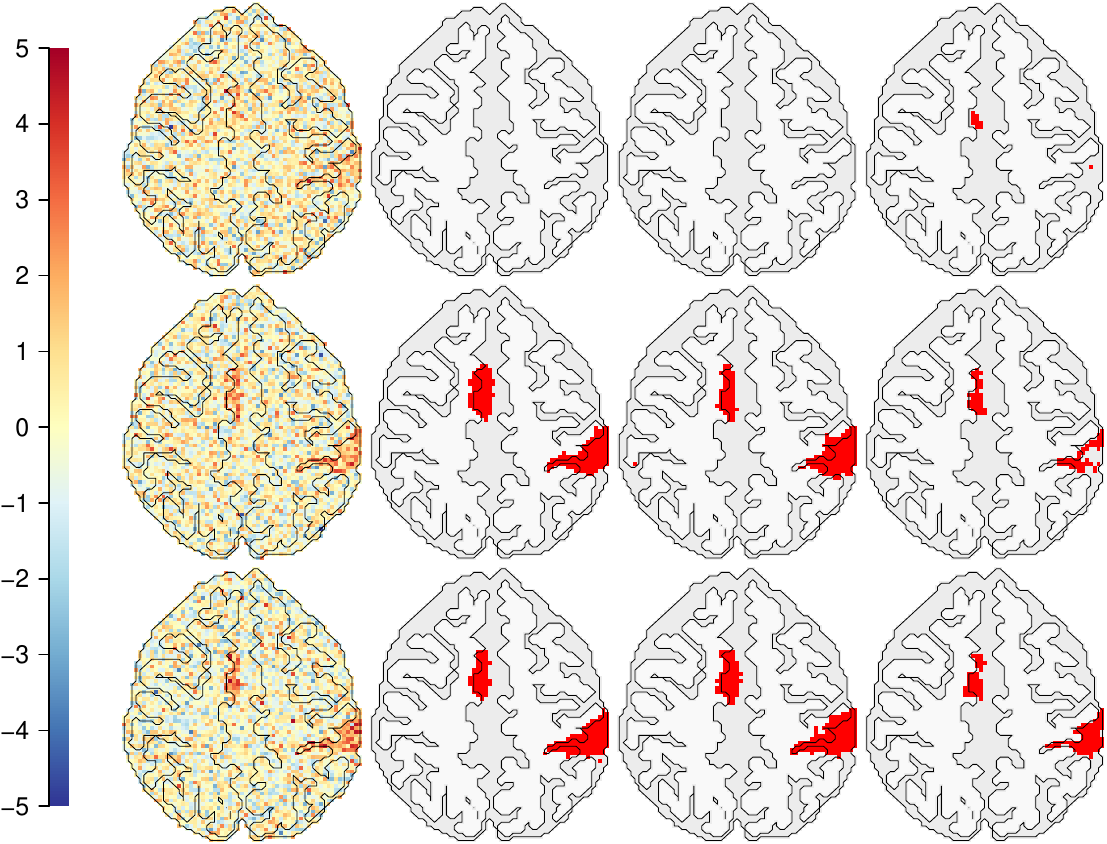}
\caption{Left-to-right: sample SPMs and best performers (AM-F$_{0.05}$,
  AR-F$_{0.05}$ and AS) for experiments with CNR = 0.5 (top row), CNR
  = 0.75 (middle) and CNR = 1.0 (bottom). AR-F$_{0.01}$ and
  AM-F$_{0.01}$ out-performed AS but are not displayed.}
\label{fig:sim.phant}
  \end{center}
\end{figure}
\begin{figure*}
  \begin{center}
\includegraphics[width=\textwidth]{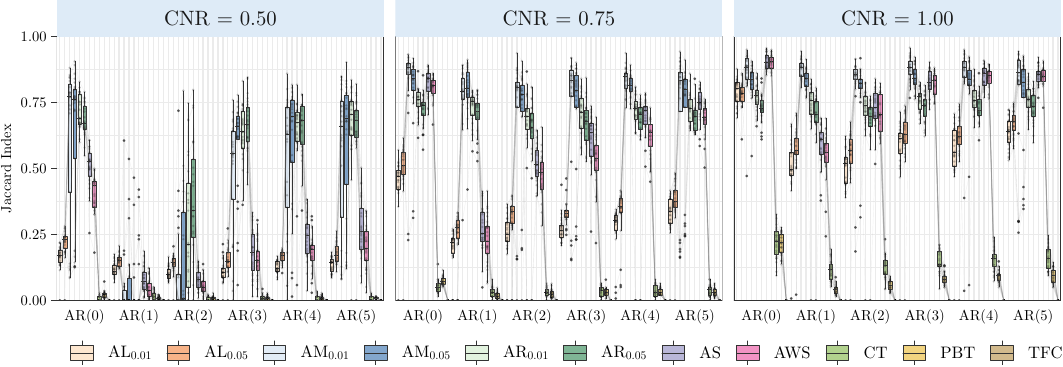}
\caption{Performance of the activation detection algorithms for different
  settings and when the AR coefficient decreases with order.
For clarity, each setting displays performance of the methods in the
same order as in the legend. 
}
\label{fig:Jaccard.index.decreasing}
  \end{center}
\end{figure*}
(See Figures~\ref{fig:ActMapDec1} for activation maps using all
methods and Figures~\ref{fig:ActMapDec1}--\ref{fig:ActMapInc2} for cases with
other $p$ and/or decreasing order.) All methods do poorly for CNR=0.5
in this example, but AS correctly finds a few activated voxels. AM-
and AR-FAST perform 
very well for CNR=0.75 and 1.0. Other methods -- in
particular ALL-FAST, CT, TFCE and PBT -- barely find  activation on
this example. 

To more fully understand performance, we replicated our experiment
25 times for each simulation setting. Figure~\ref{fig:ARp.estimated.order} shows performance in
estimating $p$, with over-estimation and mild under-estimation  
for large and small values of the true AR order. The pattern
broadly holds at all CNRs and $\phi$s. We now discuss performance of 
activation detection methods on SPMs obtained upon fitting AR($\hat p$). 
Figure~\ref{fig:Jaccard.index.decreasing} displays  overall
performance, in terms of $\omega$, of all methods in the
decreasing-$\phi$ cases. Performances with other types of $\phi$s
are similar
(Figure~\ref{fig:Jaccard.complete}). Figures~\ref{fig:ActivatedVoxels.complete}--\ref{fig:TrueNegative.complete} display the number of activated 
voxels and the false and true positive rates (FPRs and TPRs). 
FAST methods are among the top performers at all CNRs: AM- and
AR-FAST do very well for experiments with orders other than
AR(1). (All methods do poorly at CNR=0.50 in the AR(1) case: we 
surmise that is because of highly-correlated noise obtained with  $\phi=0.9$.)
AS and AWS are the next best  performers but CT, TFCE and PBT perform  
very poorly with  TPRs of below 25\%. FAST
methods have very high TPRs but FPRs of up to 
0.2, 0.3 and 0.5\% for ALL-, AM- and AR-FAST  for $\alpha = 0.05$, 
low CNR and $p$.
Overall, the slightly higher FPRs of the  FAST methods are overwhelmed
by their vastly higher TPRs, leading to their having the highest $\mJ$s.

Our threshold $\alpha$ has a role, with smaller values performing
better at higher CNRs and conversely. We suggest $\alpha\approx0.05$
for low-CNR tasks 
and $\alpha\approx0.01$ for high-CNR tasks. We suggest determining low or high CNR scenarios
accordingly as whether the upper percentile of the estimated
voxel-wise CNRs is less than the standard normal upper percentile
(2.33) or not: the upper percentile of the estimated CNRs 
is chosen to include an activated voxel (if such exists) in the
CNR determination. In our studies, ALL-FAST required more Step~\ref{step2}
iterations, but was computationally faster than AM-FAST AR-FAST and
had lower TPR, FPR and $\omega$, especially at low 
CNRs. Regardless, our methods were the fastest among all  
methods. 
\subsection{Performance in Null Activation Scenarios}
\subsubsection{Resting-State Dataset}
\label{resting}
A reviewer's suggestion led us to apply our FAST algorithms on SPMs
obtained upon 
fitting \eqref{eq:lm} to a  resting state 
dataset~\citet{barberetal2011,nebeletal2014}, with no  activation
identified even at $\alpha=0.05$. This zero FPR (when CNR=0) as
opposed to the small FPR in low-CNR experiments may be due
to  Step~\ref{stop} of our algorithm correctly allowing more
Step~\ref{step2} iterations to
attenuate stray high-valued SPM voxels -- termination is earlier
in the low CNR cases 
given the spatially located weaker-signal peaks. For higher CNRs,
Step~\ref{stop} again adaptively admits more smoothing iterations
that  dampen stray high values in the SPM without substantially
degrading the true high-signal peaks. 

\subsubsection{Null-simulated SPMs} Another reviewer was concerned
about multiple significance. Our use of $\alpha$ is to set a
threshold and not as a significance level: still, experiments on
simulated null SPMs~(Section~\ref{null}) indicate no practical
concerns.

\section{Activation During Perception of Noisy Speech}\label{sec:Apps}
The dataset, provided as  {\tt data6} in the AFNI tutorial~\citet{cox96},
is originally from an fMRI study~\citet{nathandbeauchamp11} where
\begin{figure}[h]
\subfloat[]{\includegraphics[width=0.5\columnwidth]{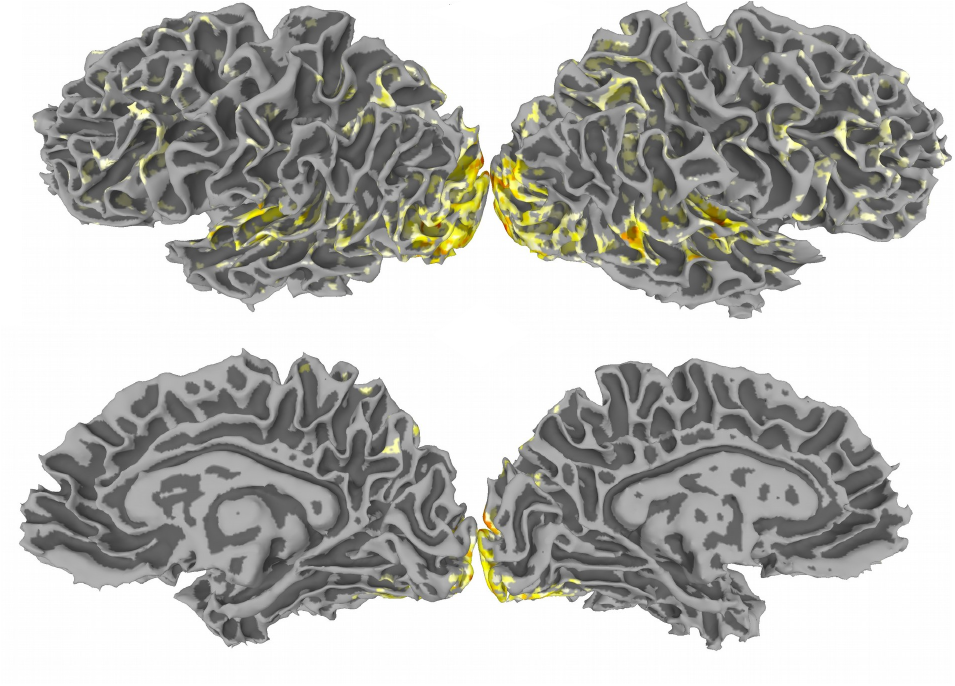}}
\subfloat[]{\includegraphics[width=0.5\columnwidth]{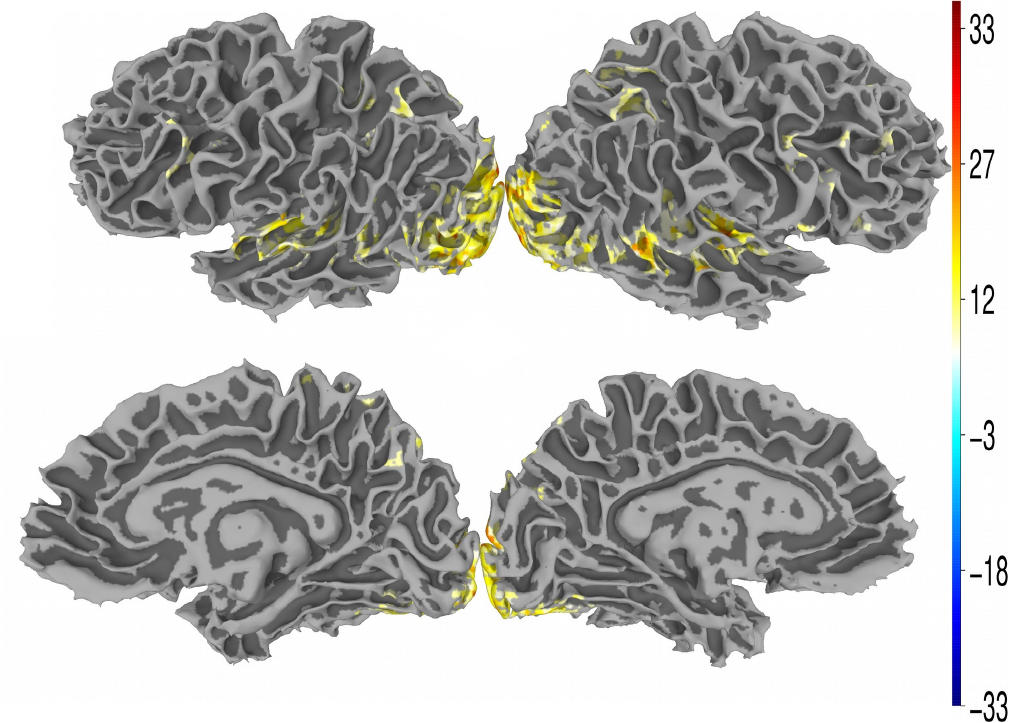}}
\caption{ AR-FAST-identified activation regions on SPMs obtained by
  fitting ~\ref{eq:lm} with   AR($\hat{p}$) to AFNI's {\tt data6} for
  (a) visual-reliable stimulus and (b) audio-reliable
  stimulus.}
\label{fig:AMSmoothingAFNI}
\end{figure}
a subject heard and saw a female volunteer speak words, separately, in
two different formats. The audio-reliable setting had the subject
clearly hear the spoken word but see a degraded image of the speaker
while the visual-reliable case had the subject clearly see the speaker
vocalize the word but the audio was of reduced quality.  There were
three experimental runs, each 
consisting of a randomized design of 10 blocks, equally divided into blocks of
audio-reliable and visual-reliable stimuli. 
$\mbox{T}_2^*$-weighted images with volumes of $80 \times 80 \times
33$ (with voxels of dimension $2.75 \times  2.75 \times 3.0\  mm^3$)
from  echo-planar sequences (TR=2s) 
were obtained  over $152$ time-points. Our interest was in determining 
activation corresponding to the audio
($H_0:\beta_{a}=0$) and visual
($H_0:\beta_{v}=0$) tasks.
At each voxel, we fitted AR models for 
$p=0,1,2,3,4,5$ and chose $p$ with the highest BIC. 
Figure \ref{fig:AMSmoothingAFNI} uses AFNI and Surface Mapping (SUMA)
to display activated regions obtained using AR-FAST on the SPM:
see  Figure~\ref{fig:Visual-Audio} for  maps drawn from ALL-FAST, AS,
AWS and CT. We used $\alpha = 0.01$ because of the high (greater than
4) upper percentile of the voxel-wise estimated CNRs. Most of the activation 
occurs in Brodmann areas 18 and 19 (BA18 and BA19)
which comprise the
occipital cortex 
and the extrastriate (or peristriate) cortex. In humans with normal
sight,  this area is for visual association where 
feature-extraction, shape recognition, attentional and multimodal
integrating functions occur. We also see increased activation in
the STS, which recent
studies~\citep{grossman2001brain} have related to  distinguishing
voices from environmental sounds, 
stories versus nonsensical speech, moving faces versus moving objects,
biological motion and so on. ALL-FAST performs similarly as AR-FAST,
while the other methods also identify the same regions but they identify
a lot more activated 
voxels, some of which appear to be false positives. Although a 
detailed analysis of the results of this study is beyond the purview
of this paper, we note that AR-FAST 
finds interpretable results even when applied to a single
subject high-level cognition experiment. 

\section{Discussion}\label{discussion}
We propose a new  fully automated fast adaptive 
smoothing and thresholding algorithm suite called FAST with the
ability to detect activation in  low-signal settings. Three
variants -- ALL-FAST, AM-FAST and AR-FAST -- are proposed with AR-FAST
generally recommended because of its consistent good performance
across a range of low-CNR experiments and real datasets.
 AM-FAST's performance, while good, is more
variable, while   ALL-FAST appears to undersmooth but performs better
in two-sided activation detection scenarios. Our methodology realistically
accounts for both spatial correlation structure and is also developed
under more accurate extreme value theory. Our algorithm suite is
implemented in a R package {\sc RFASTfMRI} available at
\href{https://github.com/ialmodovar/RFASTfMRI}{https://github.com/ialmodovar/RFASTfMRI} and is fully automated with one threshold choice for which we provide 
easily-implemented guidance. This contrasts with AS and AWS that
require setting  maximum smoothing bandwidths related to the expected
diameter of activated regions~\cite{polzehletal10} -- a determination
that may require considerable dexterity and is ambivalent when different-sized
activation regions are expected.

A reviewer has pointed to the joint detection-estimation
literature~\citep{maknietal05,maknietal06} where estimation of the HRF
and activation detection occur jointly. The FAST, AS and AWS algorithms
can be placed  in a related framework, with the distinction that the
estimation step is of a more spatially consistent (smoothed)
SPM. We also agree with another reviewer on other ways of ensuring
spatial contiguity such as through Markov Random Field
priors~\citep{ngetal12} and on the need to incorporate
approaches also allowing for nonhomogeneous smoothing.
Our algorithms converge by construction and are guaranteed to
terminate. They are also seen to have good overall performance  but,
as observed by a  reviewer, establishing the 
optimality   properties  and conditions and
assumptions governing such properties may 
provide more solid theoretical grounding for FAST and improve its
understanding and widen its applicability. 
Developing FAST for more  sophisticated time series  and spatial
models, including in the 
context of complex-valued 
fMRI~\citep{adrianetal18} as well as increased use of diagnostics in
understanding activation and cognition are other important research areas
and directions that would benefit from further attention.

\section*{Acknowledgment}
The authors sincerely thank four anonymous reviewers and an Associate
Editor whose helpful and insightful comments on an earlier
version of this article greatly improved its content.
\bibliographystyle{IEEEtran}
\bibliography{rm,references}

\begin{thebibliography}{10}
\providecommand{\url}[1]{#1}
\csname url@samestyle\endcsname
\providecommand{\newblock}{\relax}
\providecommand{\bibinfo}[2]{#2}
\providecommand{\BIBentrySTDinterwordspacing}{\spaceskip=0pt\relax}
\providecommand{\BIBentryALTinterwordstretchfactor}{4}
\providecommand{\BIBentryALTinterwordspacing}{\spaceskip=\fontdimen2\font plus
\BIBentryALTinterwordstretchfactor\fontdimen3\font minus
  \fontdimen4\font\relax}
\providecommand{\BIBforeignlanguage}[2]{{%
\expandafter\ifx\csname l@#1\endcsname\relax
\typeout{** WARNING: IEEEtran.bst: No hyphenation pattern has been}%
\typeout{** loaded for the language `#1'. Using the pattern for}%
\typeout{** the default language instead.}%
\else
\language=\csname l@#1\endcsname
\fi
#2}}
\providecommand{\BIBdecl}{\relax}
\BIBdecl

\bibitem{belliveauetal91}
J.~W. Belliveau, D.~N. Kennedy, R.~C. McKinstry, B.~R. Buchbinder, R.~M.
  Weisskoff, M.~S. Cohen, J.~M. Vevea, T.~J. Brady, and B.~R. Rosen,
  ``Functional mapping of the human visual cortex by magnetic resonance
  imaging,'' \emph{Science}, vol. 254, pp. 716--719, 1991.

\bibitem{kwongetal92}
K.~K. Kwong, J.~W. Belliveau, D.~A. Chesler, I.~E. Goldberg, R.~M. Weisskoff,
  B.~P. Poncelet, D.~N. Kennedy, B.~E. Hoppel, M.~S. Cohen, R.~Turner, H.-M.
  Cheng, T.~J. Brady, and B.~R. Rosen, ``Dynamic magnetic resonance imaging of
  human brain activity during primary sensory stimulation,'' \emph{Proceedings
  of the National Academy of Sciences of the United States of America},
  vol.~89, pp. 5675--5679, 1992.

\bibitem{bandettinietal93}
P.~A. Bandettini, A.~Jesmanowicz, E.~C. Wong, and J.~S. Hyde, ``Processing
  strategies for time-course data sets in functional {MRI} of the human
  brain,'' \emph{Magnetic Resonance in Medicine}, vol.~30, pp. 161--173, 1993.

\bibitem{fristonetal95}
K.~J. Friston, A.~P. Holmes, K.~J. Worsley, J.-B. Poline, C.~D. Frith, and
  R.~S.~J. Frackowiak, ``Statistical parametric maps in functional imaging: A
  general linear approach,'' \emph{Human Brain Mapping}, vol.~2, pp. 189--210,
  1995.

\bibitem{howsemanandbowtell98}
A.~M. Howseman and R.~W. Bowtell, ``Functional magnetic resonance imaging:
  imaging techniques and contrast mechanisms,'' \emph{Philosophical
  Transactional of the Royal Society, London}, vol. 354, pp. 1179--94, 1999.

\bibitem{pennyetal06}
W.~D. Penny, K.~J. Friston, J.~T. Ashburner, S.~J. Kiebel, and T.~E. Nichols,
  Eds., \emph{Statistical Parametric Mapping: The Analysis of Functional Brain
  Images}, 1st~ed.\hskip 1em plus 0.5em minus 0.4em\relax Academic Press, 2006.

\bibitem{lindquist08}
M.~A. Lindquist, ``The statistical analysis of {fMRI} data,'' \emph{Statistical
  Science}, vol.~23, no.~4, pp. 439--464, 2008.

\bibitem{lazar08}
N.~A. Lazar, \emph{The Statistical Analysis of Functional MRI Data}.\hskip 1em
  plus 0.5em minus 0.4em\relax Springer, 2008.

\bibitem{ashby11}
F.~G. Ashby, \emph{Statistical Analysis of {fMRI} Data}.\hskip 1em plus 0.5em
  minus 0.4em\relax MIT Press, 2011.

\bibitem{ogawaetal90a}
S.~Ogawa, T.~M. Lee, A.~S. Nayak, and P.~Glynn, ``Oxygenation-sensitive
  contrast in magnetic resonance image of rodent brain at high magnetic
  fields,'' \emph{Magnetic Resonance in Medicine}, vol.~14, pp. 68--78, 1990.

\bibitem{ogawaetal90b}
S.~Ogawa, T.~M. Lee, A.~R. Kay, and D.~W. Tank, ``Brain magnetic resonance
  imaging with contrast dependent on blood oxygenation,'' \emph{Proceedings of
  the National Academy of Sciences, {USA}}, vol.~87, no.~24, pp. 9868--9872,
  1990.

\bibitem{worsleyetal02}
K.~J. Worsley, C.~H. Liao, J.~Aston, V.~Petre, G.~H. Duncan, F.~Morales, and
  A.~C. Evans, ``A general statistical analysis for fmri data,''
  \emph{NeuroImage}, vol.~15, pp. 1--15, 2002.

\bibitem{fristonetal98}
K.~J. Friston, P.~Fletcher, O.~Josephs, A.~Holmes, M.~Rugg, and R.~Turner,
  ``Event-related {fMRI}: characterizing differential responses,''
  \emph{Neuroimage}, vol.~7, no.~1, pp. 30--40, 1998.

\bibitem{glover99}
G.~H. Glover, ``Deconvolution of impulse response in event-related bold fmri,''
  \emph{NeuroImage}, vol.~9, pp. 416--429, 1999.

\bibitem{buxtonetal04}
R.~B. Buxton, K.~Uluda{\u{g}}, D.~J. Dubowitz, and T.~T. Liu, ``Modeling the
  hemodynamic response to brain activation,'' \emph{Neuroimage}, vol.~23, pp.
  S220--S233, 2004.

\bibitem{fristonetal90}
K.~J. Friston, C.~D. Frith, P.~F. Liddle, R.~J. Dolan, A.~A. Lammertsma, and
  R.~S.~J. Frackowiak, ``The relationship between global and local changes in
  {PET} scans,'' \emph{Journal of Cerebral Blood Flow and Metabolism}, vol.~10,
  pp. 458--466, 1990.

\bibitem{friston1994statistical}
K.~J. Friston, A.~P. Holmes, K.~J. Worsley, J.-P. Poline, C.~D. Frith, and
  R.~S. Frackowiak, ``Statistical parametric maps in functional imaging: a
  general linear approach,'' \emph{Human brain mapping}, vol.~2, no.~4, pp.
  189--210, 1994.

\bibitem{worsley1995analysis}
K.~J. Worsley and K.~J. Friston, ``Analysis of {fMRI} time-series
  revisited—again,'' \emph{Neuroimage}, vol.~2, no.~3, pp. 173--181, 1995.

\bibitem{genoveseetal02}
C.~R. Genovese, N.~A. Lazar, and T.~Nichols, ``Thresholding of statistical maps
  in functional neuroimaging using the false discovery rate,''
  \emph{Neuroimage}, vol.~15, pp. 870--878, 2002.

\bibitem{hajnaletal94}
J.~V. Hajnal, R.~Myers, A.~Oatridge, J.~E. Schweiso, J.~R. Young, and G.~M.
  Bydder, ``Artifacts due to stimulus-correlated motion in functional imaging
  of the brain,'' \emph{Magnetic Resonance in Medicine}, vol.~31, pp. 283--291,
  1994.

\bibitem{biswaletal96}
B.~Biswal, A.~E. DeYoe, and J.~S. Hyde, ``Reduction of physiological
  fluctuations in {fMRI} using digital filters.'' \emph{Magnetic Resonance in
  Medicine}, vol.~35, no.~1, pp. 107--113, January 1996.

\bibitem{woodetal98}
R.~P. Wood, S.~T. Grafton, J.~D.~G. Watson, N.~L. Sicotte, and J.~C. Mazziotta,
  ``Automated image registration. ii. intersubject validation of linear and
  non-linear models,'' \emph{Journal of Computed Assisted Tomography}, vol.~22,
  pp. 253--265, 1998.

\bibitem{gullapallietal05}
R.~P. Gullapalli, R.~Maitra, S.~Roys, G.~Smith, G.~Alon, and J.~Greenspan,
  ``Reliability estimation of grouped functional imaging data using penalized
  maximum likelihood,'' \emph{Magnetic Resonance in Medicine}, vol.~53, pp.
  1126--1134, 2005.

\bibitem{maitraetal02}
R.~Maitra, S.~R. Roys, and R.~P. Gullapalli, ``Test-retest reliability
  estimation of functional mri data,'' \emph{Magnetic Resonance in Medicine},
  vol.~48, pp. 62--70, 2002.

\bibitem{maitra09b}
R.~Maitra, ``Assessing certainty of activation or inactivation in test-retest
  {fMRI} studies,'' \emph{Neuroimage}, vol.~47, no.~1, pp. 88--97, 2009.

\bibitem{saadetal09}
Z.~S. Saad, D.~R. Glen, G.~Chen, M.~S. Beauchamp, R.~Desai, and R.~W. Cox, ``A
  new method for improving functional-to-structural mri alignment using local
  pearson correlation,'' \emph{NeuroImage}, vol.~44, pp. 839--848, 2009.

\bibitem{chenandsmall07}
E.~E. Chen and S.~L. Small, ``Test-retest reliability in {fMRI} of language:
  Group and task effects,'' \emph{Brain and Language}, vol. 102, no.~2, pp.
  176--85, 2007.

\bibitem{kryter94}
K.~D. Kryter, \emph{The handbook of hearing and the effects of noise:
  Physiology, psychology, and public health}.\hskip 1em plus 0.5em minus
  0.4em\relax Academic Press, 1994.

\bibitem{hauser96}
M.~D. Hauser, \emph{The evolution of communication}.\hskip 1em plus 0.5em minus
  0.4em\relax MIT press, 1996.

\bibitem{dupontandluettin00}
S.~Dupont and J.~Luettin, ``Audio-visual speech modeling for continuous speech
  recognition,'' \emph{IEEE {T}ransactions on {M}ultimedia}, vol.~2, no.~3, pp.
  141--151, 2000.

\bibitem{nathandbeauchamp11}
A.~R. Nath and M.~S. Beauchamp, ``Dynamic changes in superior temporal sulcus
  connectivity during perception of noisy audiovisual speech,'' \emph{The
  Journal of Neuroscience}, vol.~31, no.~5, p. 1704 –1714, 2011.

\bibitem{sumbyandpollack54}
W.~H. Sumby and I.~Pollack, ``Visual contribution to speech intelligibility in
  noise,'' \emph{The {J}ournal of the {A}coustical {S}ociety of {A}merica},
  vol.~26, no.~2, pp. 212--215, 1954.

\bibitem{steinandmeredith93}
B.~E. Stein and M.~A. Meredith, \emph{The merging of the senses.}\hskip 1em
  plus 0.5em minus 0.4em\relax The MIT Press, 1993.

\bibitem{helleretal06}
R.~Heller, D.~Stanley, D.~Yekutieli, N.~Rubin, and Y.~Benjamini,
  ``Cluster-based analysis of {fMRI} data,'' \emph{NeuroImage}, vol.~33, no.~2,
  pp. 599--608, nov 2006.

\bibitem{benjaminiandheller07}
Y.~Benjamini and R.~Heller, ``False discovery rates for spatial signals,''
  \emph{Journal of the American Statistical Association}, vol. 102, no. 480,
  pp. 1272--1281, 2007.

\bibitem{smithandfahrmeir07}
M.~Smith and L.~Fahrmeir, ``Spatial {B}ayesian variable selection with
  application to functional {M}agnetic {R}esonance {I}maging,'' \emph{Journal
  of the American Statistical Association}, vol. 102, no. 478, pp. 417--431,
  2007.

\bibitem{smithandnichols09}
S.~M. Smith and T.~E. Nichols, ``Threshold-free cluster enhancement: Addressing
  problems of smoothing, threshold dependence and localisation in cluster
  inference,'' \emph{Neuroimage}, vol.~44, pp. 83--98, 2009.

\bibitem{wooetal14}
C.-W. Woo, A.~Krishnan, and T.~D. Wager, ``Cluster-extent based thresholding in
  {fMRI} analyses: Pitfalls and recommendations,'' \emph{Neuroimage}, vol.~91,
  p. 412–419, 2014.

\bibitem{tabelowetal06}
K.~Tabelow, J.~Polzehl, H.~U. Voss, and V.~Spokoiny, ``Analyzing
  {fMRI}experiments with structural adaptive smoothing procedures,''
  \emph{NeuroImage}, vol.~33, no.~1, pp. 55--62, 2006.

\bibitem{polzehletal10}
J.~Polzehl, H.~U. Voss, and K.~Tabelow, ``Structural adaptive segmentation for
  statistical parametric mapping,'' \emph{NeuroImage}, vol.~52, no.~2, pp.
  515--523, 2010.

\bibitem{monti11}
M.~M. Monti, ``Statistical analysis of {fMRI} time-series: A critical review of
  the glm approach,'' \emph{Frontiers in Human Neuroscience}, vol.~5, no.
  00028, pp. 1--13, 2011.

\bibitem{luoandnichols03}
W.-L. Luo and T.~E. Nichols, ``Diagnosis and exploration of massively
  univariate neuroimaging models,'' \emph{Neuroimage}, vol.~19, no.~3, pp.
  1014--1032, 2003.

\bibitem{lohetal08}
J.~M. Loh, M.~A. Lindquist, and T.~D. Wager, ``Residual analysis for detecting
  mis-modeling in {fMRI},'' \emph{Statistica Sinica}, pp. 1421--1448, 2008.

\bibitem{lindquistetal09}
M.~Lindquist, J.~Loh, L.~Atlas, and T.~Wager, ``Modeling the hemodynamic
  response function in {fMRI}: {E}fficiency, bias and mis-modeling,''
  \emph{Neuroimage}, vol.~45, no.~1, p. S187–S196, 2009.

\bibitem{schwarz78}
G.~Schwarz, ``Estimating the dimensions of a model,'' \emph{Annals of
  Statistics}, vol.~6, pp. 461--464, 1978.

\bibitem{shumwayandstoffer06}
R.~H. Shumway and D.~S. Stoffer, \emph{Time Series Analysis and Its
  Applications}, 2nd~ed.\hskip 1em plus 0.5em minus 0.4em\relax Springer, 2006.

\bibitem{resnick13}
S.~I. Resnick, \emph{Extreme values, regular variation and point
  processes}.\hskip 1em plus 0.5em minus 0.4em\relax Springer, 2013.

\bibitem{maitra19}
R.~{Maitra}, ``{Efficient Bandwidth Estimation in Two-dimensional Filtered
  Backprojection Reconstruction},'' \emph{arXiv e-prints}, p. arXiv:1803.08027,
  Mar 2018.

\bibitem{davidandnagaraja03}
H.~A. David and H.~N. Nagaraja, \emph{Order Statistics}.\hskip 1em plus 0.5em
  minus 0.4em\relax Hoboken, New Jersey: John Wiley and Sons, Inc., 2003.

\bibitem{vonMises36}
R.~Von~Mises, ``La distribution de la plus grande de n valeurs,'' \emph{Rev.
  math. Union interbalcanique}, vol.~1, no.~1, 1936.

\bibitem{maitraandosullivan98}
R.~Maitra and F.~O'Sullivan, ``Variability assessment in {P}ositron {E}mission
  {T}omography and related generalized deconvolution models,'' \emph{Journal of
  the American Statistical Association}, vol.~93, pp. 1340--1355, 1998.

\bibitem{garcia10}
D.~Garcia, ``Robust smoothing of gridded data in one and higher dimensions with
  missing values,'' \emph{Computational statistics \& data analysis}, vol.~54,
  no.~4, pp. 1167--1178, 2010.

\bibitem{jaccard1901}
P.~Jaccard, ``\`{E}tude comparative de la distribution florale dans une portion
  des alpes et des jura,'' \emph{Bulletin del la Soci\`{e}t\`{e} Vaudoise des
  Sciences Naturelles}, vol.~37, p. 547–579, 1901.

\bibitem{maitra10}
R.~Maitra, ``A re-defined and generalized percent-overlap-of-activation measure
  for studies of {fMRI} reproducibility and its use in identifying outlier
  activation maps,'' \emph{Neuroimage}, vol.~50, no.~1, pp. 124--135, 2010.

\bibitem{hoaglinetal00}
D.~Hoaglin, F.~Mosteller, and J.~Tukey, \emph{Understanding Robust and
  Exploratory Data Analysis}.\hskip 1em plus 0.5em minus 0.4em\relax New York,
  NY: John Wiley and Sons, Inc., 2000.

\bibitem{cox96}
R.~W. Cox, ``{AFNI}: software for analysis and visualization of functional
  magnetic resonance neuroimages,'' \emph{Computers and Biomedical research},
  vol.~29, no.~3, pp. 162--173, 1996.

\bibitem{winkleretal14}
A.~M. Winkler, G.~R. Ridgway, M.~A. Webster, S.~M. Smith, and T.~E. Nichols,
  ``Permutation inference for the general linear model,'' \emph{Neuroimage},
  vol.~92, pp. 381--397, 2014.

\bibitem{polzehl2006propagation}
J.~Polzehl and V.~Spokoiny, ``Propagation-separation approach for local
  likelihood estimation,'' \emph{Probability Theory and Related Fields}, vol.
  135, no.~3, pp. 335--362, 2006.

\bibitem{hoffmanetal90}
E.~J. Hoffman, P.~D. Cutler, W.~M. Digby, and J.~C. Mazziotta, ``3-d phantom to
  simulate cerebral blood flow and metabolic images for pet,'' \emph{IEEE
  Transactions on Nuclear Science}, vol.~37, pp. 616--620, 1990.

\bibitem{barberetal2011}
A.~D. Barber, P.~Srinivasan, S.~E. Joel, B.~S. Caffo, J.~J. Pekar, and S.~H.
  Mostofsky, ``Motor “dexterity”?: evidence that left hemisphere
  lateralization of motor circuit connectivity is associated with better motor
  performance in children,'' \emph{Cerebral Cortex}, vol.~22, no.~1, pp.
  51--59, 2011.

\bibitem{nebeletal2014}
M.~B. Nebel, S.~E. Joel, J.~Muschelli, A.~D. Barber, B.~S. Caffo, J.~J. Pekar,
  and S.~H. Mostofsky, ``Disruption of functional organization within the
  primary motor cortex in children with autism,'' \emph{Human brain mapping},
  vol.~35, no.~2, pp. 567--580, 2014.

\bibitem{grossman2001brain}
E.~Grossman and R.~Blake, ``Brain activity evoked by inverted and imagined
  biological motion,'' \emph{Vision research}, vol.~41, no.~10, pp. 1475--1482,
  2001.

\bibitem{maknietal05}
S.~Makni, P.~Ciuciu, J.~Idier, and J.-B. Poline, ``Joint detection-estimation
  of brain activity in functional {MRI}: a multichannel deconvolution
  solution,'' \emph{IEEE Transactions on Signal Processing}, vol.~53, no.~9,
  pp. 3488--3502, 2005.

\bibitem{maknietal06}
------, ``Joint detection-estimation of brain activity in {fMRI} using an
  autoregressive noise model,'' in \emph{3rd IEEE International Symposium on
  Biomedical Imaging: Nano to Macro}.\hskip 1em plus 0.5em minus 0.4em\relax
  IEEE, 2006, pp. 1048--1051.

\bibitem{ngetal12}
B.~Ng, G.~Hamarneh, and R.~Abugharbieh, ``Modeling brain activation in f{MRI}
  using group {MRF},'' \emph{IEEE Transactions on Medical Imaging}, vol.~31,
  no.~5, pp. 1113--1123, 2012.

\bibitem{adrianetal18}
D.~W. Adrian, R.~Maitra, and D.~B. Rowe, ``Complex-valued time series modeling
  for improved activation detection in f{MRI} studies,'' \emph{Annals of
  Applied Statistics}, vol.~12, no.~3, pp. 1451--1478, 2018.

\end{thebibliography}
\renewcommand\thefigure{S\arabic{figure}}\setcounter{figure}{0}
\renewcommand\thetable{S\arabic{table}}\setcounter{table}{0}
\renewcommand\thesection{S\arabic{section}}\setcounter{section}{0}
\renewcommand\thesubsection{S\arabic{section}.\arabic{subsection}}
\renewcommand\theequation{E\arabic{equation}}
\newpage
\section*{Supplementary Materials}
\section{Model-based smoothing}
\label{mb.smooth}
In this section, we describe the model-based smoothing used in the
AM-FAST algorithm. The starting point of our methodology is a SPM
$\bGamma$ and our smoothed estimator is given by $\hat\bGamma = \bH_{\bbeta}\bGamma$
where $\bH_{\bbeta}$ is a smoothing matrix with parameter $\bbeta$. We
write $\bH_{\bbeta}$ = $\bLambda_{\bbeta}(\bLambda_{\bbeta}+\bR)^{-1}$ where
\begin{equation}
  \bLambda_{\bbeta}^{-1} =  \beta_1 \bJ_{n_1} \otimes \bI_{n_1} \otimes \bI_{n_1} + \beta_2 \bI_{n_1}
\otimes \bJ_{n_2} \otimes \bI_{n_2} + 
                  \beta_3 \bI_{n_1} \otimes \bI_{n_2} \otimes
                  \bJ_{n_3}, 
\end{equation}
with  $\otimes$ denoting the Kronecker product between matrices,
$\bI_q$ is a $q\times q$ identity matrix and $\bJ_q$ is a $q\times q$
tridiagonal matrix of the form  
\begin{equation}
\left( \begin{array}{rrrrrr}
1 & -1 & 0 & 0 & 0 & \cdots\\
-1 & 2 & -1 & 0 & 0 & \cdots\\
0 & -1 & 2 & -1 & 0 & \cdots\\
 & & & \ddots & & \\
\cdots & 0 & 0 & -1 & 2 & -1\\
\cdots & 0 & 0 & 0 & -1 & 1\\
\end{array}
\right)\\
\end{equation}
with $i$th eigenvalue given by $j_i = 2-2\cos((i-1)\pi/q)$ for $i=1,2,\ldots,q$.
In the above representation, $\bLambda_{\bbeta}^{-1}$ is split into three
components to represent the three axes of the volume image. Indeed
$\bLambda_{\bbeta}$ represents the inverse of the dispersion matrix of a 
3-D first-order Gaussian Markov Random Field with a neighborhood
structure for each interior voxel given by the six nearest neighbors
(one nearest in each of the two directions in each plane). Also
$\bbeta = (\beta_1,\beta_2,\beta_3)$ with $\beta_i$ measuring the
strength of the interaction between the neighbors in the $i$th plane.
For the edge voxels, the neighborhood
structure is similar and given by the nearest neighboring voxels in
each of the two directions in each plane, provided they exist in the
imaging grid.

Our smoothed SPM is  a penalized maximum likelihood estimator and is
the same as the {\em maximum a posterioiri} (MAP) or 
posterior mean estimator of $\bmu$ when we have $\bGamma\mid\bmu \sim
N(\bmu,\bR)$ and $\bmu\sim N(\bzero,\bLambda_{\bbeta})$. So, the
regularization parameter $\bbeta$ may be chosen using an empirical
Bayes estimate. This estimate is given by
\begin{equation}
\hat\bbeta = \argmin_{\bbeta} \left[\log\abs{\bLambda_{\bbeta} + \bR}
  + \bGamma'(\bLambda_{\bbeta} + \bR)^{-1}\bGamma\right].
\label{betamax}
\end{equation}
We consider $\bR = \sigma^2\bI_{n_1,n_2,n_2}
\equiv\sigma^2\bI_{n_1}\otimes\bI_{n_2}\otimes\bI_{n_3}$ in our application.
 Then the eigenvalues of
$\bLambda_{\bbeta}+\sigma^2\bI$ are all real and given by
\begin{equation}
\sigma^2 + \left\{\beta_1\left[2-2\cos\left(\frac{(i-1)\pi}{n_1}\right)\right] +
\beta_2 \left[2-2\cos\left(\frac{(j-1)\pi}{n_2}\right)\right]  +
\beta_2 \left[2-2\cos\left(\frac{(k-1)\pi}{n_3}\right)\right]\right\}^{-1}
\label{eigens}
\end{equation}
Let $\bC_q'$ be the $q\times q$ Type-2 1D Discrete Cosine Transform
(DCT) matrices and $\bC_q$ be the inverse DCT. Then $\bC'=\bC'_{n_1}\!\otimes
\bC'_{n_2}\!\otimes\!\bC'_{n_3}$ is the Type-2 DCT in the 3D volume of
$n_1\times n_2\times n_3$ voxels and $\bC=\bC_{n_1}\!\otimes
\bC_{n_2}\!\otimes\!\bC_{n_3}$ is the inverse 3D DCT. Then
$\bW = \bC'\bGamma$ is the 3D DCT of $\bGamma$ with entries
$w_{ijk}$. Then \eqref{betamax} reduces to
\begin{equation}
  \begin{split}
    \hat\bbeta  =  \argmax_{\sigma,\bbeta}
    \sum_{i,j,k} &\left[ \log\left(\sigma^2 + \left\{\beta_1\left[2-2\cos\left(\frac{(i-1)\pi}{n_1}\right)\right] +
\beta_2 \left[2-2\cos\left(\frac{(j-1)\pi}{n_2}\right)\right]  +
\beta_2 \left[2-2\cos\left(\frac{(k-1)\pi}{n_3}\right)\right]\right\}^{-1}
\right)\right.\\
&\qquad\qquad\left.
+ \frac{w^2_{ijk}}{\sigma^2 + \left\{\beta_1\left[2-2\cos\left(\frac{(i-1)\pi}{n_1}\right)\right] +
\beta_2 \left[2-2\cos\left(\frac{(j-1)\pi}{n_2}\right)\right]  +
\beta_2 \left[2-2\cos\left(\frac{(k-1)\pi}{n_3}\right)\right]\right\}^{-1}
}
\right].
\end{split}
\end{equation}
Further, the smoothed estimator $\bGamma$ can be obtained by the inverse DCT 
of $(\bL -\sigma^2\bI_n)\bL^{-1}\bW$, where $\bL$ is the diagonal matrix with
the diagonal element corresponding to the $(i,j,k)$th voxel given by
\eqref{eigens}. 
\begin{figure}[h]
  \vspace{-0.3in}
\begin{center}
  \mbox{\subfloat{\includegraphics[width=0.14\textwidth]{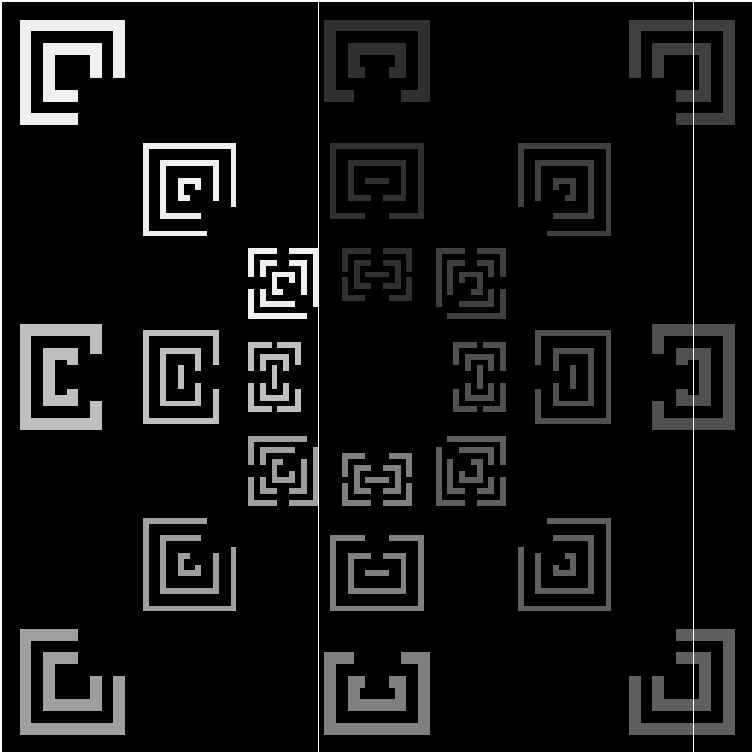}}
    \hspace{0.001in}
    \subfloat{\includegraphics[width=0.14\textwidth]{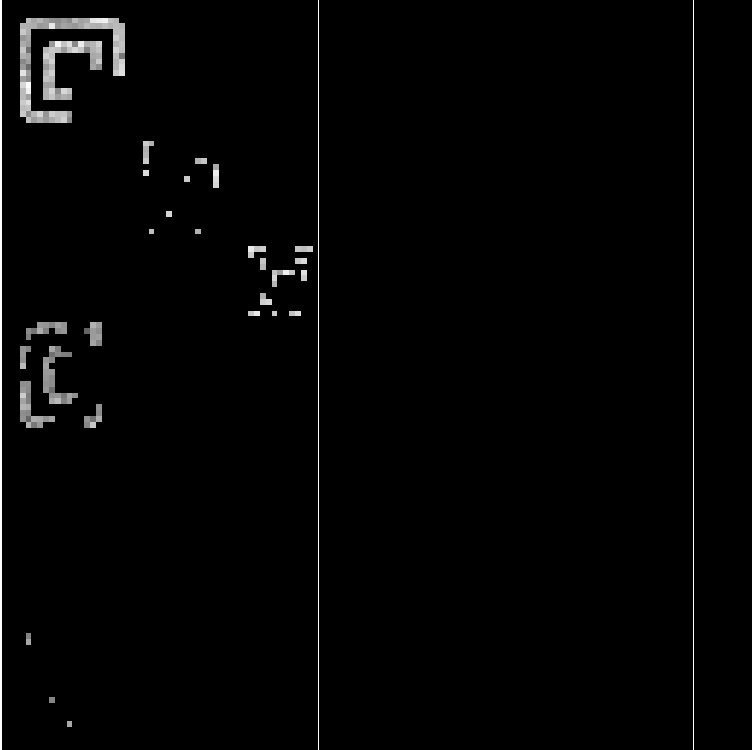}}
    \hspace{0.001in}
    \subfloat{\includegraphics[width=0.14\textwidth]{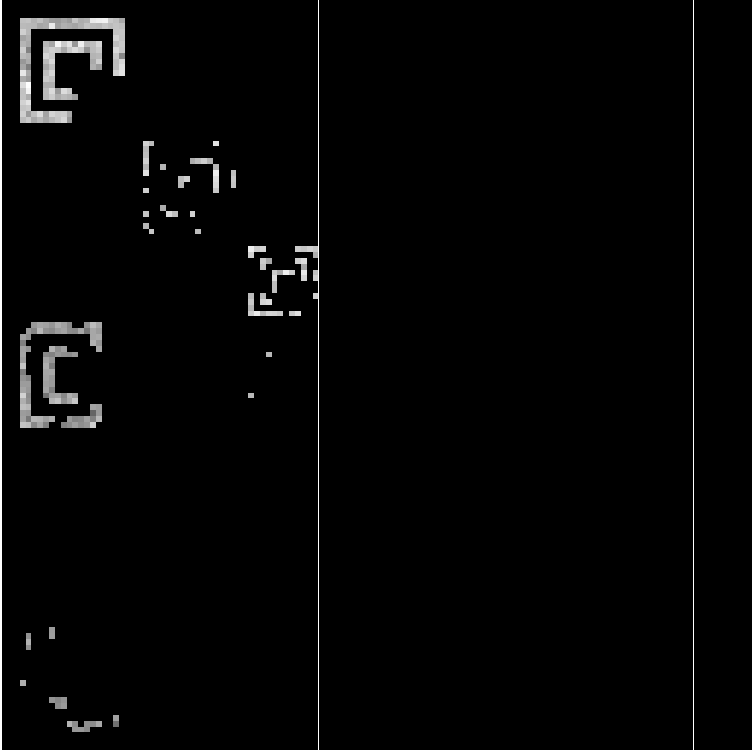}}
    \hspace{0.001in}
    \subfloat{\includegraphics[width=0.14\textwidth]{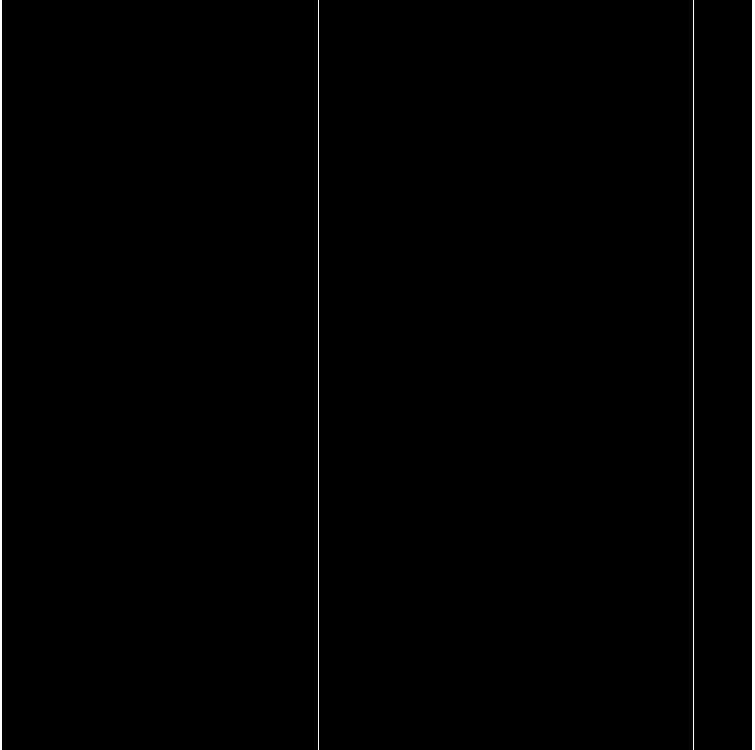}}
    \hspace{0.001in}
    \subfloat{\includegraphics[width=0.14\textwidth]{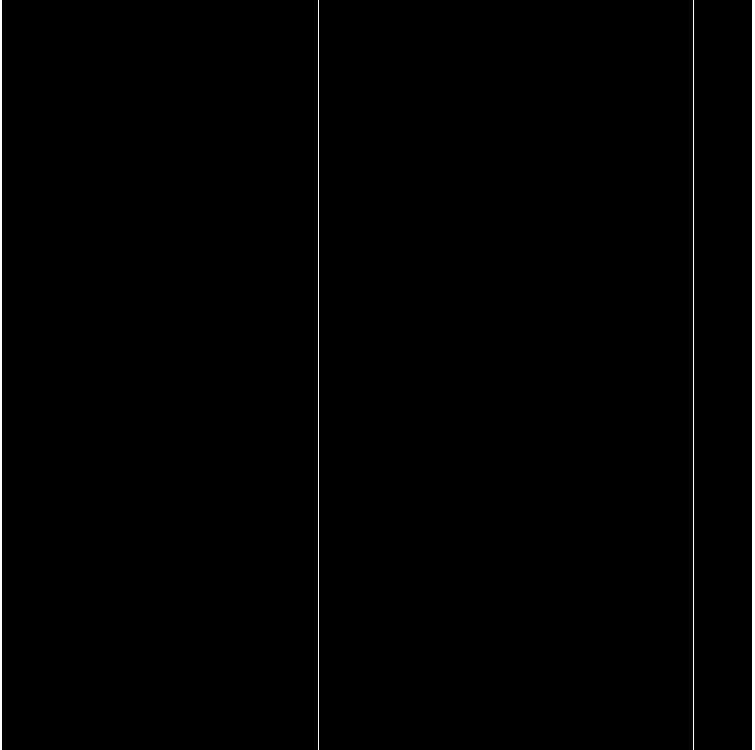}}
    \hspace{0.001in}
    \subfloat{\includegraphics[width=0.14\textwidth]{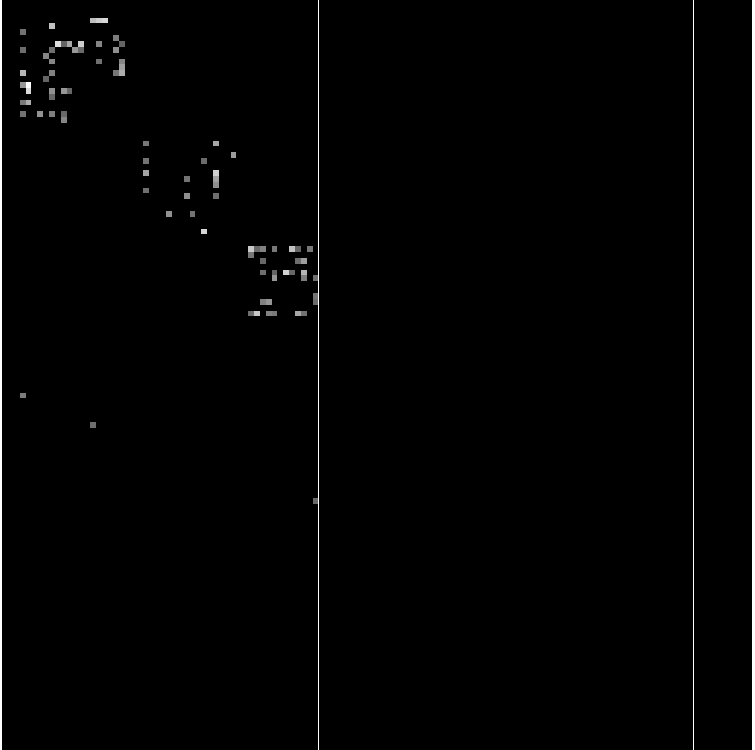}}
  }
  \mbox{    
    \subfloat{\includegraphics[width=0.14\textwidth]{figures/Motif-2-AR-FAST-Alpha-005-128_x_128}}
    \hspace{0.001in}
    \subfloat{\includegraphics[width=0.14\textwidth]{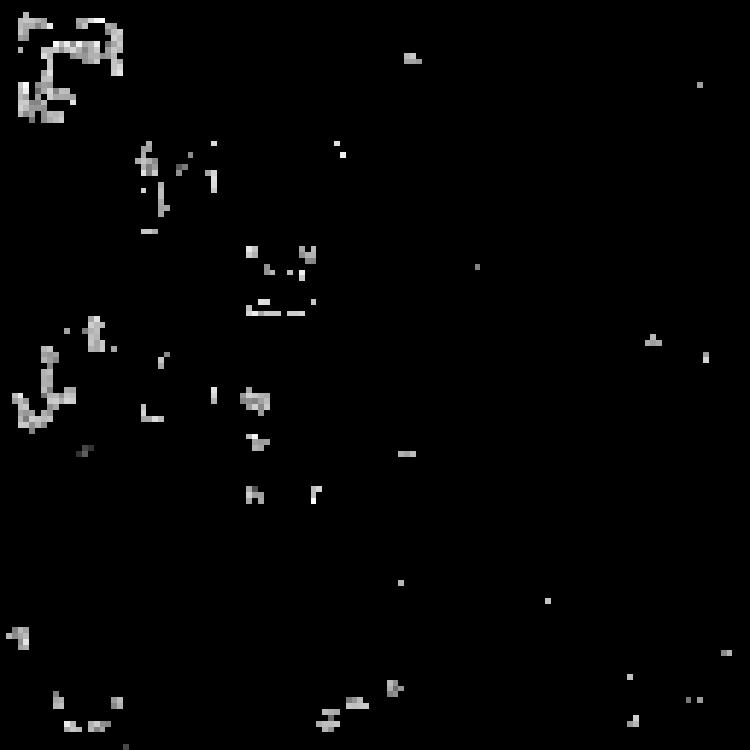}}
    \hspace{0.001in}
    \subfloat{\includegraphics[width=0.14\textwidth]{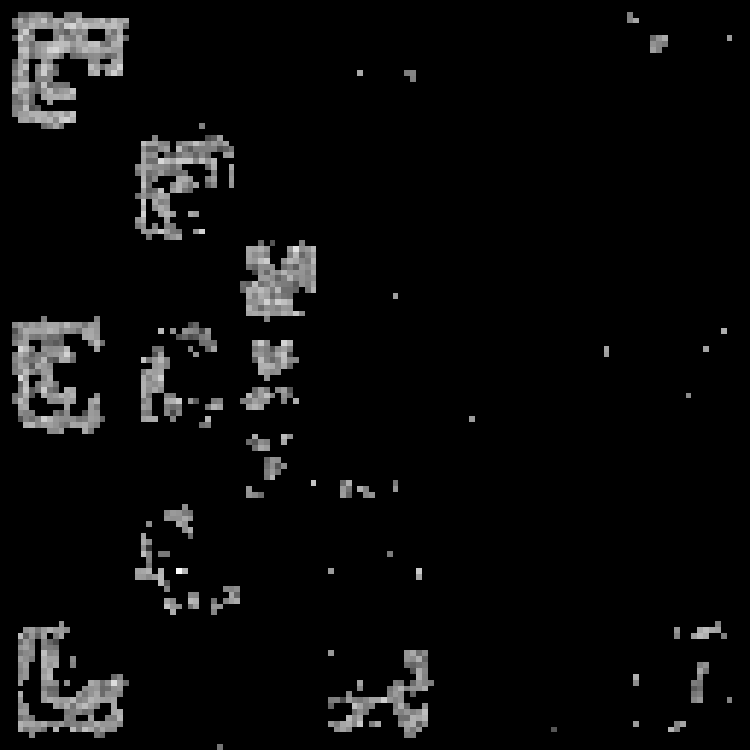}}
    \hspace{0.001in}
    \subfloat{\includegraphics[width=0.14\textwidth]{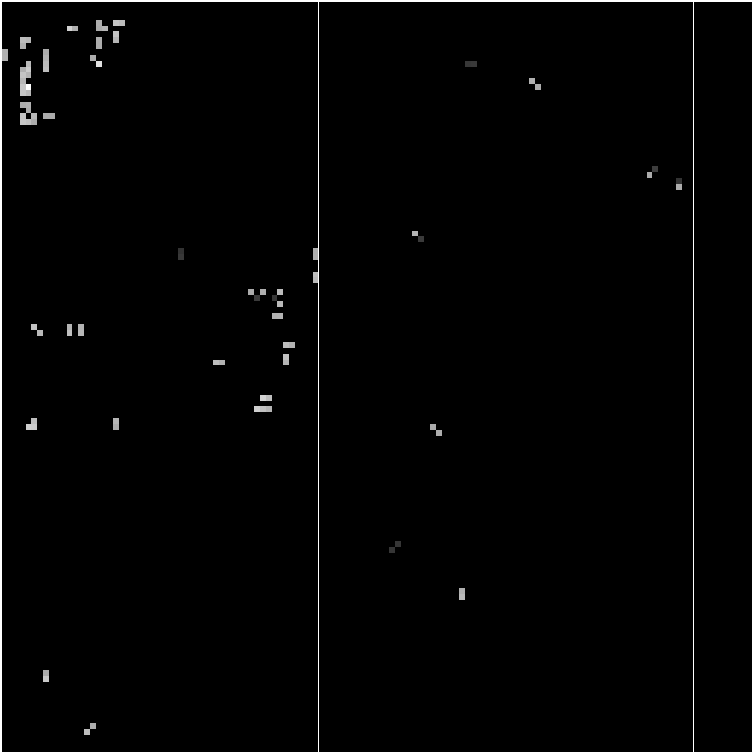}}
    \hspace{0.001in}
    \subfloat{\includegraphics[width=0.14\textwidth]{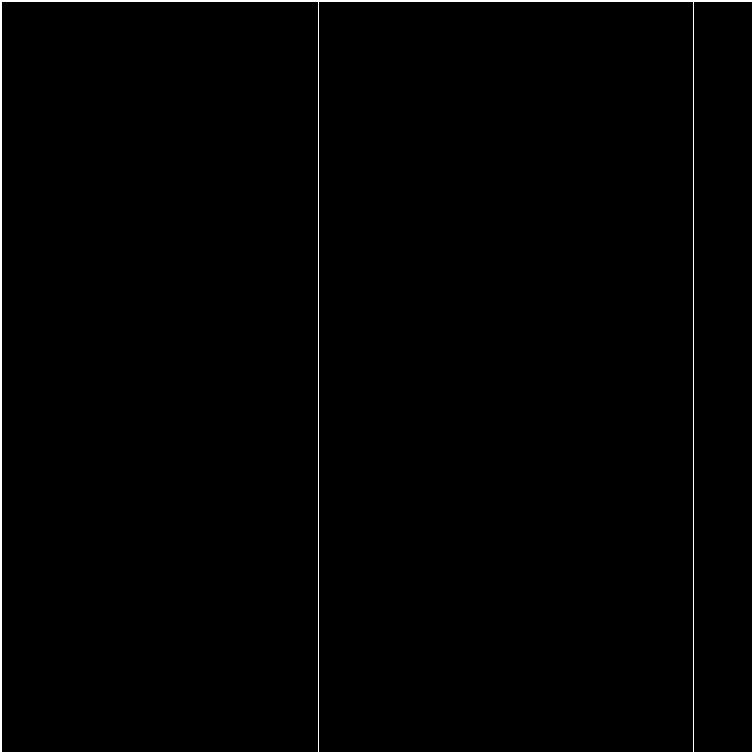}}
    \hspace{0.001in}
    \subfloat{\includegraphics[width=0.14\textwidth]{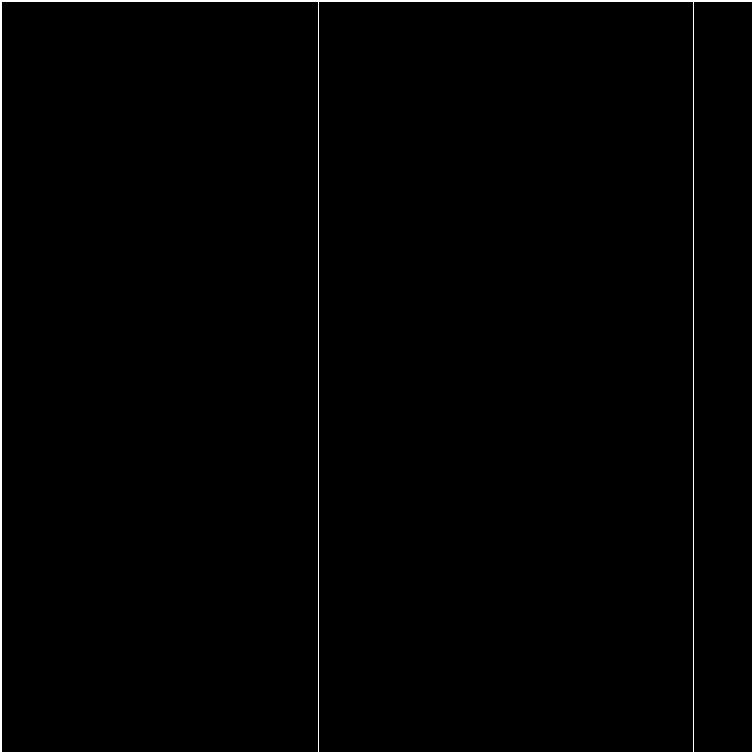}}}
  \vspace{0.02in}  \hrule
  \mbox{\subfloat{\includegraphics[width=0.14\textwidth]{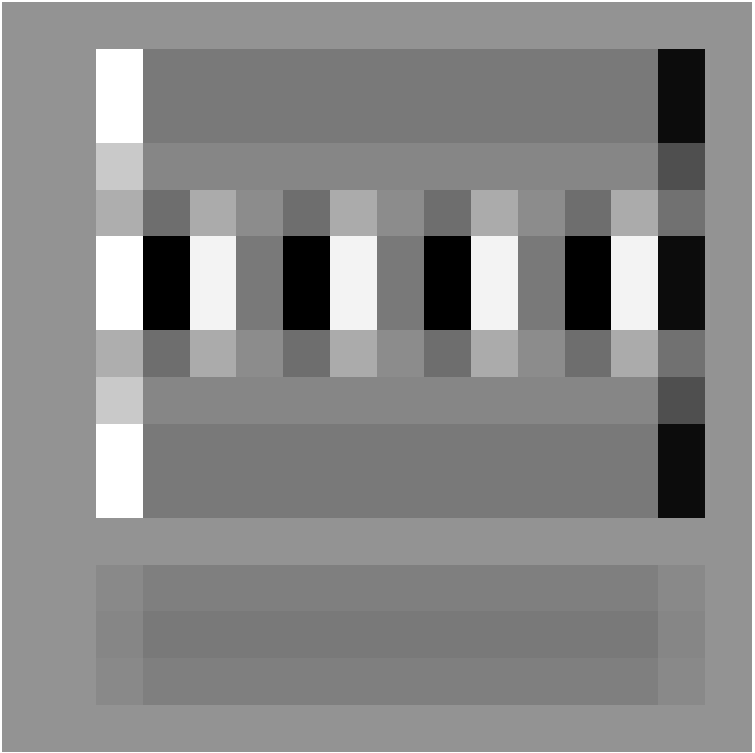}}
    \hspace{0.001in}
    \subfloat{\includegraphics[width=0.14\textwidth]{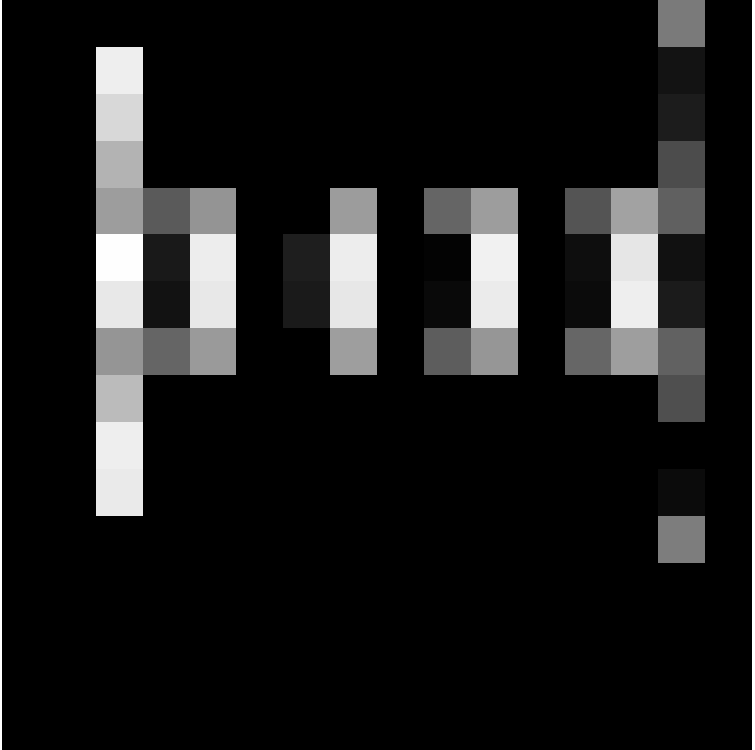}}
    \hspace{0.001in}
    \subfloat{\includegraphics[width=0.14\textwidth]{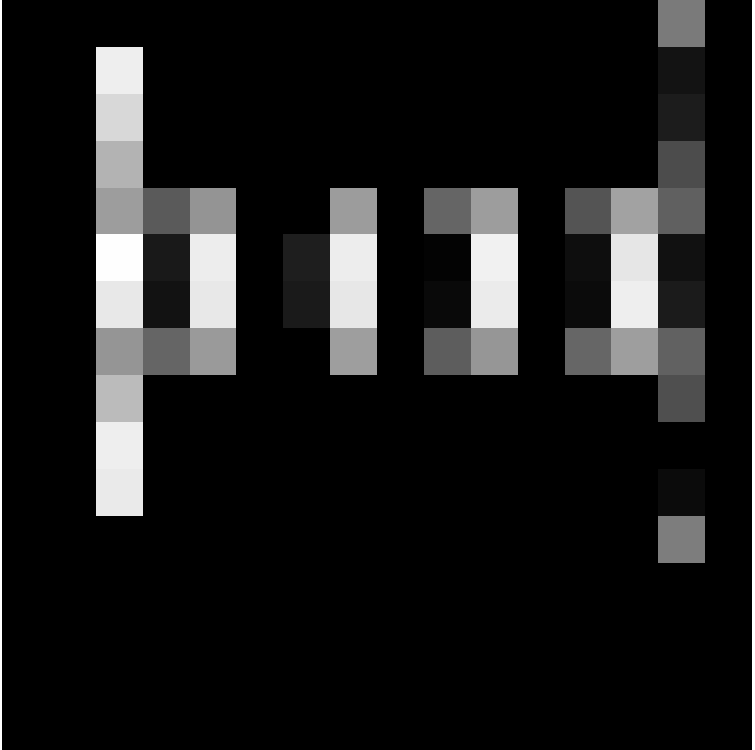}}
    \hspace{0.001in}
    \subfloat{\includegraphics[width=0.14\textwidth]{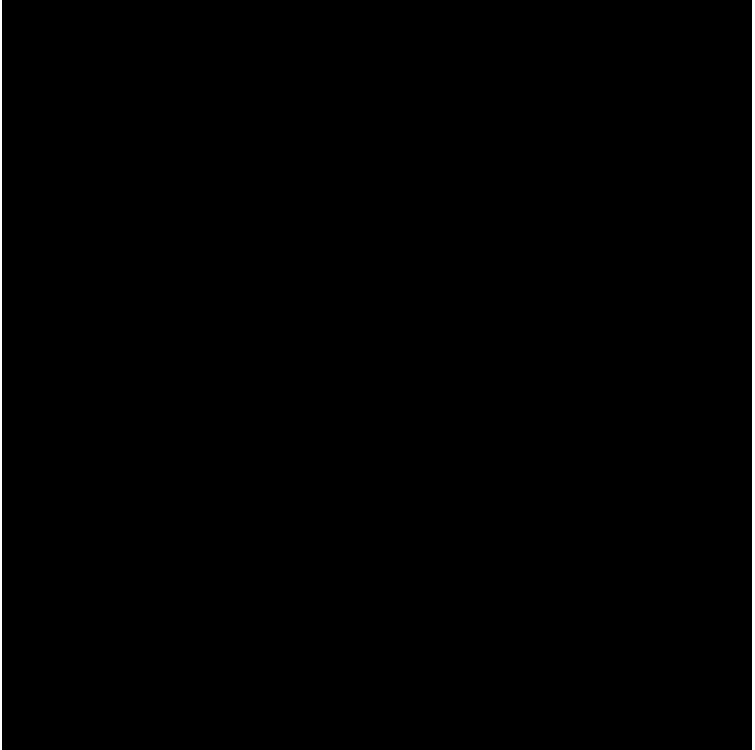}}
    \hspace{0.001in}
    \subfloat{\includegraphics[width=0.14\textwidth]{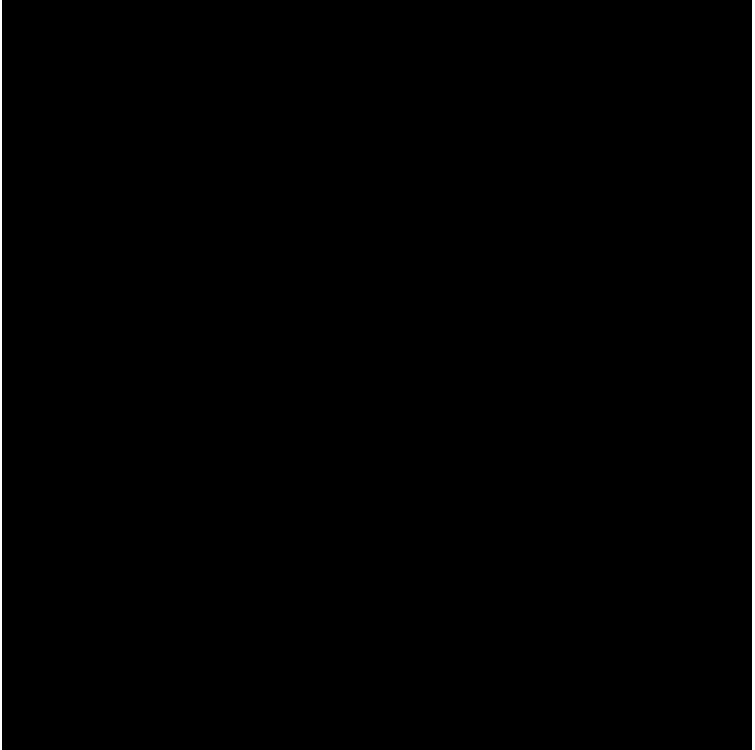}}
    \hspace{0.001in}
    \subfloat{\includegraphics[width=0.14\textwidth]{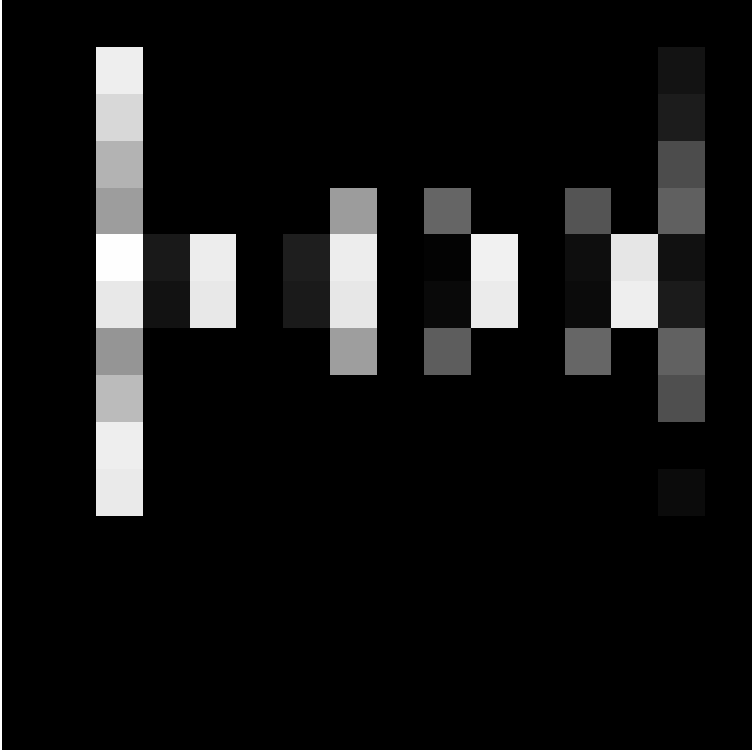}}}
  \mbox{
    \subfloat{\includegraphics[width=0.14\textwidth]{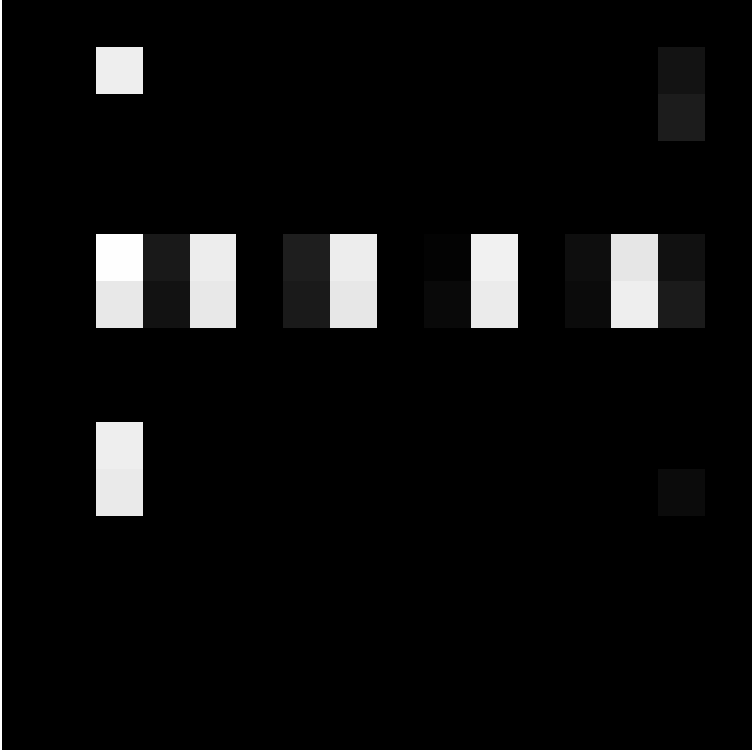}}
    \hspace{0.001in}
    \subfloat{\includegraphics[width=0.14\textwidth]{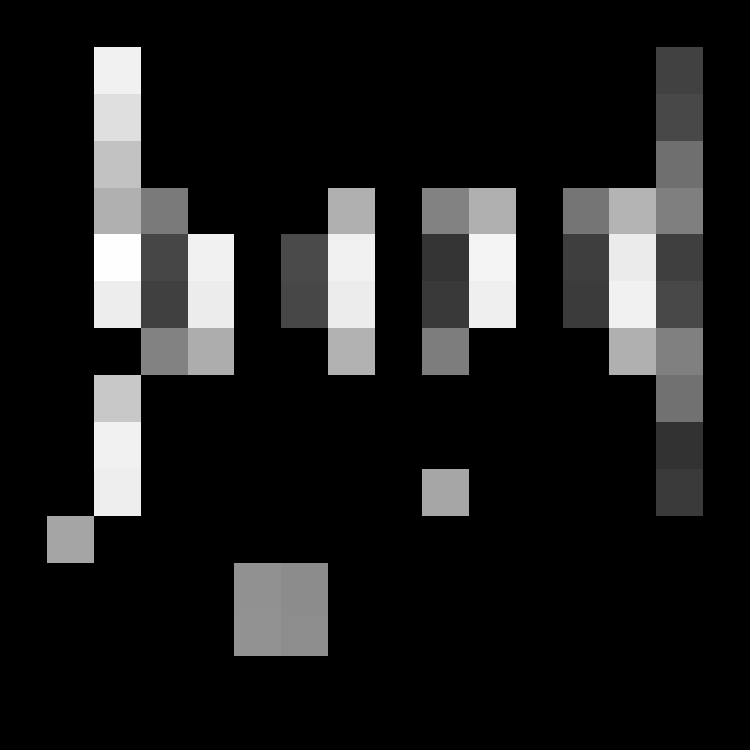}}
    \hspace{0.001in}
    \subfloat{\includegraphics[width=0.14\textwidth]{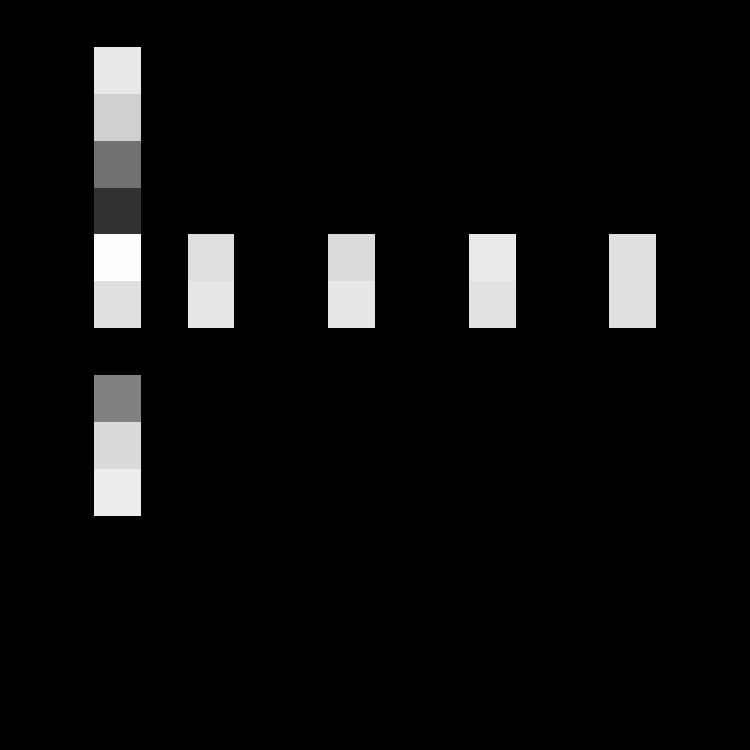}}
    \hspace{0.001in}
    \subfloat{\includegraphics[width=0.14\textwidth]{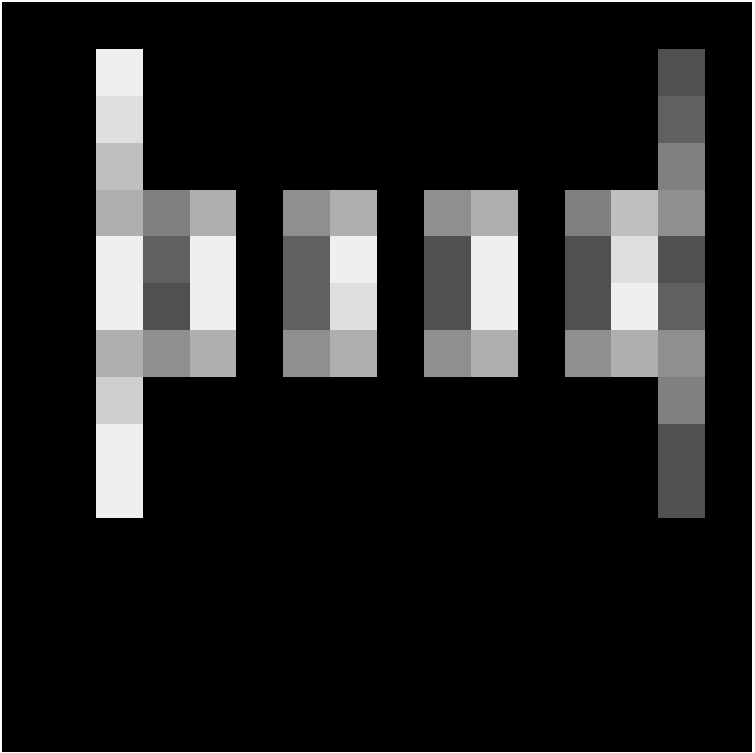}}
    \hspace{0.001in}
    \subfloat{\includegraphics[width=0.14\textwidth]{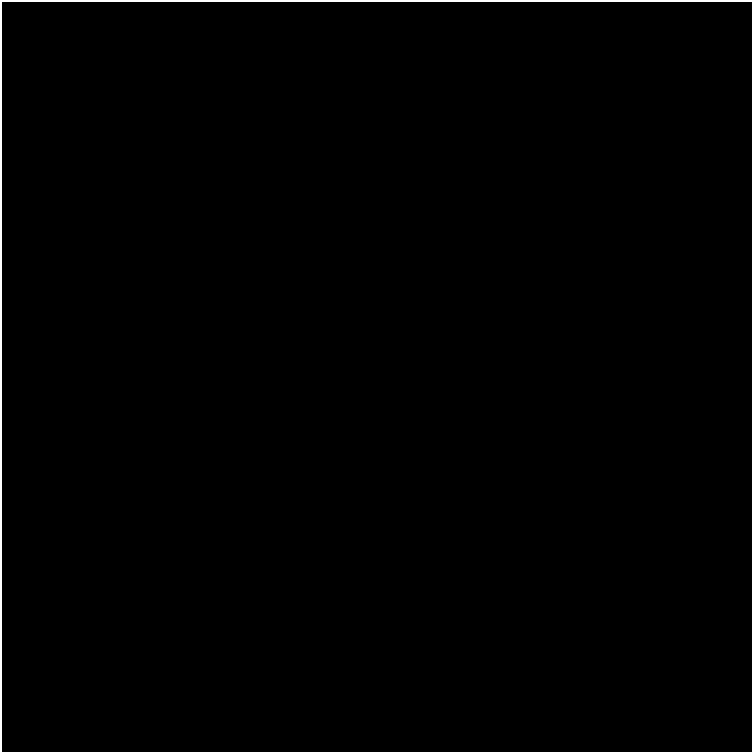}}
    \hspace{0.001in}
    \subfloat{\includegraphics[width=0.14\textwidth]{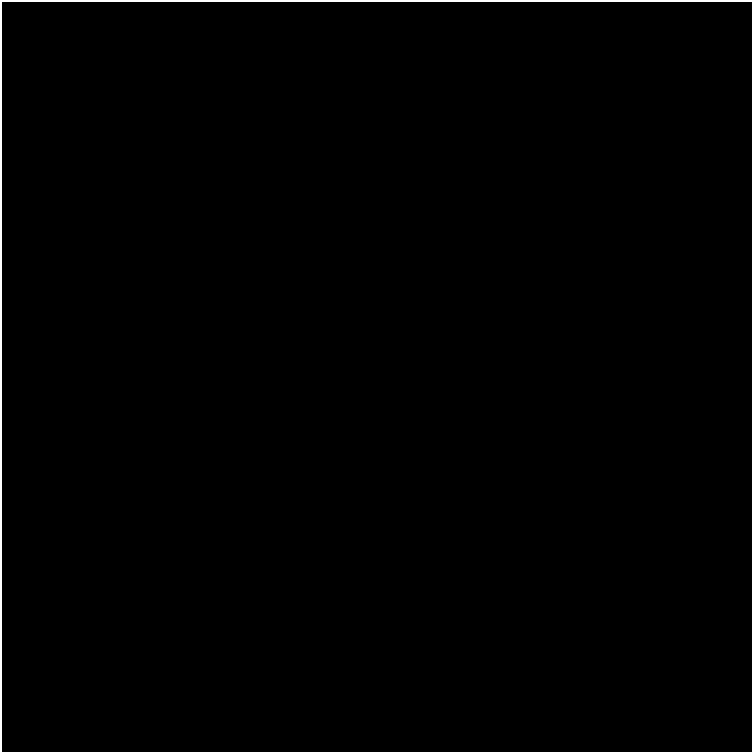}}}
  \vspace{0.02in}  \hrule
  \mbox{\subfloat{\includegraphics[width=0.14\textwidth]{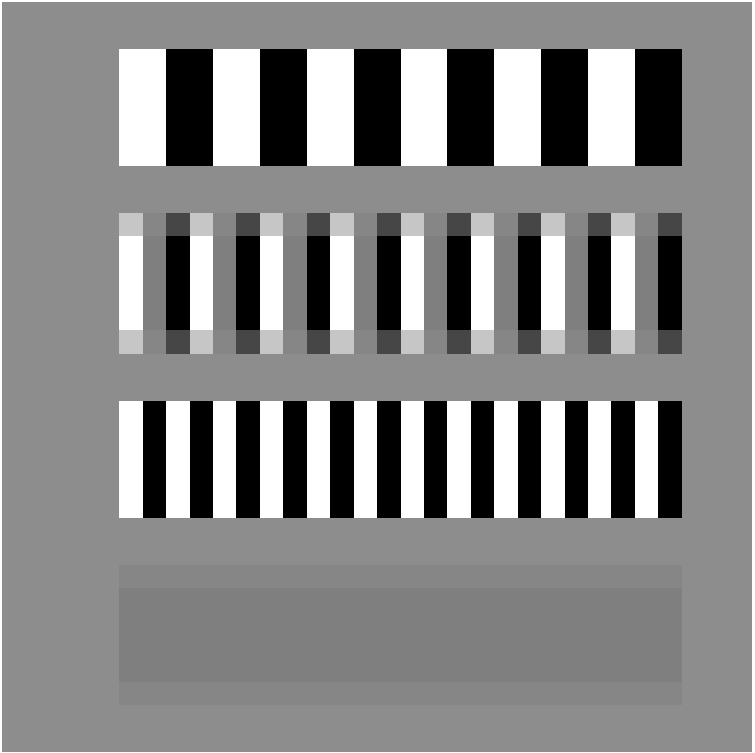}}
    \hspace{0.001in}
    \subfloat{\includegraphics[width=0.14\textwidth]{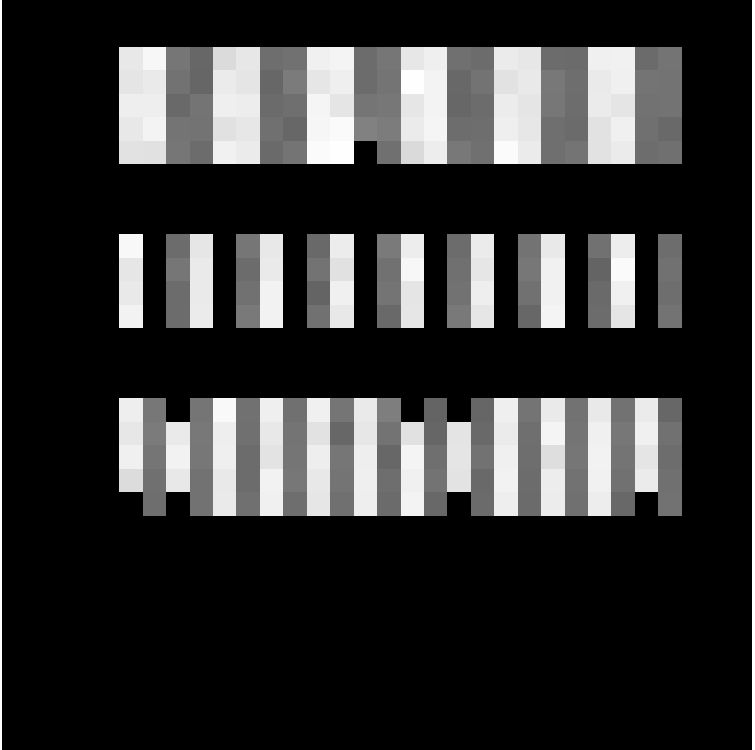}}
    \hspace{0.001in}
    \subfloat{\includegraphics[width=0.14\textwidth]{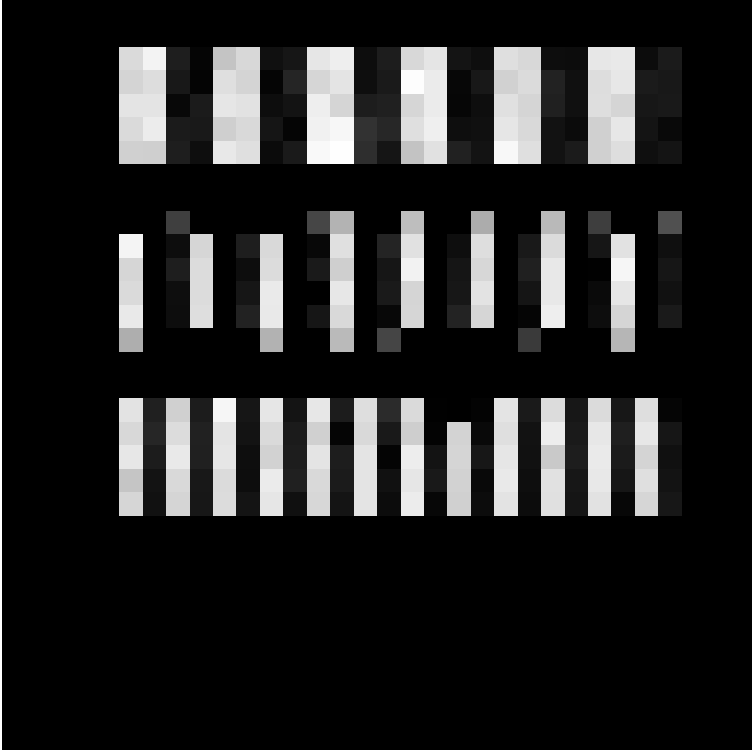}}
     \hspace{0.001in}
   \subfloat{\includegraphics[width=0.14\textwidth]{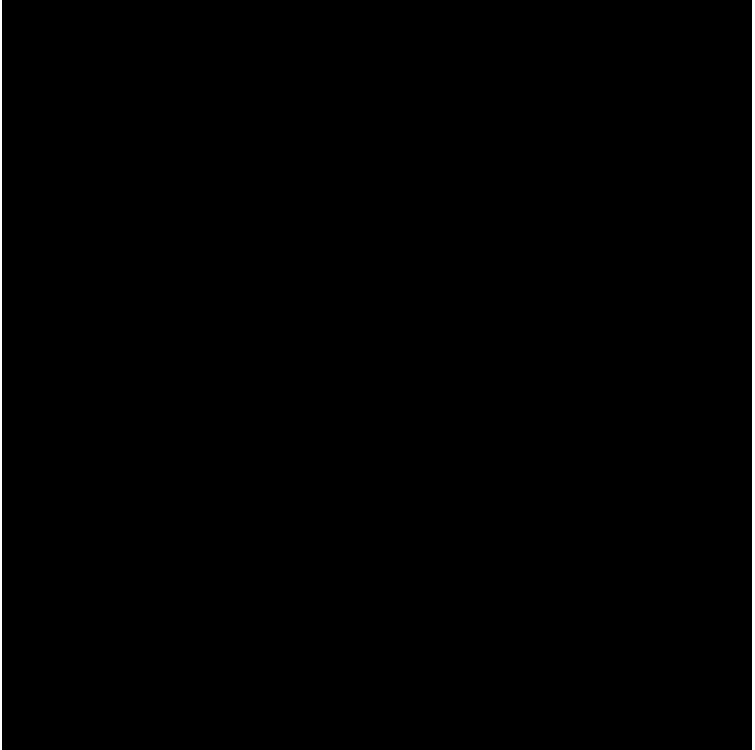}}
    \hspace{0.001in}
    \subfloat{\includegraphics[width=0.14\textwidth]{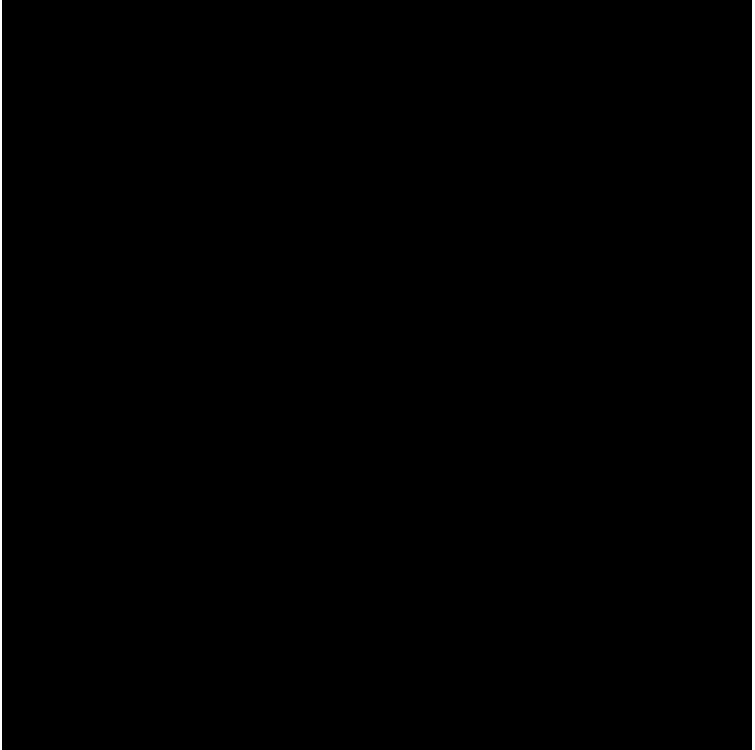}}
    \hspace{0.001in}
    \subfloat{\includegraphics[width=0.14\textwidth]{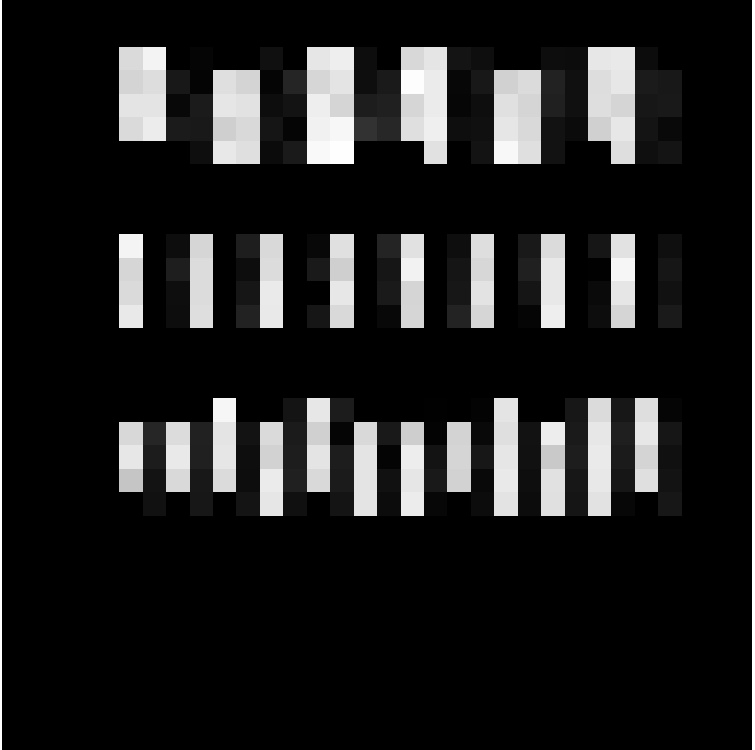}}}
  \mbox{
    \subfloat{\includegraphics[width=0.14\textwidth]{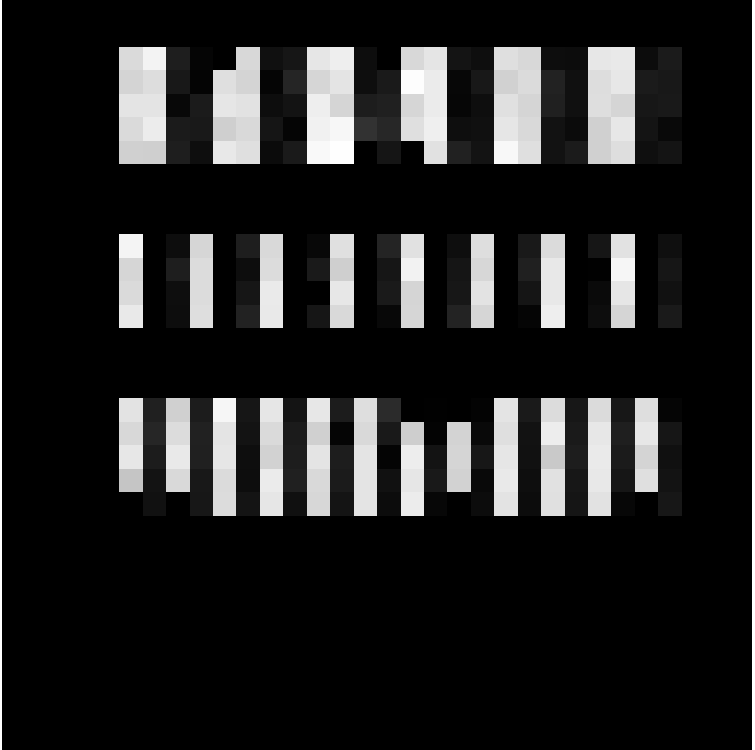}}
    \hspace{0.001in}
    \subfloat{\includegraphics[width=0.14\textwidth]{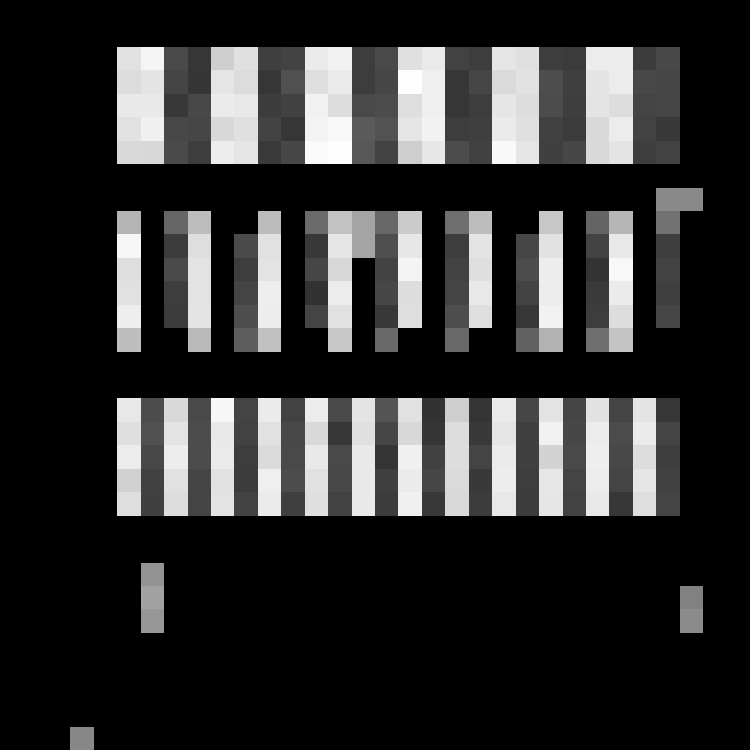}}
    \hspace{0.001in}
    \subfloat{\includegraphics[width=0.14\textwidth]{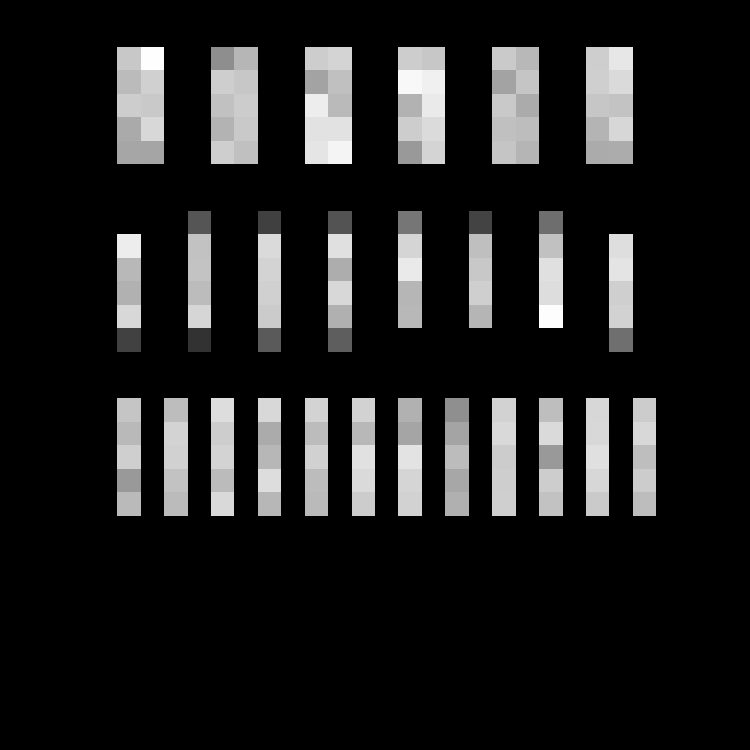}}
    \hspace{0.001in}
    \subfloat{\includegraphics[width=0.14\textwidth]{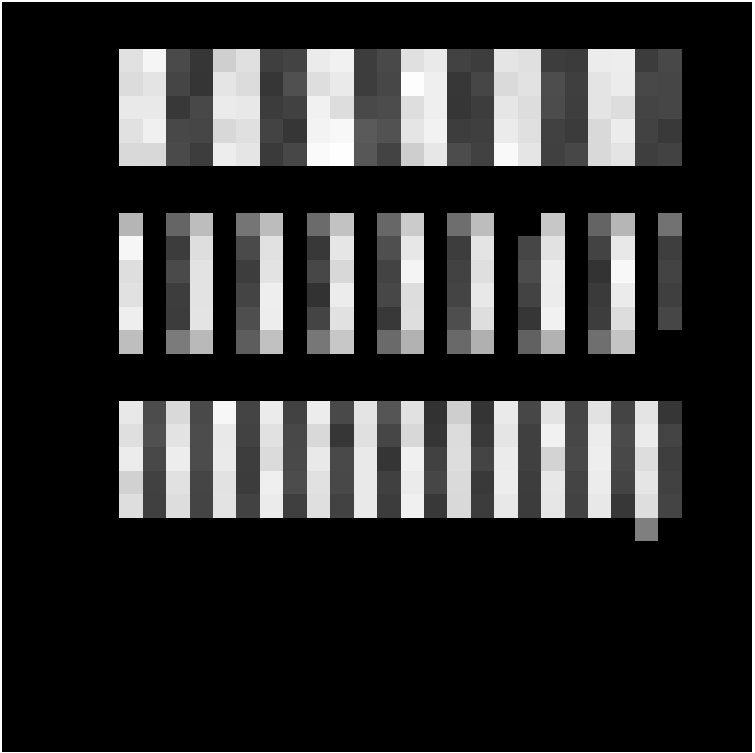}}
    \hspace{0.001in}
    \subfloat{\includegraphics[width=0.14\textwidth]{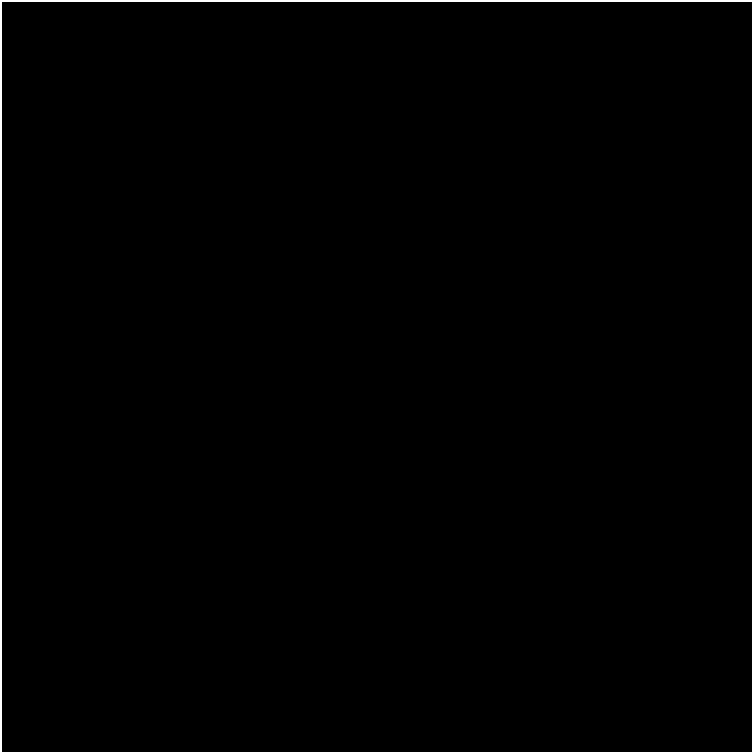}}
    \hspace{0.001in}
    \subfloat{\includegraphics[width=0.14\textwidth]{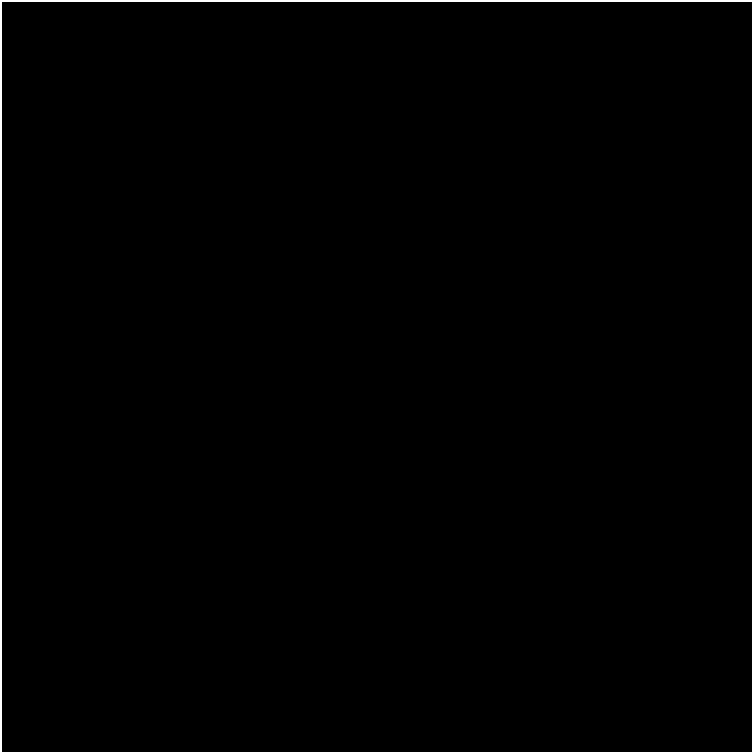}}
  }
  \vspace{0.02in}  \hrule
  \mbox{\subfloat{\includegraphics[width=0.14\textwidth]{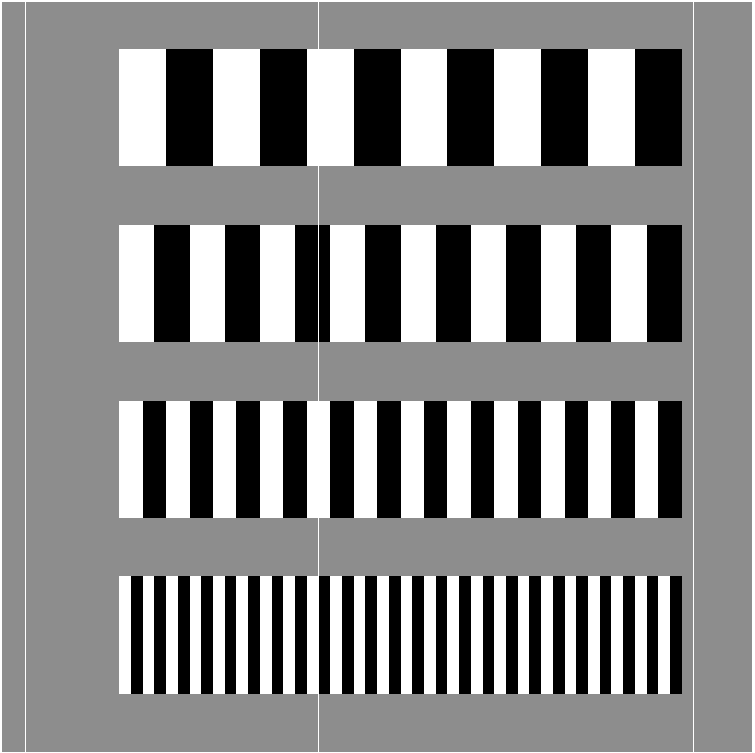}}
    \hspace{0.001in}
    \subfloat{\includegraphics[width=0.14\textwidth]{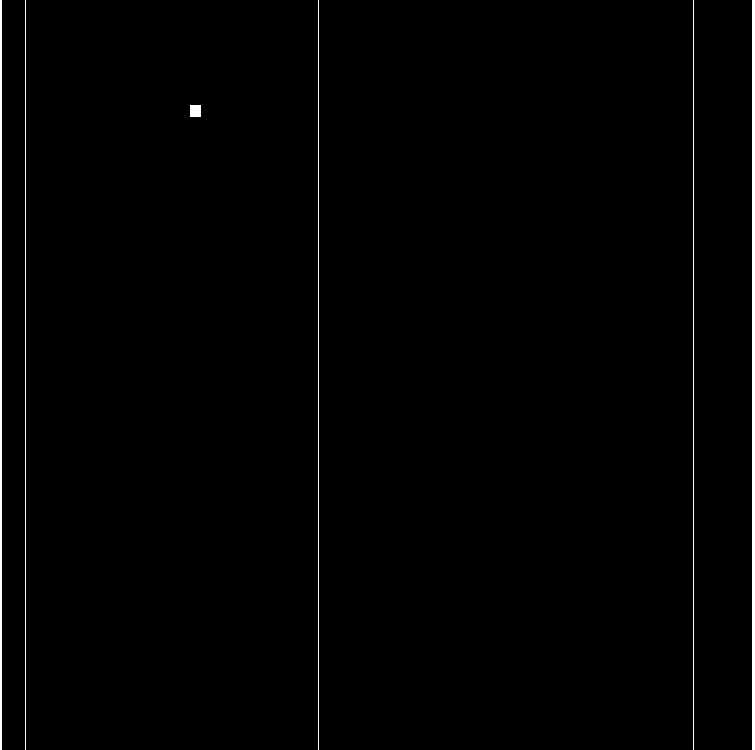}}
    \hspace{0.001in}
    \subfloat{\includegraphics[width=0.14\textwidth]{figures/AL-FAST-Alpha-001-64_x_64}}
    \hspace{0.001in}
    \subfloat{\includegraphics[width=0.14\textwidth]{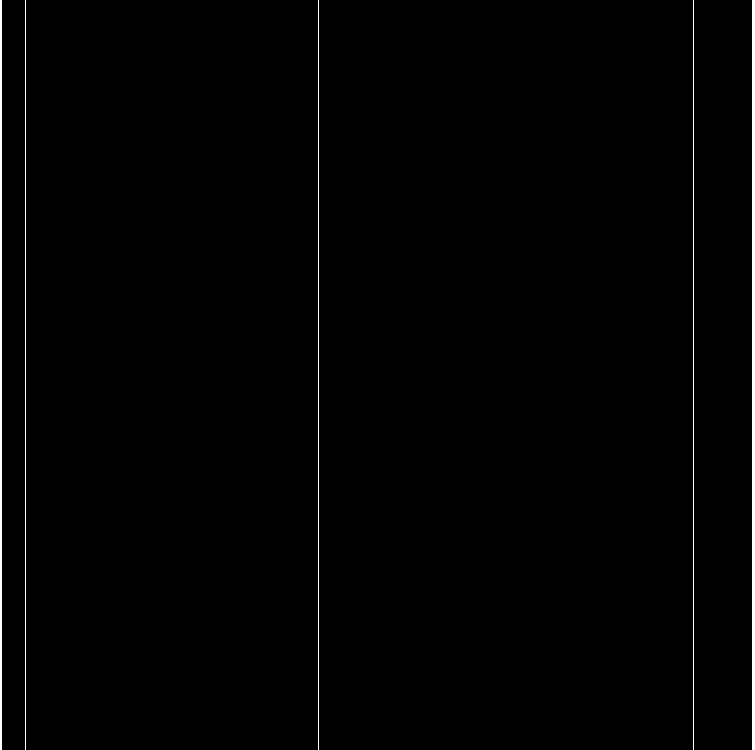}}
    \hspace{0.001in}
    \subfloat{\includegraphics[width=0.14\textwidth]{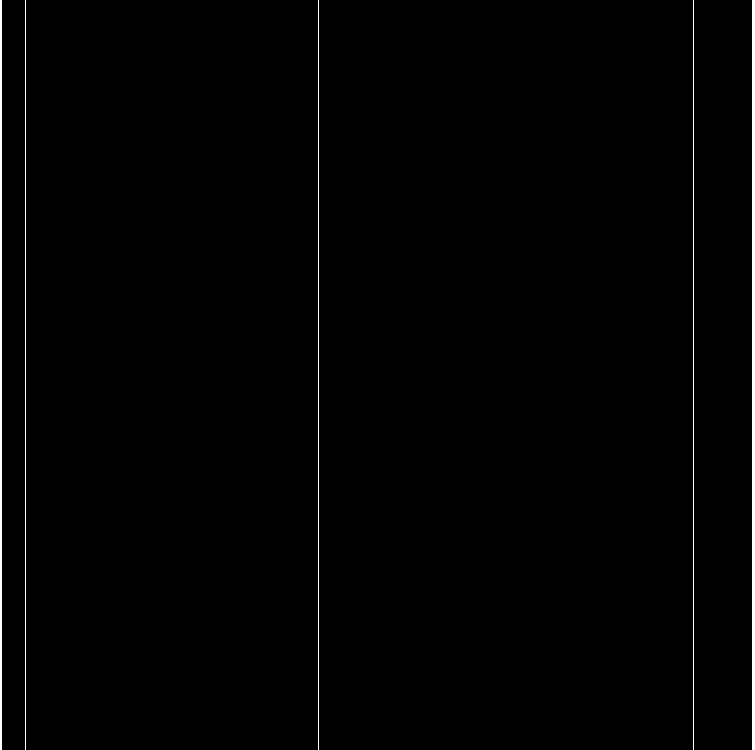}}
    \hspace{0.001in}
    \subfloat{\includegraphics[width=0.14\textwidth]{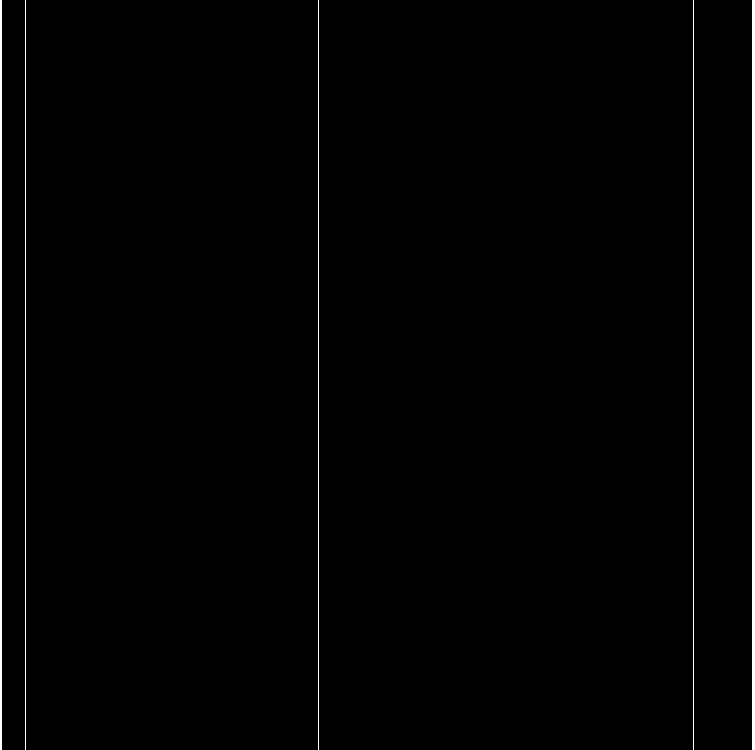}}}
  \vspace{0.02in}  \hrule
  \mbox{
    \subfloat{\includegraphics[width=0.14\textwidth]{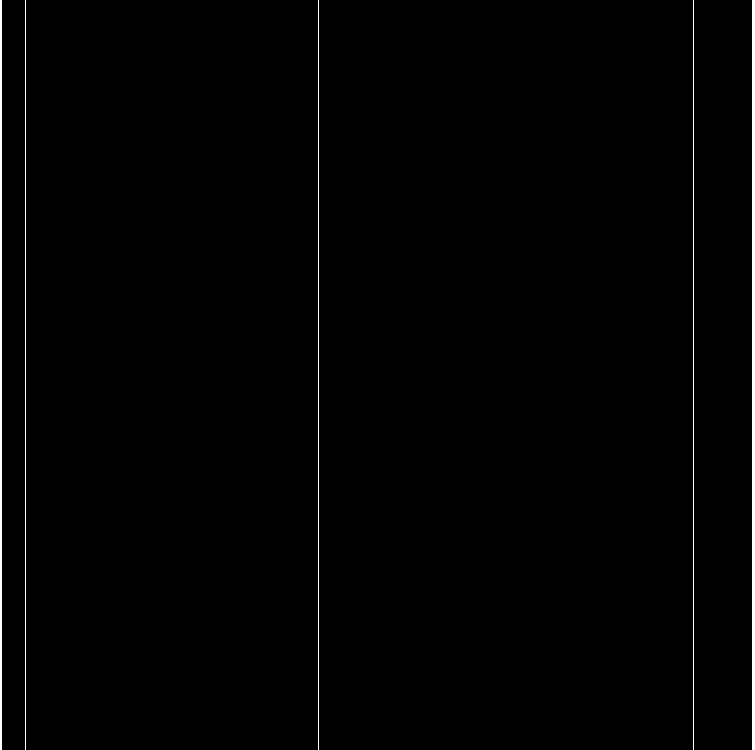}}
    \hspace{0.001in}
    \subfloat{\includegraphics[width=0.14\textwidth]{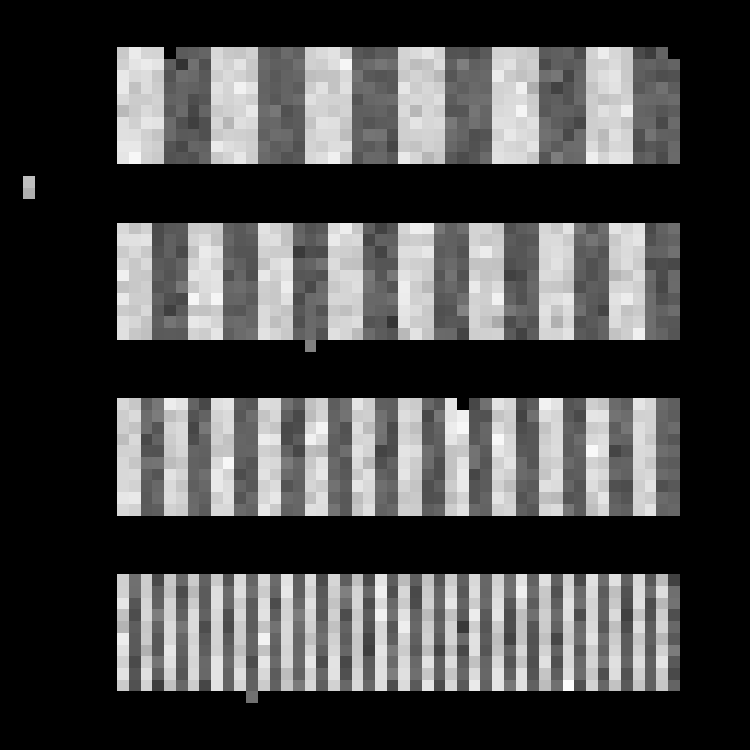}}
    \hspace{0.001in}
    \subfloat{\includegraphics[width=0.14\textwidth]{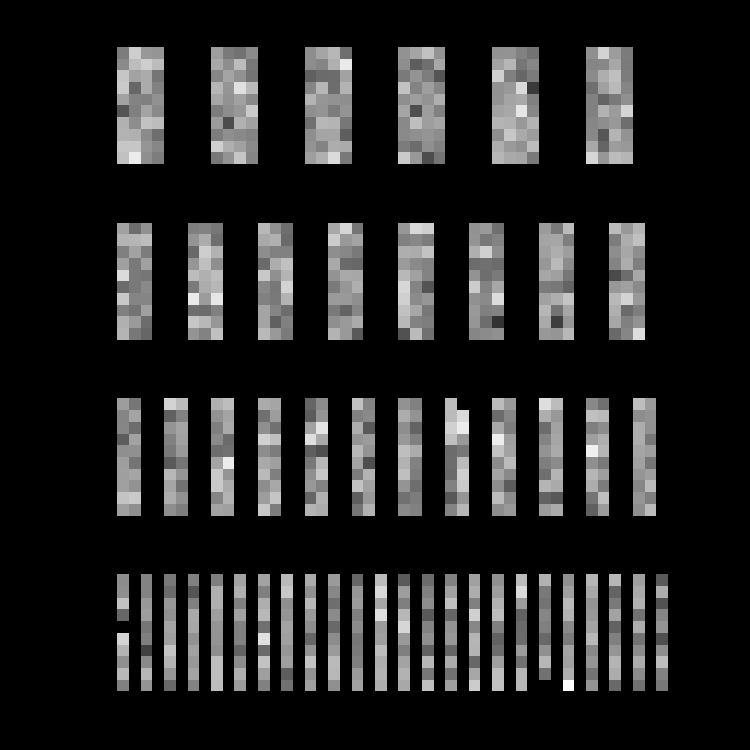}}
    \hspace{0.001in}
    \subfloat{\includegraphics[width=0.14\textwidth]{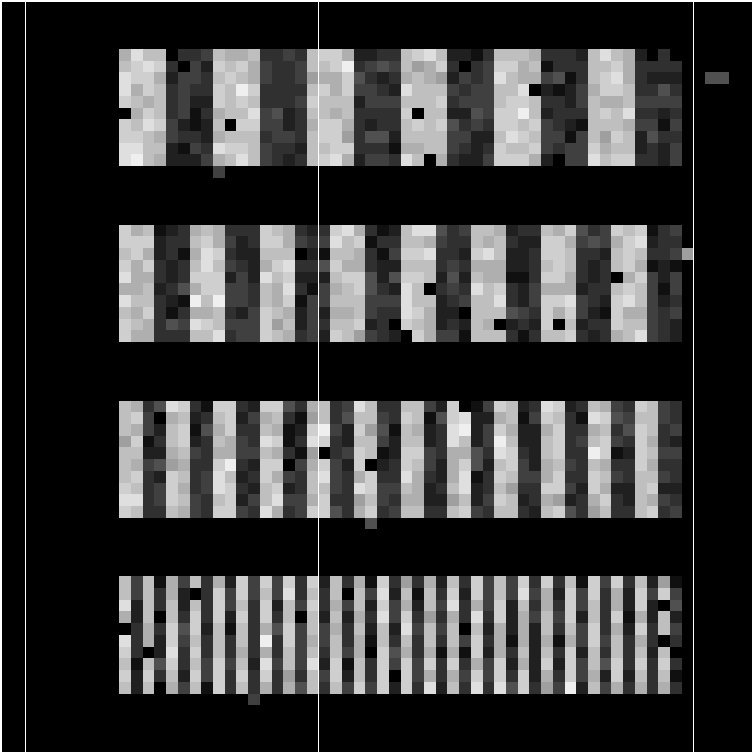}}
    \hspace{0.001in}
    \subfloat{\includegraphics[width=0.14\textwidth]{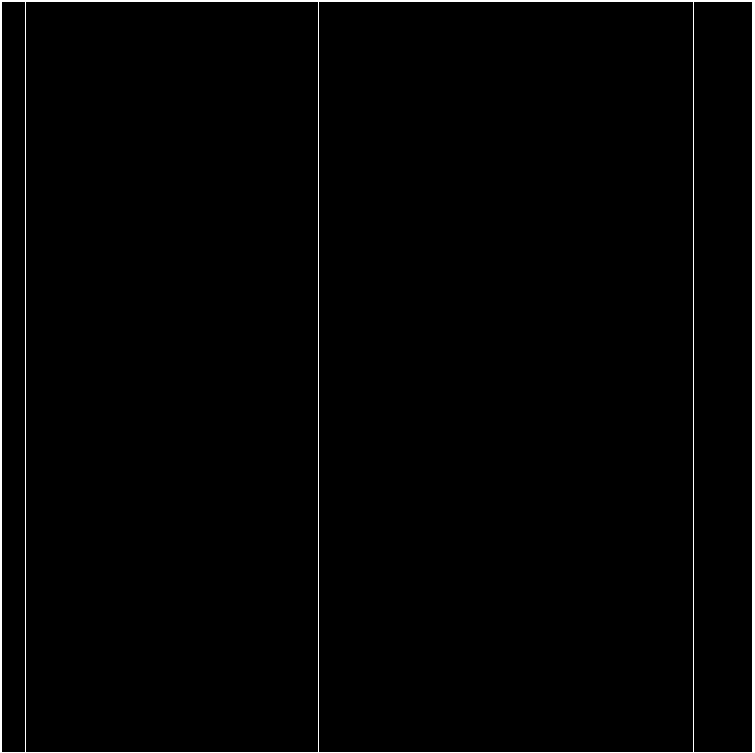}}
    \hspace{0.001in}
    \subfloat{\includegraphics[width=0.14\textwidth]{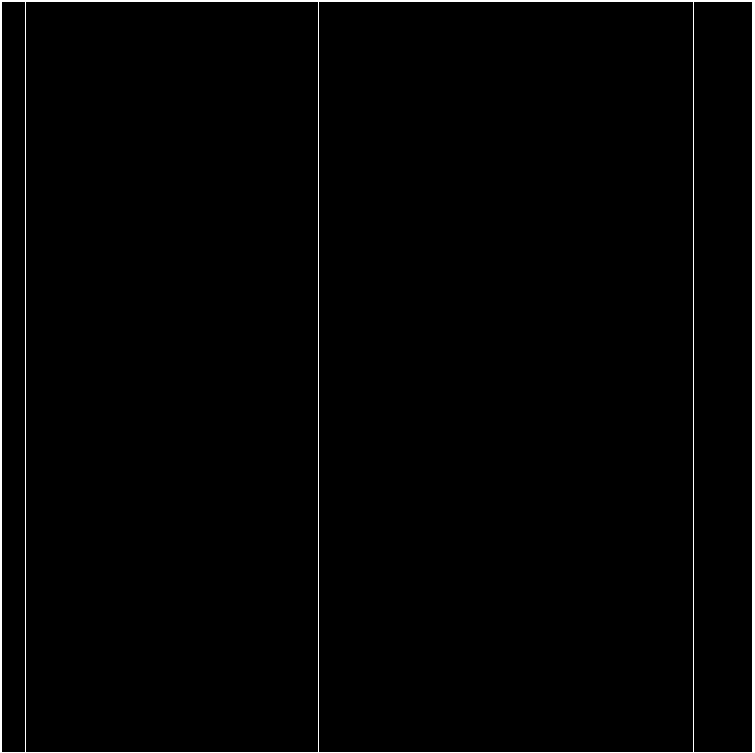}}}
  \end{center}
\caption{True and activated voxels in the SPM upon using ALL-FAST with
  $\alpha=0.01$ and 0.05, AM-FAST with
  $\alpha=0.01$ and 0.05, AR-FAST with $\alpha=0.01$ and 0.05, AS,
  AWS, CT, PBT and TFCE for the Motif (top row) and the $16\times16$, $32\times32$ and
  $64\times64$ stripes (last three rows).}
\label{fig:tabelow-all}
\end{figure}
\begin{figure}
  \mbox{
    \subfloat[AR(0)]{\includegraphics[height=0.25\textheight,width=\textwidth]{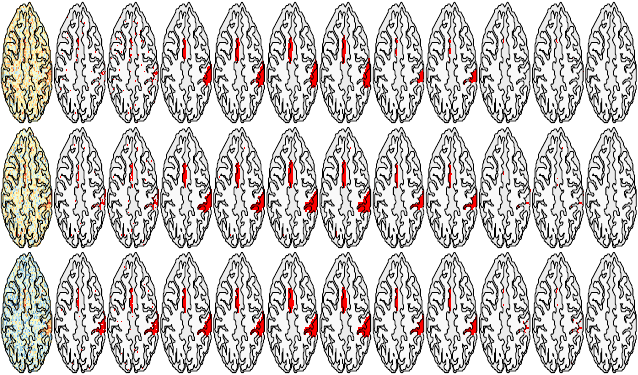}}}
  \mbox{
  \subfloat[AR(1)]{\includegraphics[height=0.25\textheight,width=\textwidth]{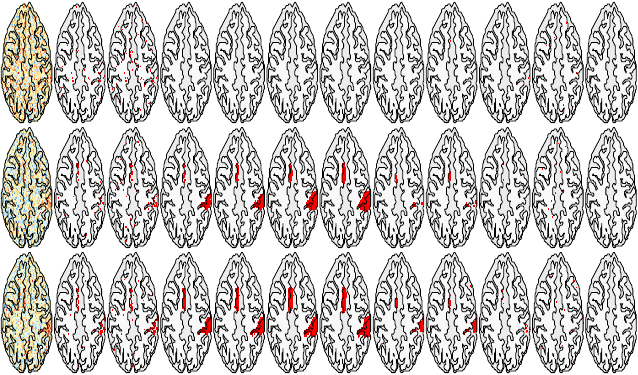}}}
    \mbox{
      \subfloat[AR(2)]{\includegraphics[height=0.25\textheight,width=\textwidth]{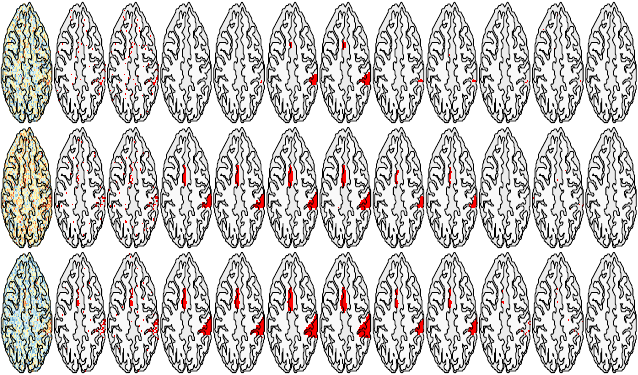}}}
      \caption{Sample SPMs (left column) and corresponding activation maps from the modified Hoffman
        phantom obtained for CNR = 0.5 (top), 0.75 (middle) and 1.0
        (bottom) using (from left to right) ALL-FAST
         with        $\alpha = 0.01$ and $\alpha = 0.05$,          AM-FAST with
        $\alpha = 0.01$ and $\alpha = 0.05$, AR-FAST with
        $\alpha = 0.01$ and $\alpha = 0.05$, AS, AWS, CT, PBT and TFCE
        for AR($p$) errors with (a) $p=0$, (b) $p=1$ and (c) $p=2$
        with AR coefficients decreasing with autocorrelation order.}\label{fig:ActMapDec1}
    \end{figure}
    \begin{figure}
    \mbox{
      \subfloat[AR(3)]{\includegraphics[height=0.25\textheight,width=\textwidth]{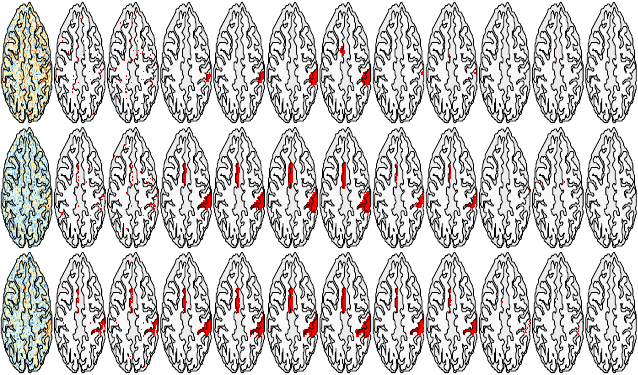}}}
    \mbox{
      \subfloat[AR(4)]{\includegraphics[height=0.25\textheight,width=\textwidth]{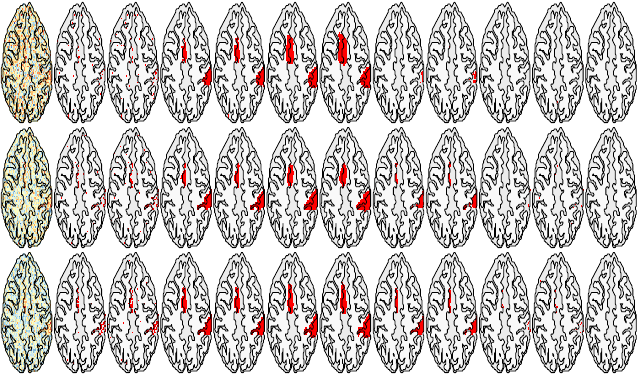}}}
    \mbox{      
      \subfloat[AR(5)]{\includegraphics[height=0.25\textheight,width=\textwidth]{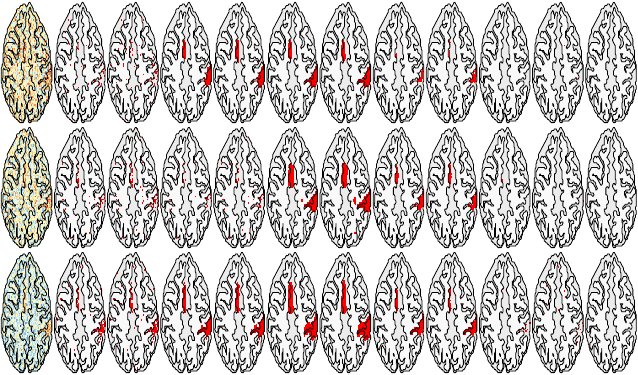}}}
    \caption{SPMs and corresponding activation maps in a similar framework as
      Figure~\ref{fig:ActMapDec1}, but for $p$: we have (a) $p=3$, (b)
      $p=4$ and (c) $p=5$.}
      \label{fig:ActMapDec2}
    \end{figure}

    \begin{figure}
      \mbox{
        \subfloat[AR(0)]{\includegraphics[height=0.25\textheight,width=\textwidth]{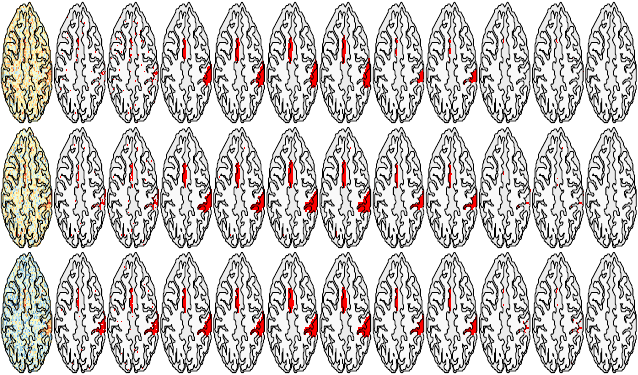}}
      }
      \mbox{
      \subfloat[AR(1)]{\includegraphics[height=0.25\textheight,width=\textwidth]{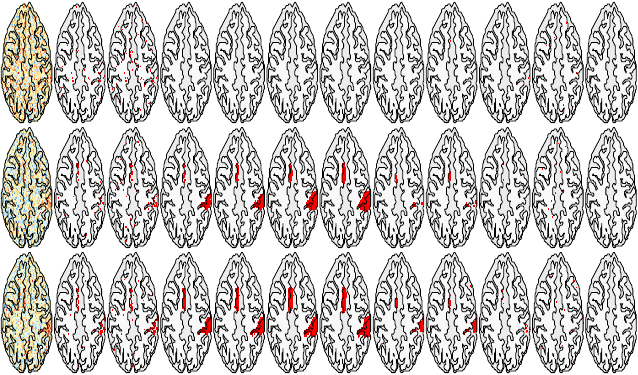}}}
    \mbox{
      \subfloat[AR(2)]{\includegraphics[height=0.25\textheight,width=\textwidth]{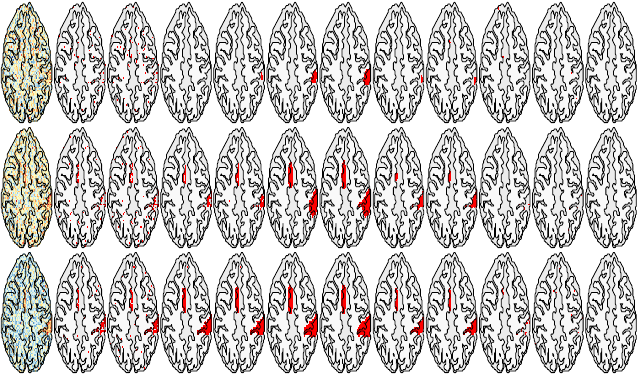}}}
    \caption{Sample SPMs and corresponding activation maps in a similar framework as
      Figure~\ref{fig:ActMapDec1}, but here the AR coefficients are in
      increasing order}.
    \label{fig:ActMapInc1}
    \end{figure}
    \begin{figure}
     \mbox{
      \subfloat[AR(3)]{\includegraphics[height=0.25\textheight,width=\textwidth]{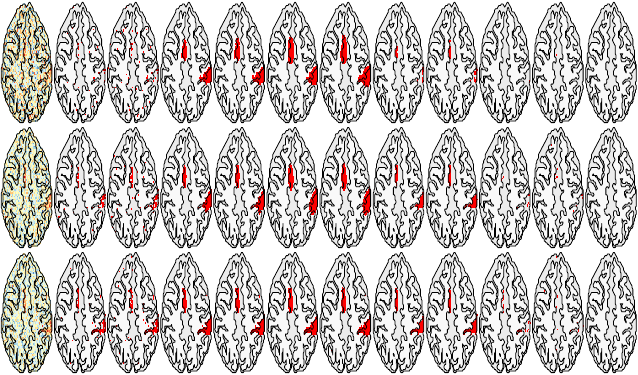}}}
    \mbox{
      \subfloat[AR(4)]{\includegraphics[height=0.25\textheight,width=\textwidth]{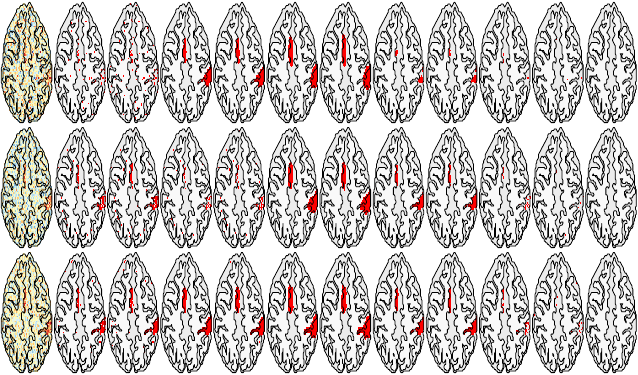}}}
    \mbox{
      \subfloat[AR(5)]{\includegraphics[height=0.25\textheight,width=\textwidth]{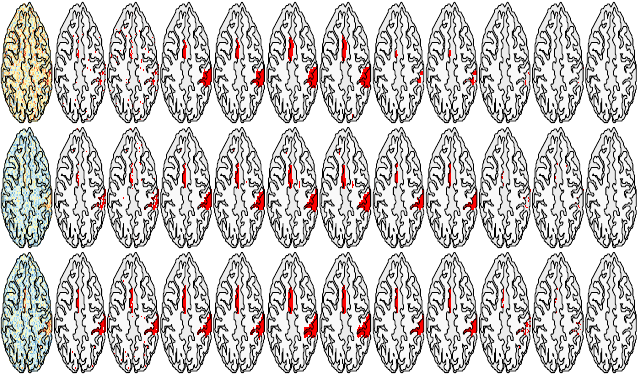}}}
    \caption{Sample SPMs and corresponding activation maps in a similar framework as
      Figure~\ref{fig:ActMapDec2}, but here the AR coefficients are in
      increasing order}.\label{fig:ActMapInc2}
    \end{figure}
\begin{figure}
\includegraphics[width=\textwidth]{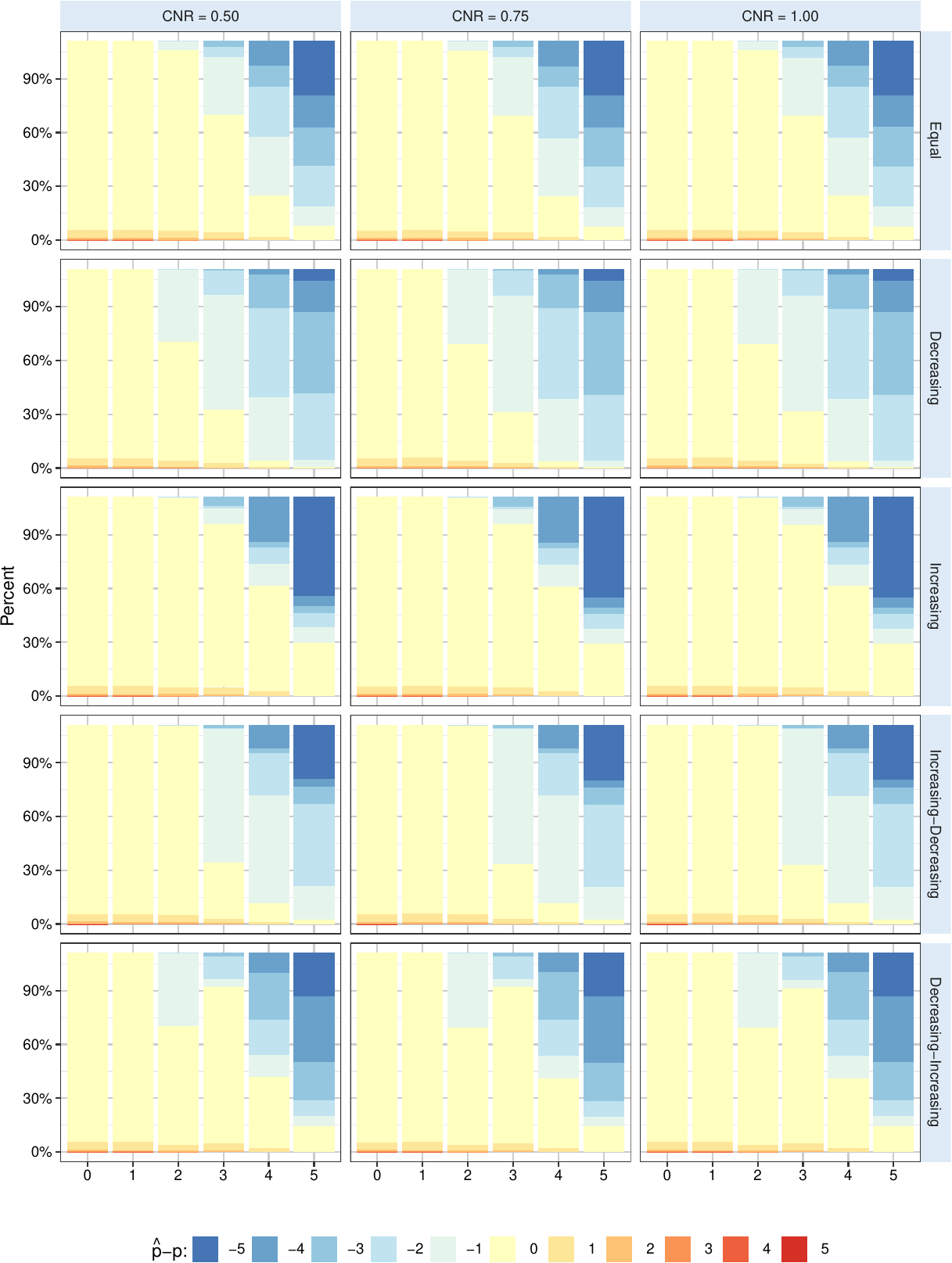}
\caption{Distribution of the difference in estimated and true orders
  for the large-scale simulation study of Section~\ref{hoff}.}
  \label{fig:ARp.estimated.order}
\end{figure}
\begin{figure}
\begin{center}
\includegraphics[width=0.95\textwidth]{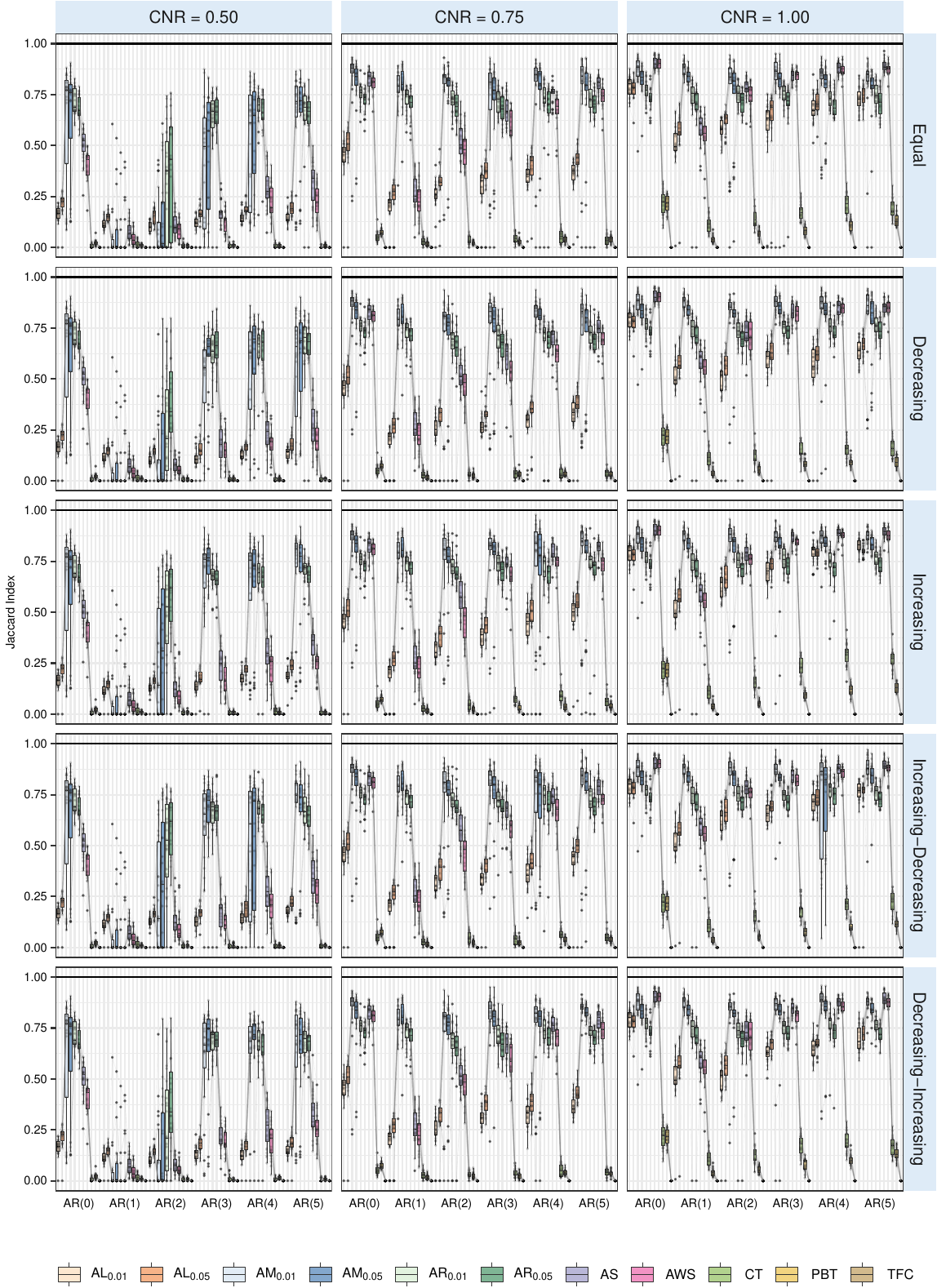}
\caption[JaccardIndex]{Performance, in terms of the Jaccard Index, for
  each of the algorithms for the different experimental setup and
  settings. As in Figure~\ref{fig:Jaccard.index.decreasing}, each
  method is displayed for each setting in the boxplot in the same
  order as in the legend.} 
  \label{fig:Jaccard.complete}
\end{center}
\end{figure}
\begin{figure}
\begin{center}
\includegraphics[width=0.95\textwidth]{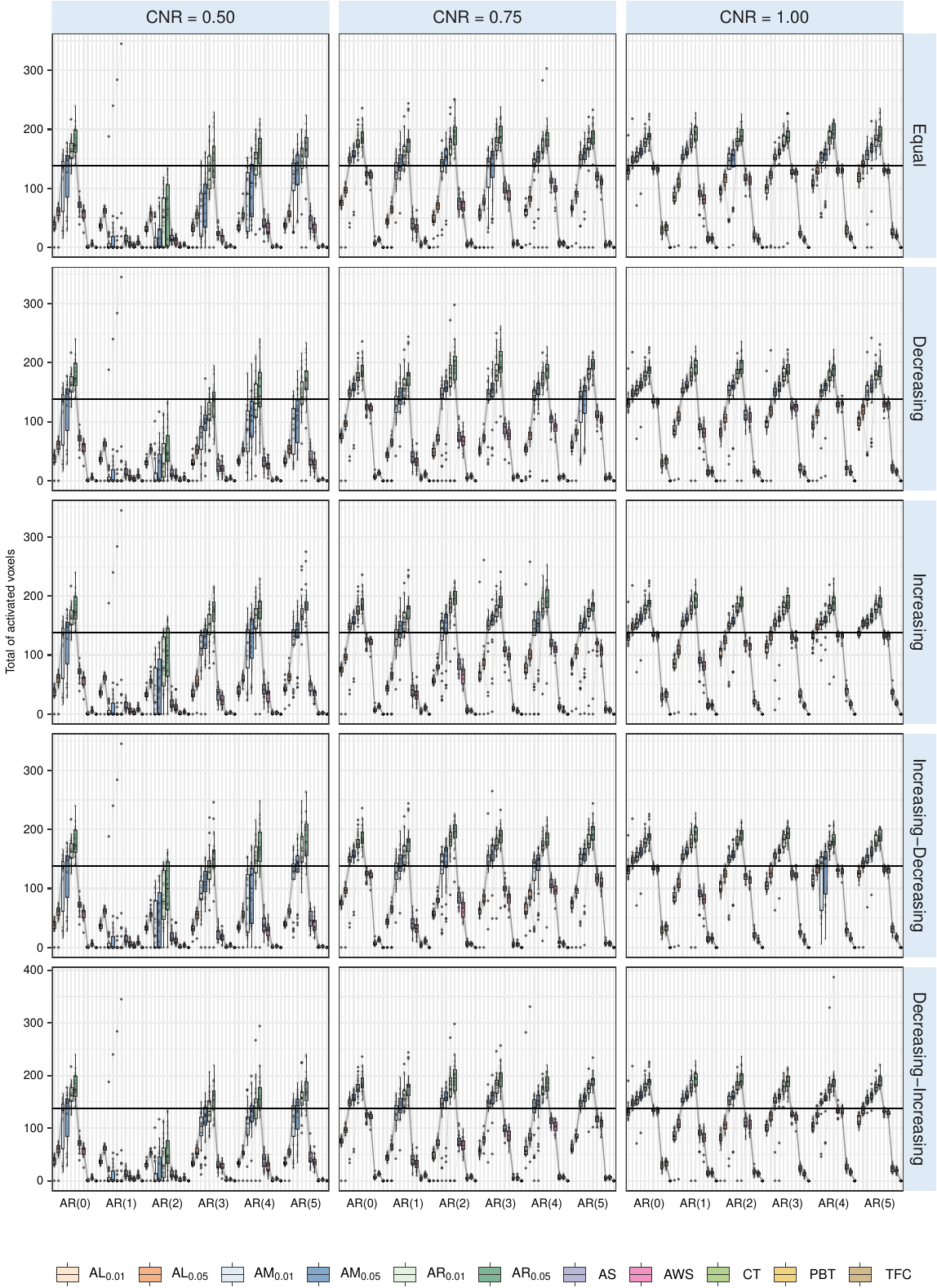}
\caption{Number of activated voxels in activation maps obtained
  with the modified Hoffman phantom and the ALL-FAST, AM-FAST, AR-FAST, AS, AWS,
  CT, PBT and TFCE algorithms 
  for the different simulation settings. The horizontal line shows
  the number of truly active voxels.}
  \label{fig:ActivatedVoxels.complete}
\end{center}
\end{figure}
\begin{figure}
\begin{center}
\includegraphics[width=0.95\textwidth]{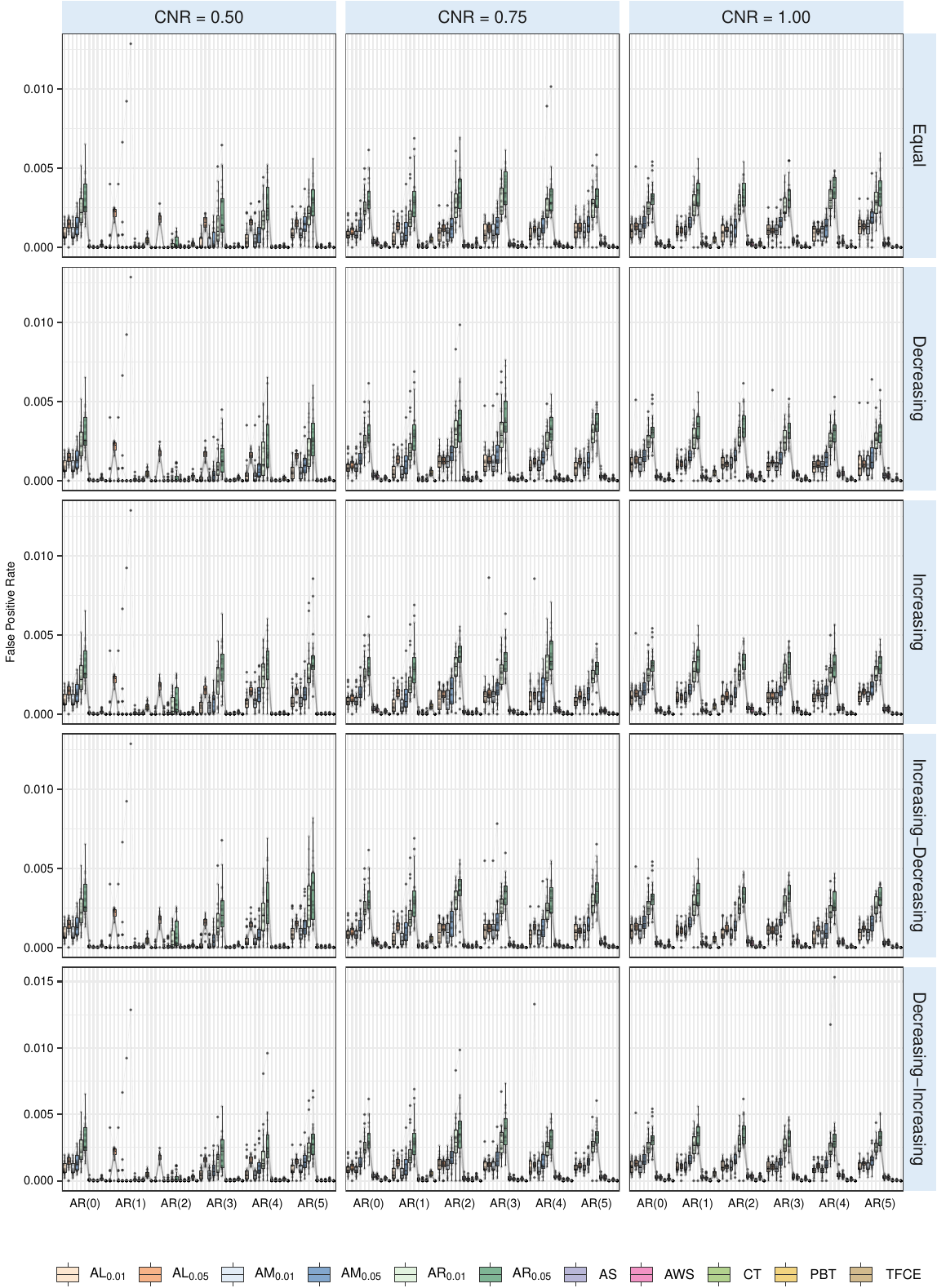}
\caption[FalsePositive]{False Positive Rate of the ALL-FAST, AM-FAST, AR-FAST,
  AS, AWS, CT, PBT and TFCE algorithms using the modified Hoffman
  phantom and for the different simulation settings.}
  \label{fig:FalsePositive.complete}
\end{center}
\end{figure}
\begin{figure*}[!h]
\begin{center}
\includegraphics[width=0.95\textwidth]{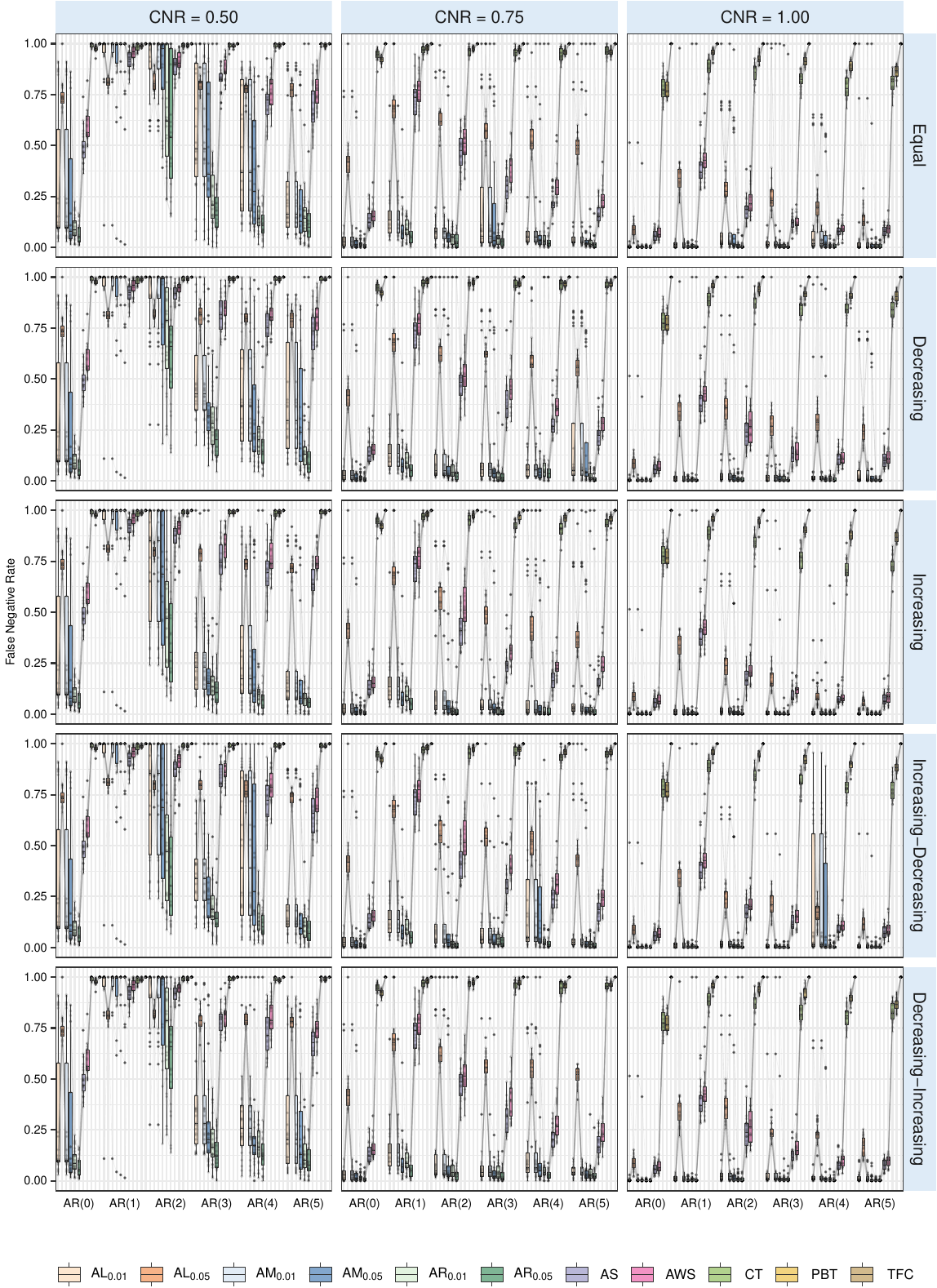}
\caption[FalseNegative]{False Negative Rate of the ALL-FAST, AM-FAST, AR-FAST, AS, AWS, CT, permutation and TFCE algorithms for the different simulation settings.}
  \label{fig:FalseNegative.complete}
\end{center}
\end{figure*}
\begin{figure}
\begin{center}
\includegraphics[width=0.95\textwidth]{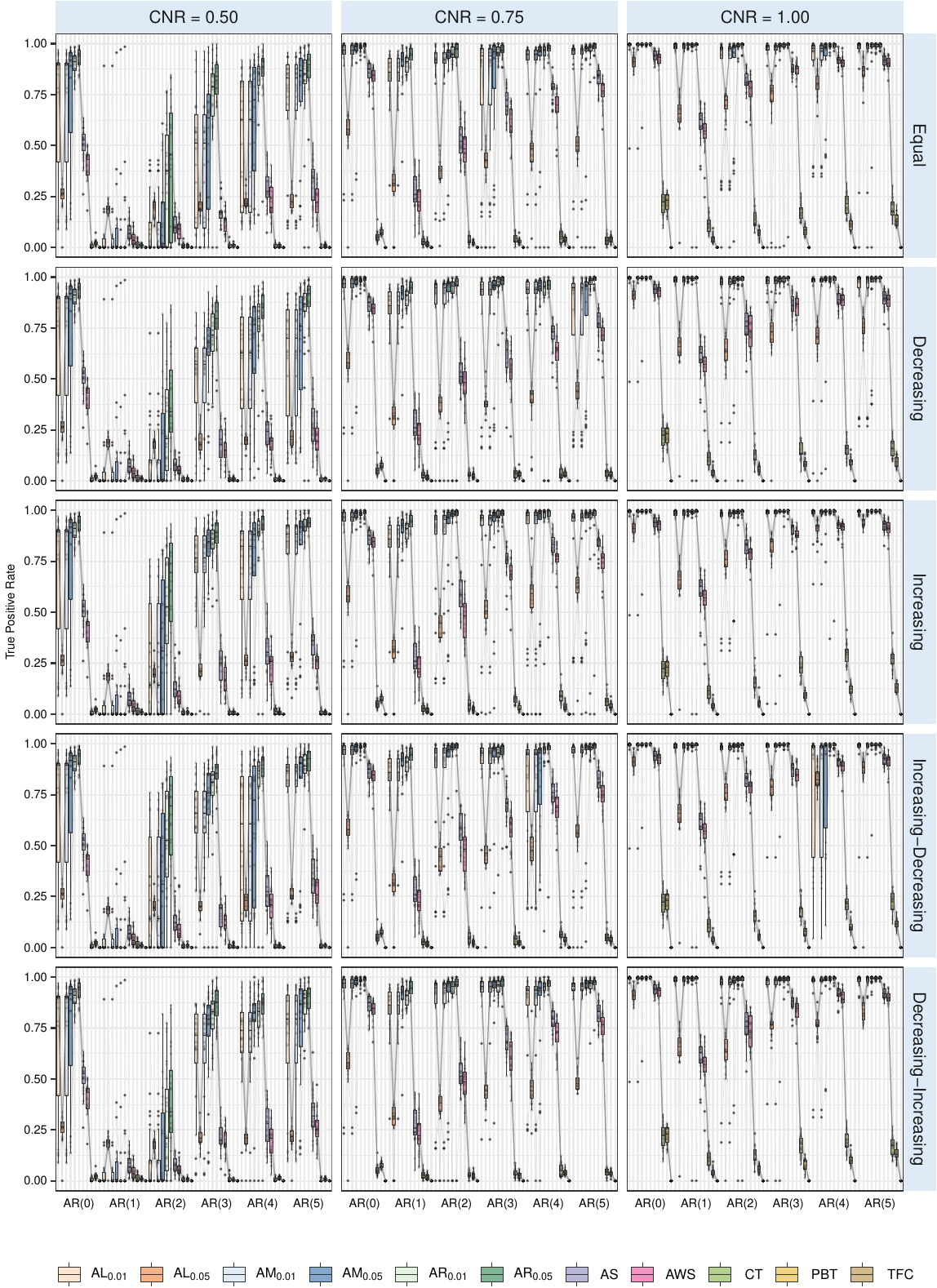}
\caption[TruePositive]{True Positive Rate of the ALL-FAST, AM-FAST, AR-FAST, AS,
  AWS, CT, PBT and TFCE algorithms for the different simulation
  settings using the modified Hoffman phantom.}
  \label{fig:TruePositive.complete}
\end{center}
\end{figure}
\begin{figure}
\begin{center}
\includegraphics[width=0.95\textwidth]{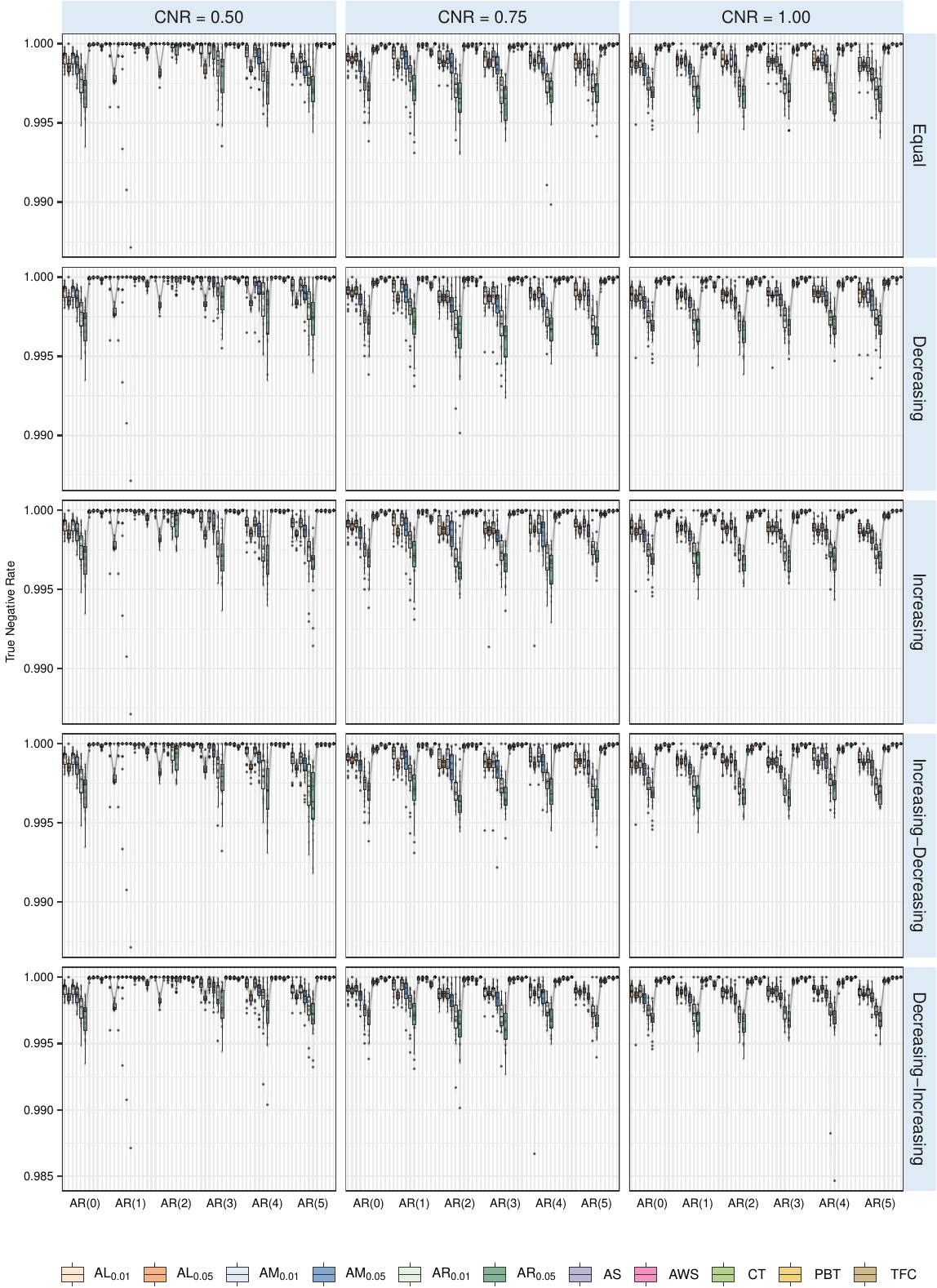}
\caption[TruePositive]{True Negative Rate of the ALL-FAST, AM-FAST, AR-FAST, AS,
  AWS, CT, PBT and TFCE algorithms for the different simulation
  settings using the modified Hoffman phantom.}
  \label{fig:TrueNegative.complete}
\end{center}
\end{figure}
\newpage
\section{Performance of FAST Algorithms on Simulated Null-Activated  SPMs}
  \label{null}
The second reviewer expressed continued concerns regarding multiple
significance. Our use of $\alpha$ is only to set a threshold:
its connections with testing in 
deciding on this threshold notwithstanding, our algorithm is not built
on, and does not use, the actual nominal significance level. Further, performance of FAST on the
resting state dataset of Section~\ref{resting} did not raise concerns
of identifying 
false positives. Nevertheless, we also performed a comprehensive
and thorough
2D simulation study to evaluate the issue of false positives. We generated SPMs
under the null hypothesis of no activation, and
used ALL-, AM- and AR-FAST for different $\alpha$ and evaluated the
incidence of even one pixel in an image (falsely) identified as
activated. Specifically, we 
generated SPMs $\bGamma\sim N(\bzero,\bR)$ where $\bR$ is a 2D
circulant matrix, with first order autocorrelation structure indexed
in each dimension by $\rho$. (For simplicity, we used a radial
correlation structure in the simulation, but not in the estimation and
activation detection.) We used $\rho\in \{0,0.01,
0.025, 0.05, 0.075, 0.1, 0.25, 0.5, 0.75, 0.99\}$ to provide a comprehensive
evaluation even though in our experience most SPMs from real datasets
have estimated $\rho$s substantially  lower than 0.1. Our SPMs were of
dimensions 
$128\times 128$. For each $\rho$, we simulated 1000 SPMs  to account for  
simulation variability in drawing our conclusions and then applied 
 both the FAST algorithms for a range of $\alpha\in\{0.001,0.01,0.025,
 0.05, 0.075,  0.1\}$. \begin{table}[h]
  \centering
\caption{Number of SPMs (out of 1000) simulated under null
  conditions where we found even one activated pixel using (a) 
  ALL-FAST, (b)  AM-FAST and (c) AR-FAST using different values of
  $\alpha$ and for  different values of $\rho$ (the assumed 
  radially symmetric first order autocorrelations between neigboring
  voxels in each simulated SPM).}
\label{null.table}
\mbox{
  \subfloat[][ALL-FAST]{\begin{tabular}{llcccccccccc}
  \hline
  &&&&&&$\rho$&&&&&\\
  & & 0 & 0.01 & 0.025 & 0.05 & 0.075 & 0.1 & 0.25 & 0.5 & 0.75 & 0.99\\  \hline
  $\alpha$ & 0.001 & 0 & 0 & 0 & 0 & 0 & 0 & 0 & 0 & 0 & 0 \\
           & 0.01 & 0 & 0 & 0 & 0 & 0 & 0 & 0 & 0 & 0 & 0\\
           & 0.025 & 0 & 1 & 0 & 0 & 0 & 0 & 0 & 0 & 0 &0\\
           & 0.05  & 0 & 2 & 2 & 1 & 0& 0 & 0 & 0 & 1 &0\\
           & 0.075 & 1 & 4 & 4 & 1 & 0& 0 & 0 & 0 & 2&2\\
           & 0.1   & 3 & 5 & 6 & 1 & 0& 0 & 0 & 0 & 2 &3\\
 \hline
                     \end{tabular}
                   }
                 }
\mbox{
  \subfloat[][AM-FAST]{\begin{tabular}{llcccccccccc}
  \hline
  &&&&&&$\rho$&&&&&\\
  & & 0 & 0.01 & 0.025 & 0.05 & 0.075 & 0.1 & 0.25 & 0.5 & 0.75 & 0.99\\  \hline
  $\alpha$ & 0.001 & 0 & 0 & 0 & 0 & 0 & 0 & 0 & 0 & 0 & 0 \\
           & 0.01 & 0 & 0 & 0 & 0 & 0 & 0 & 0 & 0 & 0 & 0\\
           & 0.025 & 0 & 0 & 0 & 0 & 0 & 0 & 0 & 0 & 0 & 0\\
           & 0.05  & 0 & 1 & 0 & 0 & 0& 0 & 0 & 0 & 0 & 0\\
           & 0.075 & 0 & 1 & 0 & 0 & 0& 0 & 0 & 0 & 0 &0\\
           & 0.1   & 0 & 1 & 0 & 0 & 0& 0 & 0 & 0 & 0 &0\\
 \hline
                     \end{tabular}
                   }
                 }
                 \mbox{
\subfloat[][AR-FAST]{\begin{tabular}{llcccccccccc}
  \hline
  &&&&&&$\rho$&&&&&\\
                       & & 0 & 0.01 & 0.025 & 0.05 & 0.075 & 0.1 &
                                                                   0.25 & 0.5 & 0.75 & 0.99\\  \hline
              $\alpha$ & 0.001 & 0 & 0 & 0 & 0 & 0 & 0 & 0 & 0 & 0 &7
                       \\
                       & 0.01 & 0 & 0 & 0 & 0 & 0 & 0 & 0 & 0& 0&14 \\
                       & 0.05 & 0 & 0 & 0 & 0 & 0& 0 & 0 & 0 & 0&15\\
                       & 0.05 & 0 & 0 & 0 & 0 & 0& 0 & 0 & 0 & 0&20\\
                       & 0.075 & 0 & 0 & 0 & 0 & 0& 0 & 0 & 0 & 0&22\\
           & 0.1   & 0 & 0 & 0 & 0 & 0& 0 & 0 & 0 & 0 &22\\
  \hline
                     \end{tabular}
                   }              }   
\end{table}
Table~\ref{null.table} represents the number  
of cases (out of 1000) where ALL-FAST, AM-FAST or AR-FAST for different
$\rho$ and $\alpha$  
found (falsely) even one activated voxel. For $\rho \leq 0.75$,
AR-FAST had no false positives. while for AM- and AR-FAST, there
were (out of several thousands) only a handful of cases where even one
pixel was falsely identified as activated and, in the case of AM-FAST,
these all happened at values of $\alpha >0.05$.
(Indeed, even with ALL-FAST, for $\alpha\leq 0.075$, these cases
with even one false positive identification mostly had 1-2 pixels in
that image falsely identified as activated.) 
 Thus, as with the case of resting state data of Section~\ref{resting}, and for all pratical
 fMRI settings, FAST performs very well
 in scenarios of no true activation. 
\begin{figure*}[t]
\mbox{
\subfloat[ALL-FAST]{\includegraphics[width=0.25\textwidth]{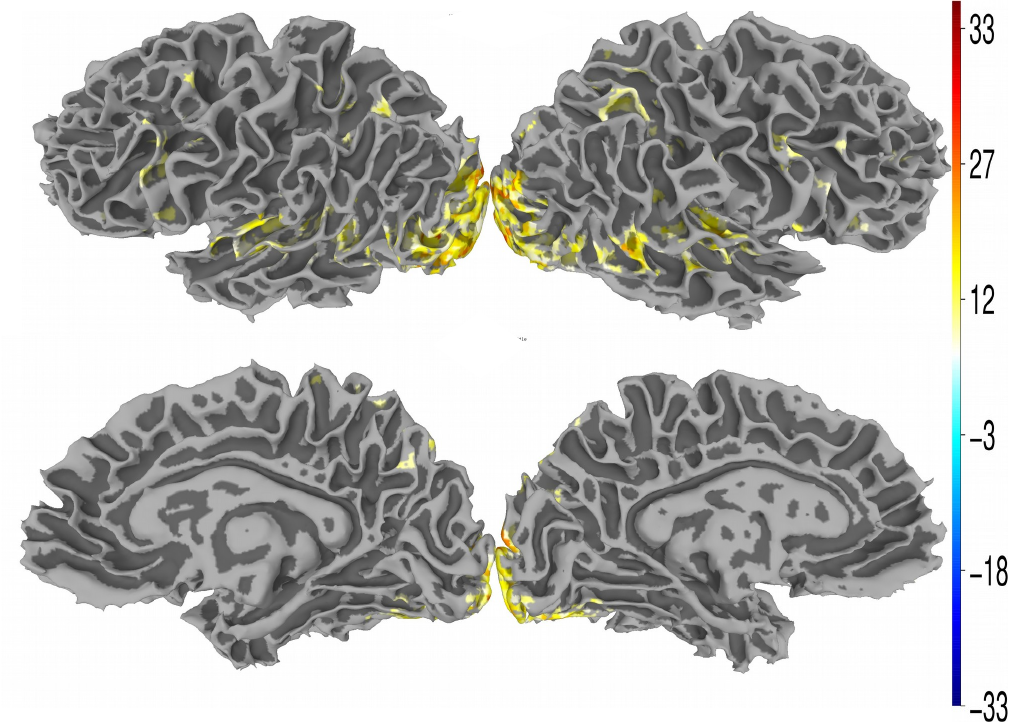}}
\subfloat[AS]{\includegraphics[width=0.25\textwidth]{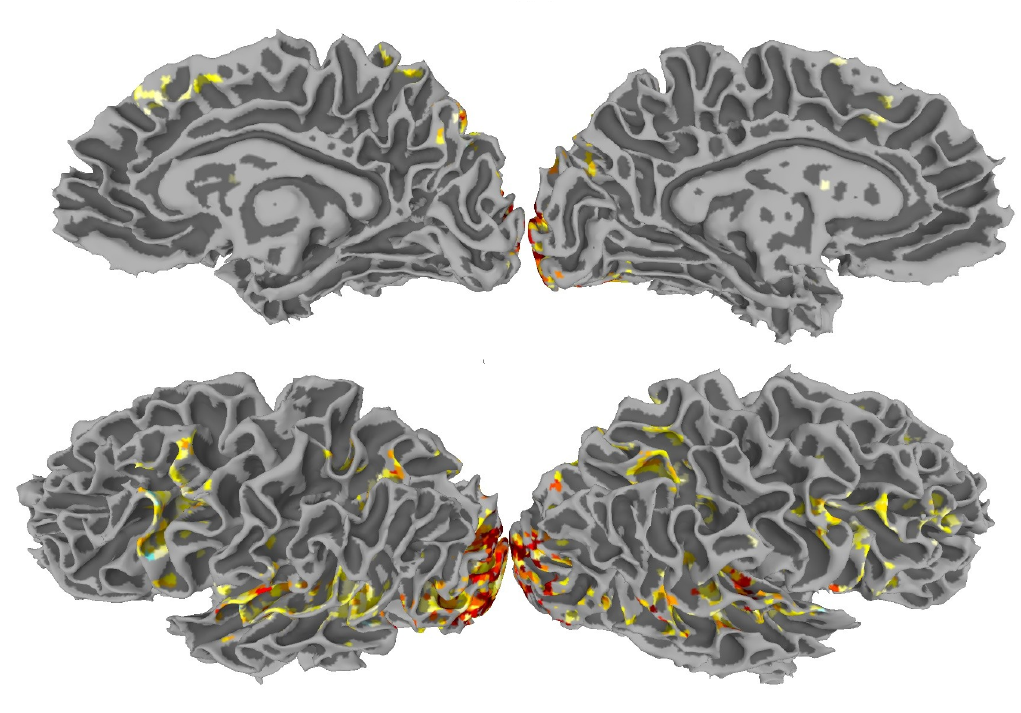}}
\subfloat[AWS]{\includegraphics[width=0.25\textwidth]{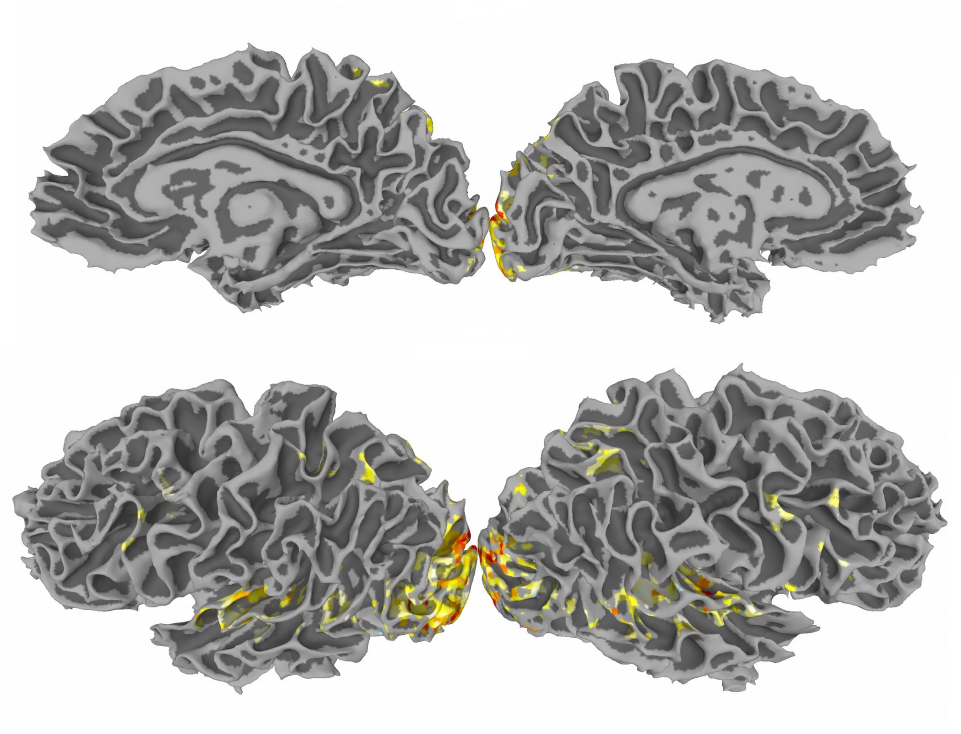}}
\subfloat[CT]{\includegraphics[width=0.27\textwidth]{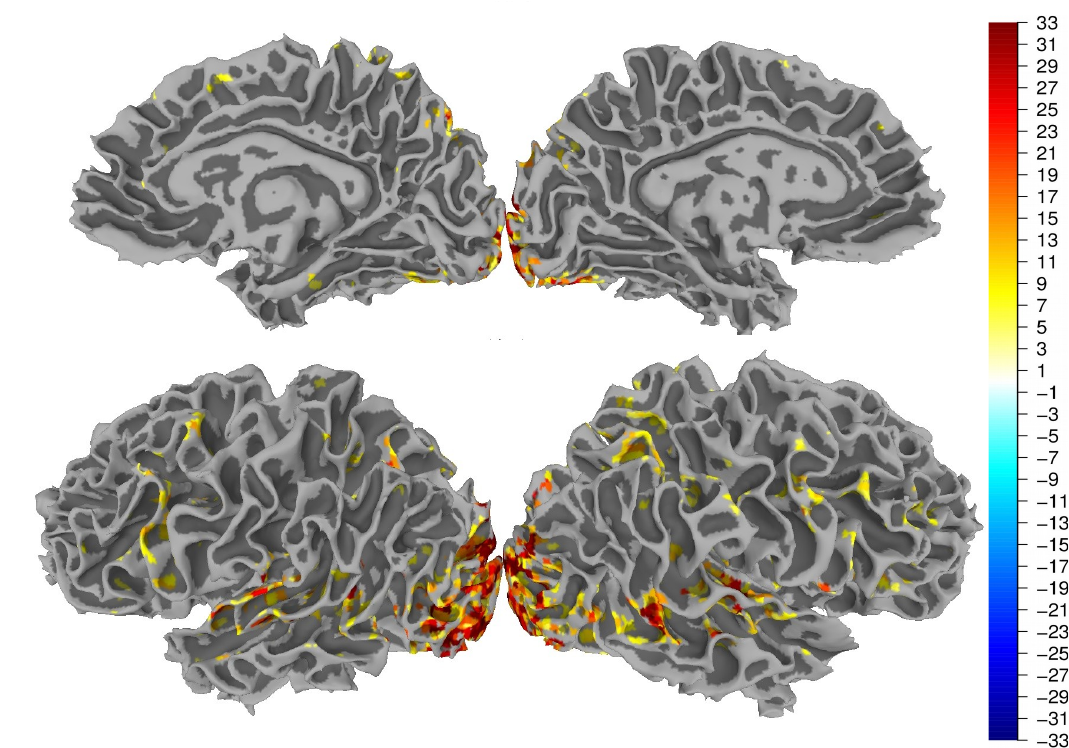}}
}
\mbox{
\subfloat[ALL-FAST]{\includegraphics[width=0.25\textwidth]{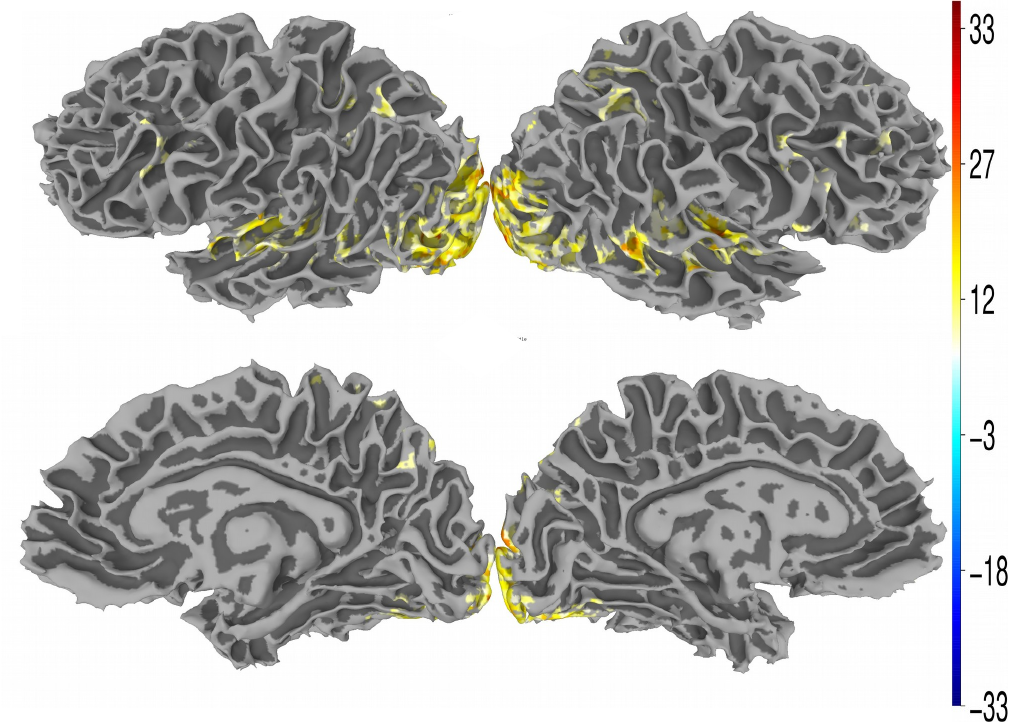}}
\subfloat[AS]{\includegraphics[width=0.25\textwidth]{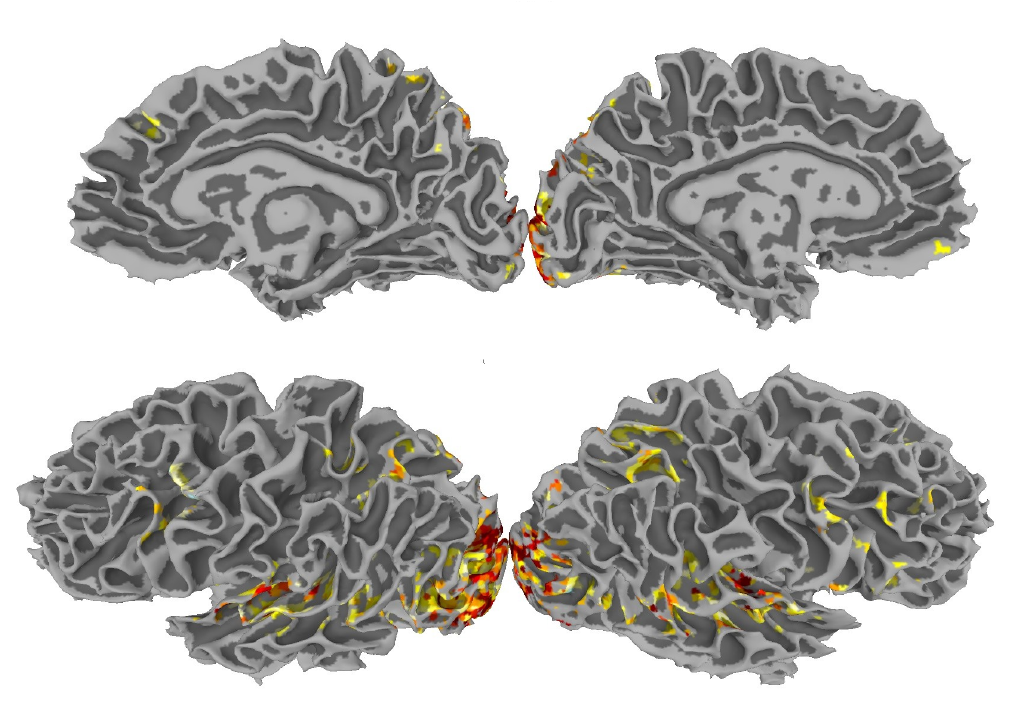}}
\subfloat[AWS]{\includegraphics[width=0.25\textwidth]{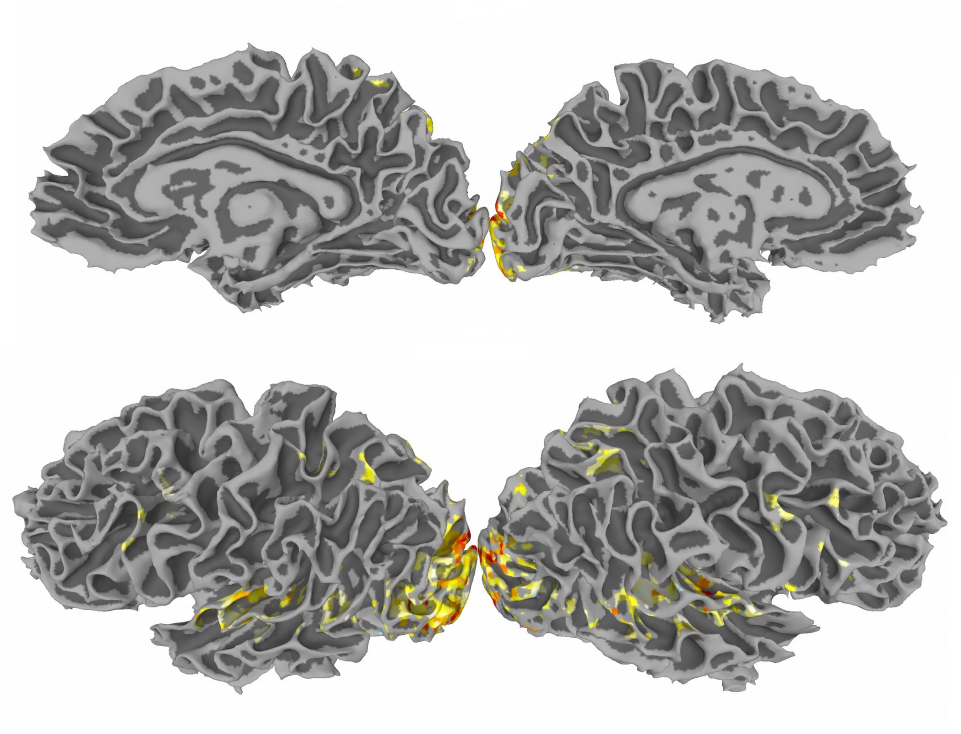}}
\subfloat[CT]{\includegraphics[width=0.27\textwidth]{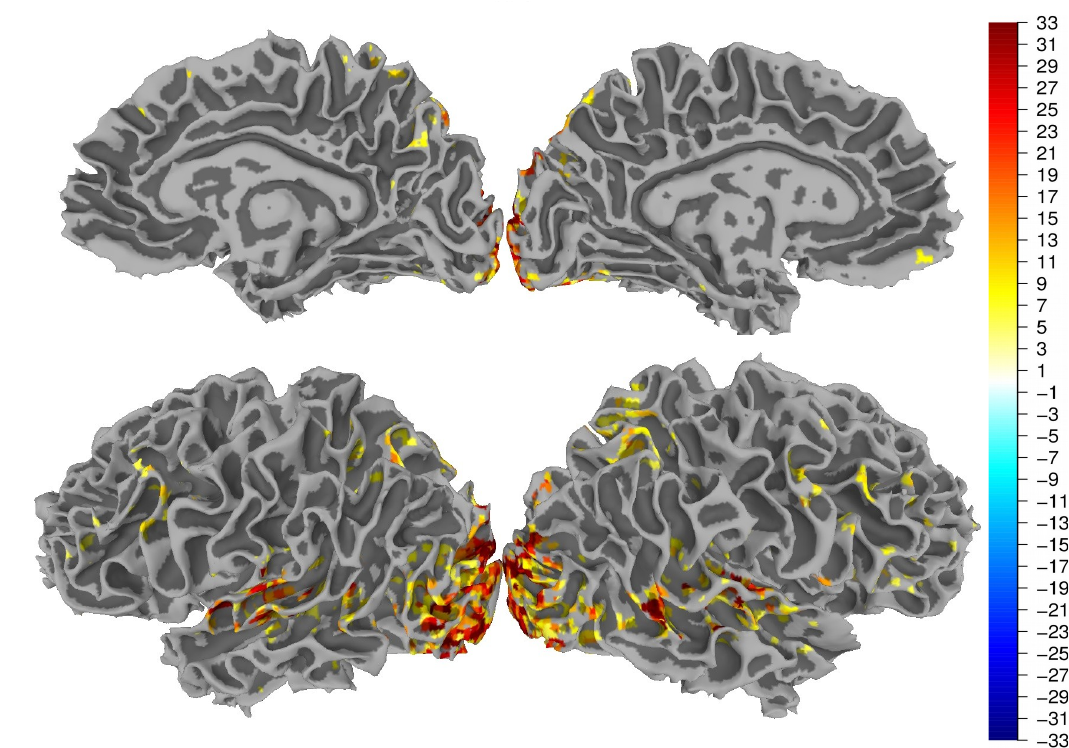}}
}
\caption{Activation regions in {\it AFNI data6} obtained using
  ALL-FAST, AS, AWS, and CT on SPMs upon fitting 
  AR($\hat{p}$) on~\ref{eq:lm}. The first row is for the
  visual-reliable stimulus, the  second row is for audio-reliable
  stimulus.}
\label{fig:Visual-Audio}
\end{figure*}

\end{document}